\documentclass[acmsmall,screen,nonacm]{acmart}
\usepackage[utf8]{inputenc}
\usepackage[T1]{fontenc}
\usepackage[english]{babel}
\usepackage{csquotes}
\usepackage{hyperref}
\usepackage{natbib}
\usepackage{wasysym}
\usepackage{placeins}
\usepackage[capitalize,noabbrev]{cleveref}
\usepackage{trimclip}
\usepackage{wrapfig}
\usepackage{amsmath}
\usepackage{amsthm}
\usepackage{mathpartir}
\usepackage{mathtools}
\usepackage{stmaryrd}
\usepackage{tcolorbox}
\usepackage{bbm}
\usepackage{bbold}
\usepackage{xspace}
\usepackage{xstring}
\usepackage{suffix}
\usepackage[shortlabels]{enumitem}
\usepackage{tikz}
\usetikzlibrary{arrows.meta,calc,positioning,tikzmark,shapes.multipart,cd,decorations.pathreplacing,decorations.pathmorphing}
\usepackage{tikz-cd}
\usepackage{mdframed}
\usepackage{aliascnt} 

\makeatletter
\def\@parfont{\itshape}
\makeatother

\crefname{section}{\S}{\S}

\newif\ifcomments
\commentsfalse

\newif\ifappendix
\appendixtrue

\colorlet{LightGray}{gray!40!}

\newcommand{\gbox}[1]{
    \tcbox[on line,size=fbox,
           colframe=white,colback=LightGray]{\ensuremath{#1}}}

\ifcomments
\paperwidth    =\dimexpr\paperwidth     + 4cm\relax
\oddsidemargin =\dimexpr\oddsidemargin  + 2cm\relax
\evensidemargin=\dimexpr\evensidemargin + 2cm\relax
\marginparwidth=\dimexpr\marginparwidth + 2.2cm\relax
\usepackage{todonotes}
\newcommand{\todobox}[3]{{\todo[color=#1!20,bordercolor=#1!80]{\footnotesize\textbf{#2:} #3}}}
\newcommand{\orpheas}[1]{\todobox{blue}{Orpheas}{#1}}
\newcommand{\sam}[1]{\todobox{red}{Sam}{#1}}
\newcommand{\ohad}[1]{\todobox{magenta}{Ohad}{#1}}
\newcommand{\cristina}[1]{\todobox{green}{Cristina}{#1}}
\newcommand{\paul}[1]{\todobox{yellow}{Paul}{#1}}
\else
\newcommand{\orpheas}[1]{}
\newcommand{\sam}[1]{}
\newcommand{\ohad}[1]{}
\newcommand{\cristina}[1]{}
\newcommand{\paul}[1]{}
\fi

\def\eg{e.g.,\xspace}
\def\ie{i.e.,\xspace}

\def\spac{\,}

\def\dash{-}

\def\tinym{\scriptscriptstyle}

\newcommand{\ndsize}{7}

\newcommand{\shpscale}{0.04*\ndsize}
\def\gap{0.1*\ndsize}
\def\vshift{0.5*\ndsize}
\def\hshift{0.8*\ndsize}

\tikzstyle{ptr}=[draw, thin, {Circle[scale=\shpscale]}-{Triangle[scale=\shpscale]}]
\tikzstyle{cptr}=[draw, thin, -{Triangle[scale=\shpscale]}]
\tikzstyle{ccell}=[draw,
    minimum width=\ndsize pt, minimum height=0.5*\ndsize pt, outer sep=0pt, inner sep=0pt,
    label={[shift={(-0.75*\ndsize pt,-0.35*\ndsize pt)},color=black,scale=\shpscale]#1}]
\tikzstyle{vcell}=[draw, dashed,
    minimum width=\ndsize pt, minimum height=0.5*\ndsize pt, outer sep=0pt, inner sep=0pt,
    label={[shift={(-0.75*\ndsize pt,-0.4*\ndsize pt)},color=black,scale=\shpscale]#1}]

\tikzstyle{brace}=[draw,decorate,decoration={brace,amplitude=5pt,raise=5ex}]
\tikzstyle{labbrace}=[midway,xshift=3.5em,yshift=1pt]

\newcommand{\pbsquare}[1]{
  \IfSubStr{#1}{l}
    {\IfSubStr{#1}{u}{\arrow[#1, phantom, pos=0.15, "{\ulcorner}"]}{\arrow[#1, phantom, pos=0.15, "{\llcorner}"]}}
    {\IfSubStr{#1}{u}{\arrow[#1, phantom, pos=0.15, "{\urcorner}"]}{\arrow[#1, phantom, pos=0.15, "{\lrcorner}"]}}
}
\newcommand{\posquare}[1]{
  \IfSubStr{#1}{u}
    {\arrow[#1, phantom, pos=0.15, "{\ulcorner}"]}
    {\arrow[#1, phantom, pos=0.15, "{\llcorner}"]}
}

\tikzset{rmonotip/.tip={Glyph[glyph math command=rto]}}
\tikzset{alloctip/.tip={Glyph[glyph math command=ato]}}
\tikzset{malloctip/.tip={Glyph[glyph math command=mato]}}

\newenvironment{arrow-diagram}[1][normal]{
\begin{tikzcd}[
    cramped,
    nodes={every node/.style={inner sep=0pt, outer sep=0pt}},
    arrows={start anchor=real center, end anchor=real center, shorten >= 1pt, shorten <= 1pt},
    sep={#1}]
}{\end{tikzcd}}

\def\north{(0,\vshift pt)}
\def\nnorth{(0,\vshift+\gap pt)}

\def\west{(-\hshift pt,0)}
\def\wwest{(-0.4*\hshift-\hshift pt,0)}
\def\east{(\hshift pt,0)}
\def\eeast{(0.4*\hshift+\hshift pt,0)}
\def\south{(0,-\vshift pt)}
\def\ssouth{(0,-\vshift+\gap pt)}

\newenvironment{proofsketch}{\noindent\textit{Proof Sketch.}}{}

\def\eqdef{:=}
\def\defeq{=:}
\def\inddef{::=}

\def\iso{\cong}

\newcommand{\img}[1]{\mathsf{Img}(#1)}
\newcommand{\dom}[1]{\mathsf{Dom}(#1)}

\WithSuffix\newcommand\inv^[1]{#1^{\text{-}1}}

\WithSuffix\newcommand\dag^[1]{{#1}^\dagger}

\newcommand{\xfrom}[1]{\xleftarrow{#1}}

\def\finpto{\rightharpoonup_{\mathsf{fin}}}
\newcommand\xto\xrightarrow
\def\injto{\rightarrowtail}
\def\inclto{\hookrightarrow}
\newcommand{\xinclto}[1]{\xhookrightarrow{\smash{#1}}}
\def\surjto{\twoheadrightarrow}

\newcommand{\xrto}[1]{\xto{#1}_{\mathrlap{\hspace{-1.5pt}{\scriptscriptstyle\mathsf{r}}}}}
\def\rto{\xrto\relax}
\newcommand{\xato}[1]{\xto{#1}_{\mathrlap{\hspace{-2.7pt}{\scriptscriptstyle\mathrm\nu}}}\,}
\def\ato{\xato\relax}

\def\To{\Rightarrow}
\def\xTo{\xRightarrow}

\def\dashto{\dashrightarrow}
\makeatletter
\newcommand{\xdashrightarrow}[2][]{\ext@arrow 0359\rightarrowfill@@{#1}{#2}}
\newcommand{\xdashleftarrow}[2][]{\ext@arrow 3095\leftarrowfill@@{#1}{#2}}
\newcommand{\xdashleftrightarrow}[2][]{\ext@arrow 3359\leftrightarrowfill@@{#1}{#2}}
\def\rightarrowfill@@{\arrowfill@@\relax\relbar\rightarrow}
\def\leftarrowfill@@{\arrowfill@@\leftarrow\relbar\relax}
\def\leftrightarrowfill@@{\arrowfill@@\leftarrow\relbar\rightarrow}
\def\arrowfill@@#1#2#3#4{%
  $\m@th\thickmuskip0mu\medmuskip\thickmuskip\thinmuskip\thickmuskip
   \relax#4#1
   \xleaders\hbox{$#4#2$}\hfill
   #3$%
}
\makeatother

\makeatletter
\newcommand{\partsmash}[2][tb]{%
  \def\mb@t{\ht\z@ #2\ht\z@}\def\mb@b{\dp\z@ #2\dp\z@}%
  \def\mb@tb{\mb@t \mb@b}%
  \edef\finsm@sh{\csname mb@#1\endcsname\box\z@}%
  \ifmmode \@xp\mathpalette\@xp\mathsm@sh
  \else \@xp\makesm@sh
  \fi}
\def\xsrightarrowfill@{\arrowfill@\relbar\relbar{\partsmash[t]{0.6}\rightarrow}}
\newcommand{\xsrightarrow}[1]{\ext@arrow 0079\xsrightarrowfill@{}{#1}}
\def\xsleftarrowfill@{\arrowfill@{\partsmash[t]{0.6}\leftarrow}\relbar\relbar}
\newcommand{\xsleftarrow}[1]{\ext@arrow 0079\xsleftarrowfill@{}{\hspace{1pt}#1}}
\def\xshookrightarrowfill@{\arrowfill@\relbar\relbar{\partsmash[t]{0.6}\hookrightarrow}}
\newcommand{\xshookrightarrow}[1]{\ext@arrow 0079\xshookrightarrowfill@{}{#1}}

\newcommand{\xsto}[1]{\xsrightarrow{#1}}
\newcommand{\xsfrom}[1]{\xsleftarrow{#1}}
\newcommand{\xsato}[1]{\xsto{#1}_{\mathrlap{\hspace{-2.7pt}{\scriptscriptstyle\mathrm\nu}}}\,}
\makeatletter
\newcommand{\xrightarrowtail}[2][]{%
  \ext@arrow 0359\rightarrowtailfill@{#1}{#2}%
}
\newcommand{\rightarrowtailfill@}{%
    \arrowfill@{\hspace{2pt}\relbar}\relbar{\partsmash[t]{0.6}\rightarrowtail}
}
\makeatother
\newcommand\xinjto{\xrightarrowtail}

\def\mwhere{\mathit{where}\;}
\def\mif{\mathit{if}\;}
\def\motherwise{\mathit{otherwise}}
\def\miff{\mathit{iff}\;}

\newcommand{\tlistr}[2]{(#1)_{#2}}

\def\va{\mathit{a}}
\def\vb{\mathit{b}}
\def\vc{\mathit{c}}
\def\vd{\mathit{d}}
\def\ve{\mathit{e}}
\def\vf{\mathit{f}}

\def\vh{\mathit{h}}
\def\vi{\mathit{i}}
\def\vj{\mathit{j}}
\def\vk{\mathit{k}}
\def\vl{\mathit{l}}

\def\vn{\mathit{n}}

\def\vp{\mathit{p}}
\def\vq{\mathit{q}}
\def\vr{\mathit{r}}
\def\vs{\mathit{s}}

\def\vu{\mathit{u}}

\def\vw{\mathit{w}}
\def\vx{\mathit{x}}
\def\vy{\mathit{y}}
\def\vz{\mathit{z}}
\def\vA{\mathit{A}}
\def\vB{\mathit{B}}
\def\vC{\mathit{C}}
\def\vD{\mathit{D}}

\def\vF{\mathit{F}}

\def\vI{\mathit{I}}
\def\vJ{\mathit{J}}

\def\vM{\mathit{M}}
\def\vN{\mathit{N}}

\def\vR{\mathit{R}}
\def\vS{\mathit{S}}

\def\vU{\mathit{U}}
\def\vV{\mathit{V}}
\def\vW{\mathit{W}}
\def\vX{\mathit{X}}
\def\vY{\mathit{Y}}
\def\vZ{\mathit{Z}}

\def\me{\varepsilon}

\def\Succ{\mathit{S}\,}

\def\mvarH{\vh}
\def\mvarM{\eta}

\def\setDiff{\setminus}
\def\setInt{\cap}

\def\fin{\mathsf{fin}}

\newcommand{\card}[1]{|#1|}
\newcommand{\downset}[1]{[#1]}

\def\quot{q}
\newcommand\eqclass[1]{q_{#1}}
\newcommand\meduplus[1]{\scalebox{1}{$\displaystyle\biguplus_{#1}$}}
\newcommand\disunion[1]{\displaystyle\meduplus{#1}}

\def\go{\mathit{go}}

\newcommand{\supp}[1]{\underbar{$#1$}}

\def\lambdaref{\ensuremath{\lambda_{\typekw{ref}}}}

\newcommand{\exprkw}[1]{{\rm {\bf #1}}}
\newcommand{\weaken}[2]{\prescript{#1}{}{\hspace{-1pt}#2}}

\def\True{\exprkw{true}}
\def\False{\exprkw{false}}
\newcommand{\Lam}[1]{\lambda#1.\spac}

\def\App{\spac}

\def\Unit{()}
\newcommand{\VPair}[2]{( #1 , #2 )}
\newcommand{\Pair}{\VPair}

\newcommand{\Split}[3]{\exprkw{split}\spac{#1}\spac\exprkw{as}\spac\Pair{#2}{#3}.\spac}

\newcommand{\Let}[2]{\exprkw{let}\,\mathit{#1}={#2}\spac\exprkw{in}\spac}
\WithSuffix\newcommand\Inj_[1]{\exprkw{inj}_{#1}\spac}
\def\InjL{\exprkw{inl}\spac}
\def\InjR{\exprkw{inr}\spac}
\newcommand{\Case}[5]{\exprkw{case}\spac{#1}\spac\{ #2.\spac #3 ; #4. \space #5 \}}
\newcommand{\CaseV}[5]{
\begin{array}[t]{@{}l}
    \exprkw{case}\spac{#1}\spac\exprkw{of} \\
    \quad#2. \spac #3 \\
    \quad#4. \spac #5
\end{array}}
\newcommand{\Absurd}[1]{\exprkw{case}\spac{#1}\{\}}

\def\Write{\spac\exprkw{:=}\spac}
\def\Read{\exprkw{!}\spac}
\newcommand\LetrefS[3]{\exprkw{letref}\spac{#1}\Write\spac{#2}\spac\exprkw{in}\spac{#3}}
\newcommand\Letref[2]{\exprkw{letref}\spac{#1}\spac\exprkw{in}\spac{#2}}

\def\Eq{\spac\exprkw{=\!=}\spac}

\newcommand{\typekw}[1]{\textsf{#1}}

\def\tyZero{\typekw{0}}
\def\tyUnit{\typekw{1}}
\def\tyBool{\typekw{2}}

\newcommand{\tyRef}[1]{\typekw{ref}_{#1}}
\def\ctypekw{\mathit{ctype}}
\newcommand\ctype[1]{\ctypekw\spac#1}

\def\tySum{\spac\typekw{$+$}\spac}
\def\tyProd{\spac\typekw{$\times$}\spac}

\def\tyFun{\rightarrow}

\def\sortZero{\mathsf{zero}}
\def\sortUnit{\mathsf{unit}}
\def\sortBool{\mathsf{bool}}
\def\sortNat{\mathsf{nat}}
\def\sortNatTwo{\mathsf{nat2}}
\def\sortLBool{\mathsf{boolList}}

\def\ctxNil{\cdot}

\newcommand{\typedv}[3]{#1 \vdash^{\mathsf{v}}_{#2} \,: #3}
\newcommand{\vtyped}[4]{#1 \vdash^{\mathsf{v}}_{#2} #3 : #4}

\newcommand{\typed}[4]{#1 \vdash_{#2} #3 : #4}

\newcommand{\RULEDEF}[1]{\hypertarget{#1}{{(#1)}}\xspace}
\newcommand{\RULE}[1]{\hyperlink{#1}{(#1)}\xspace}

\def\nameTExt{WorldExt}
\def\nameTVar{Var}
\def\nameTLam{Lam}
\def\nameTUnit{Unit}
\def\nameTVPair{VPair}
\def\nameTVInj{VInj}
\def\nameTVal{Val}
\def\nameTApp{App}
\def\nameTEPair{EPair}
\def\nameTSplit{Split}
\def\nameTEInj{EInj}

\def\nameTCase{Case}
\def\nameTLoc{Loc}
\def\nameTRead{Read}
\def\nameTWrite{Write}
\def\nameTEq{Eq}
\def\nameTAlloc{Alloc}

\def\nameRVal{Red-Val}
\def\nameRApp{Red-App}
\def\nameREPair{Red-EPair}
\def\nameRSplit{Red-Split}
\def\nameREInj{Red-EInj}
\def\nameRCase{Red-Case}
\def\nameRRead{Red-Read}
\def\nameRWrite{Red-Write}
\def\nameREq{Red-Eq}
\def\nameRAlloc{Red-Alloc}

\def\mcatI{\mathcal{I}}
\def\mcatC{\mathbbm{C}}

\def\comp{\circ}

\newcommand{\objC}[1]{\mathsf{Obj}({#1})}
\def\setC{\mathsf{Set}}

\newcommand{\funC}[2]{[#1, #2]}

\newcommand{\famC}[1]{\mathsf{Fam}({#1})}

\newcommand{\opC}[1]{{#1}^{\mathsf{op}}}

\newcommand{\commaC}[2]{{#1}\downarrow{#2}}
\newcommand{\sliceC}[2]{{#1}/{#2}}
\newcommand{\arrC}[1]{\mathsf{Arr}({#1})}

\def\finC{\mathsf{Fin}}

\def\worldC{\mathbbm{W}}
\def\pworldC{\mathbf{W}}
\def\inst{\mathbbm{I}}
\def\pinst{\mathbf{I}}
\def\instC{\worldC\inst}
\def\pinstC{\pworldC\pinst}
\def\tempC{\mathbbm{T}}

\def\tinstC{{\tempC\inst}}
\newcommand\allocC[1]{\mathsf{alloc}(#1)}

\newcommand{\homF}[3]{{#1}({#2}, {#3})}

\newcommand{\Colimit}[1]{\mathit{colimit}\,{#1}}
\newcommand{\Prod}[1]{\prod_{#1}\spac}
\newcommand{\Coprod}[1]{\coprod_{#1}\spac}
\newcommand{\Exp}[2]{{#1}^{#2}}
\newcommand{\End}[1]{\displaystyle{\int_{#1}}\spac}
\newcommand{\Coend}[1]{\displaystyle{\int^{\ensuremath{#1}}}\spac}
\newcommand{\pltimes}[1]{\otimes_{#1}}

\newcommand{\liplus}[1]{\oplus_{#1}}

\newcommand\indF[1]{|{#1}|}
\def\monFLS{\mathcal{T}}
\def\monFLSFin{\monFLS_{\!\fin}}
\def\heapF{\mathbbm{H}}
\def\dheapF{\underline\heapF}
\def\theapF{{\tempC\heapF}}

\def\hideMon{{\mathbbm{P}}}

\newcommand\stencilF[1]{\mathcal{S}_{#1}}

\newcommand{\inclITI}[1]{1_{#1}}
\newcommand{\extF}[2]{(#1 \oplus #2)}
\newcommand{\Plus}[1]{{#1}^+}

\newcommand{\Minus}[1]{{#1}^-}

\def\instU{\vu}
\def\tinstU{\vu}
\newcommand{\concL}[1]{\lfloor#1\rfloor}
\newcommand{\varL}[1]{\lceil#1\rceil}
\newcommand{\anyL}[1]{[#1]}

\newcommand{\uncon}[1]{\mathsf{uc}(#1)}
\newcommand{\con}[1]{\mathsf{c}(#1)}

\newcommand{\ureach}{\mathsf{ur}}
\newcommand{\reach}{\mathsf{r}}
\newcommand{\GU}[2]{\mathsf{GU}(#1, #2)}

\def\bind{\ensuremath{>\!\!>\!\!=\!}}

\WithSuffix\newcommand\ext^[1]{#1^*}
\WithSuffix\newcommand\res_[1]{{#1}_*}
\WithSuffix\newcommand\res|[1]{|^{#1}}
\WithSuffix\newcommand\dres|[1]{|_{#1}}
\WithSuffix\newcommand\uext^[1]{#1\rangle}
\WithSuffix\newcommand\proj_[1]{\pi_{#1}}
\newcommand{\pair}[2]{\langle#1, #2\rangle}

\newcommand{\triple}[3]{\langle#1, #2, #3\rangle}
\newcommand{\copair}[2]{[#1, #2]}

\newcommand\inj[1]{\mathit{inj}{\raisebox{-5pt}{$#1$}}}
\WithSuffix\newcommand\inj_[1]{\mathit{inj}_{#1}}
\WithSuffix\newcommand\inj^[1]{\mathit{i}^{#1}}
\def\curry{\mathsf{curry}}
\def\uncurry{\mathsf{uncurry}}
\def\incl{\rotatebox[origin=c]{90}{$\hookrightarrow$}}

\def\eval{\mathsf{eval}}
\def\swap{\mathsf{swap}}
\WithSuffix\newcommand\incl_[1]{\mathit{i}_{\tinym#1}}
\def\str{\mathsf{str}}

\def\dstr{\mathsf{dstr}}
\def\distr{\mathsf{distr}}

\def\cov{\mathsf{cov}}
\def\rebase{\mathsf{rebase}}
\def\embed{\mathsf{embed}}
\def\ev{\mathsf{ev}}
\def\hideSt{\mathsf{hide}}
\newcommand{\compl}[1]{{#1}^\mathsf{c}}

\newcommand{\hideRes}[2]{\mathsf{hide}^{#1}_{#2}}
\def\meZ{\me_{0}}
\def\meS{\me_{\mathsf{S}}}

\def\typeLoc{\mathbbm{L}}
\def\typeZero{\mathbb{0}}
\def\typeUnit{\mathbb{1}}
\def\typeProd{\times}
\def\typeSum{+}
\def\typeBool{\mathbbm{B}}
\def\typeNat{\mathbbm{N}}
\def\typeVal{\mathsf{Val}}

\def\typeExpr{\mathsf{Expr}}
\def\typeSort{\mathsf{Sort}}
\def\typeFG{\mathsf{FGType}}
\def\typeType{\mathsf{Type}}

\def\typeRes{\mathsf{Res}}

\newcommand{\Subst}[2]{[#1 / #2]}
\newcommand{\SubstTwo}[4]{[#1 / #2, #3 / #4]}

\def\obsEq{\iso_{\mathsf{obs}}}

\newcommand{\sem}[1]{\llbracket#1\rrbracket}
\newcommand{\semV}[1]{\sem{#1}^{\mathsf{v}}}
\newcommand{\semM}[2]{\sem{#1}_{\!#2}}

\def\cycle{\mathsf{c}}
\newcommand{\cfg}[2]{\pair{#1}{#2}}
\newcommand{\bigstep}[4]{\pair{#1}{#2} \Downarrow \pair{#3}{#4}}

\def\eqOp{\mathit{eq}}
\def\genRead{\mathit{read}}
\def\genWrite{\mathit{write}}
\def\genAlloc{\mathit{alloc}}

\newcommand{\All}[1]{\forall #1.\spac}
\newcommand{\AllTwo}[2]{\forall #1,#2.\spac}

\theoremstyle{definition}
\newtheorem{theorem}{Theorem}
\theoremstyle{definition}
\newtheorem{definition}{Definition}[section]

\newaliascnt{lemma}{definition}
\newaliascnt{corollary}{definition}
\newaliascnt{example}{definition}

\theoremstyle{definition}
\newtheorem{lemma}[lemma]{Lemma}

\theoremstyle{definition}

\theoremstyle{remark}
\newtheorem{example}[example]{Example}

\newtheoremstyle{proof}
    {3pt}       
    {3pt}       
    {}          
    {}          
    {\itshape}  
    {.}         
    {.5em}      
    {}          

\aliascntresetthe{lemma}
\aliascntresetthe{corollary}
\aliascntresetthe{example}

\setcopyright{rightsretained}
\acmJournal{PACMPL}
\acmYear{2027} 
\acmVolume{11} 
\acmNumber{POPL} 
\acmArticle{0} 
\acmMonth{1}
\acmDOI{00.0000/0000000}

\begin{document}

\title{Finitary Semantics for Full Ground Local State}

\author{Orpheas van Rooij}
\orcid{0009-0009-7780-4772}
\affiliation{%
  \institution{University of Edinburgh}
  \country{UK}
}
\email{orpheas.vanrooij@ed.ac.uk}

\author{Ohad Kammar}
\orcid{0000-0002-2071-0929}
\affiliation{%
  \institution{University of Edinburgh}
  \country{UK}
}
\email{ohad.kammar@ed.ac.uk}

\author{Sam Lindley}
\orcid{0000-0002-1360-4714}
\affiliation{%
  \institution{University of Edinburgh}
  \country{UK}
}
\email{sam.lindley@ed.ac.uk}

\author{Cristina Matache}
\orcid{0009-0003-6036-6426}
\affiliation{%
  \institution{University of Birmingham}
  \country{UK}
}
\email{c.matache@bham.ac.uk}


\begin{abstract}
\emph{Full ground local state} (FGLS) refers to dynamically allocated mutable state that allows storing ground values and references.
It is a key ingredient in many imperative algorithms as it enables (cyclic) data structures.
In this work, we treat full ground local state as a computational effect, focusing on one particular denotational model: 
\citeauthor*{kammar2017monad}'s possible worlds monad on sets indexed over sets of locations.

We resolve an outstanding question regarding this FGLS monad: is it finitary?
We show that the FGLS monad is not finitary by showing the existence of non-finitary computations in the monad.
We then introduce a finitary submonad of~\citeauthor*{kammar2017monad}'s monad, give it a concrete description and show that it provides an adequate semantics for FGLS. 
The submonad we construct paves the way to understanding FGLS in the future via an equational axiomatization suitable for program reasoning.



\end{abstract}

\begin{CCSXML}
<ccs2012>
   <concept>
       <concept_id>10003752.10010124.10010131.10010133</concept_id>
       <concept_desc>Theory of computation~Denotational semantics</concept_desc>
       <concept_significance>500</concept_significance>
    </concept>
   <concept>
       <concept_id>10003752.10010124.10010138.10011119</concept_id>
       <concept_desc>Theory of computation~Abstraction</concept_desc>
       <concept_significance>300</concept_significance>
    </concept>
 </ccs2012>
\end{CCSXML}

\ccsdesc[500]{Theory of computation~Denotational semantics}
\ccsdesc[300]{Theory of computation~Abstraction}

\keywords{local state, monads, observational equivalence, denotational semantics, computational effects, categorical semantics}

\maketitle

\section{Introduction}

A ubiquitous feature of imperative programs is access to dynamically allocated mutable state:
reference-based memory that permits allocation of new memory cells, as well as reading and writing from existing ones. 
Dynamic allocation of mutable memory is often called \emph{local state} since allocated memory can only be accessed from those parts of the program that were passed the reference.
In the case of \emph{full ground} local state (FGLS), cells may store ground values (\eg booleans, integers) and references, 
enabling us to represent imperative data structures such as cyclic linked lists.
Crucially, full ground local state forbids storing functions which introduce general recursion through Landin's knot~\cite{landin1964mechanical}.

Various models for local state exist in the literature which include among others
relational models built via logical relations on nominal-cpos~\cite{benton2005relational,bohr2006relational}, 
step-indexed syntactic logical relations~\cite{ahmed2004semantics,birkedal2011step}, 
game semantic models~\cite{abramsky1998fully,laird2008game,murawski2012algorithmic},
as well as monad semantics based on possible worlds~\cite{moggi1990abstract,plotkin2002notions,kammar2017monad}.

We are concerned with possible worlds semantics that model heaps and types as sets indexed over sets of locations,
together with a suitable state-like monad on the appropriate category.
For the ground local state case where only ground values can be stored,
heaps are of a contravariant nature: we can select smaller segments of the heap by projecting out some of the cells \cite{moggi1990abstract,plotkin2002notions}. 
In the full ground setting this is no longer the case: projecting out a smaller heap may result in an invalid one since the references stored in the cells may now point to undefined data, \ie they may be dangling.
\Citet{kammar2017monad} address this by equipping heaps with a covariant structure: heaps can be extended with new data modelled via a category of instantiations that carries the data.
Their FGLS monad supports:
\begin{itemize}[leftmargin=10pt]
    \item[-] \emph{Effect masking:} Computations that do not depend or return references are semantically equivalent to pure values. 
    \item[-] \emph{Adequacy:} Denotationally equal terms are observationally indistinguishable.
                            \[ \sem\vM = \sem\vN \implies \vM \obsEq \vN \]
        where $\vM,\vN$ are $\lambda$-terms that manipulate full ground references.
    \item[-] \emph{Program Equivalences:} Validates desired equations such as
        \begin{align}\label{eq:ground-ls-eq}
              \LetrefS\vx\vV{\weaken\vx\vM} \;\equiv\; \vM
              \qquad & \qquad
              \vx \Write \vV; \vz \Write \weaken\vx\vW; \vM \;\equiv\; \vz \Write \vW; \vx \Write \weaken\vz\vV; \vM  
          \end{align}
          where $\LetrefS\vx\vN\vM$ allocates a new reference $\vx$ in $\vM$ with value $\vV$, 
          $\vX\Write\vV$ updates the contents of reference $\vx$ with $\vV$ and $\weaken\vx\vM$ states that $\vx$ does not occur free in $\vM$. 
\end{itemize}
These properties make it a useful model for FGLS and justify using it as basis for more sophisticated stores.
This paper extends the metatheory of this monad by resolving an outstanding question.

\paragraph{Finitary presentations of a monad}
In seminal work,~\citeauthor{moggi89computational} used monads to model computational effects such as mutable state, input/output and non-determinism.
The theory of algebraic effects~e.g.~\cite{plotkin2002notions,bauer2018what} refines Moggi's programme by using algebraic theories in the sense of universal algebra to characterise computational effects equationally.
In this setting, the effectful behaviour of programs is described by a set of effect operations and equations between these operations.
For example, non-determinism can be described by an operation $\mathsf{or}(x,y)$ that is commutative, associative and idempotent.


Moggi's monads can be recovered via the well-known correspondence between monads and algebraic theories~\cite{linton1966some,hyland2007category,robinson2002variations}: finitary algebraic theories correspond to finitary monads on the category of sets.
An algebraic theory is finitary when it admits a presentation whose operations are of finite arity.
There is no requirement that the set of operations or equations are finite.

However, when describing a computational effect, one often seeks a presentation of the algebraic theory with a small number of intuitive operations and equations, as in the non-determinism example above.
The role of the presentation is to serve as a basis for programming with the effect, via the operations, and for reasoning about effectul programs via the equations.

The goal of our line of research is to find such an algebraic presentation for the full ground local state monad of~\citeauthor{kammar2017monad}, for which none has been discovered yet.
Roughly speaking, the FGLS monad is defined on indexed sets, rather than sets, so we would need to use a more sophisticated notion of algebraic theory, rather than the classic one, to maintain the monad-theory correspondence.

Such extensions of algebraic theories and their correspondence to monads have been developed, for example: monads with arities~\cite{BERGER20122029}, enriched Lawvere theories~\cite{power1999enriched} and parameterized algebraic theories~\cite{staton2013instances}.
Out of these extensions, only parametrized algebraic theories admit a syntax for specifying presentations of theories, which we are particularly interested in from a programming perspective.
Moreover, both enriched Lawvere theories and parameterized algebraic theories require the monad in question to be at least finitary.

Until now, the question whether the full ground local state monad of~\citeauthor{kammar2017monad} is finitary remained open.
In this paper, we provide a negative answer.
We then characterize a finitary submonad of the FGLS monad and give it a concrete description.
Therefore, the submonad we introduce opens the way to characterizing FGLS as a parameterized algebraic theory, or as an enriched Lawvere theory, in future work. 
The main contributions of the paper are as follows:




\begin{enumerate}[(i)]
    \item \textit{Non-finitarity.}
          We show that the FGLS monad is non-finitary by showing the existence of certain non-finitary computations~(\cref{sec:overview-non-finitariness}).
          Such computations cannot be represented as terms in the programming language, 
          and so the model has extra information which allows denotations to detect observationally undetectable differences. 
    \item \textit{Worlds, Heaps with variables and Unification.}
          We develop the machinery needed to describe finitary FGLS computations.
          We introduce two-kinded worlds, heaps with variables, heap unification and matching and identify the relevant structures~(\cref{sec:theory-templates}).
    \item \textit{Finitary FGLS monad.}
          Using this machinery we define a finitary FGLS submonad and show that its functor and monad structure lifts from the full monad~(\cref{sec:fin-monad}). 
          Due to its finitary nature, this monad is more amenable to equational axiomatisations.
          It further enjoys adequacy of the denotational semantics with respect to the operational semantics, inherited from the non-finitary monad~(\cref{sec:fin-sem-fgls}).
\end{enumerate}


\section{Background and Overview}

\subsection[Programming Language]{Programming Language $\lambdaref$}\label{sec:pl-lamref}

We revisit the programming language $\lambdaref$, a higher-order $\lambda$-calculus with full ground references.
The formalism of $\lambdaref$ is parametrised by a set of sorts $\vC \in \typeSort$ which we use to define the set of full ground types as
\[\begin{array}{r@{\,}c@{\,}l}
    \typeFG \ni \vD & \inddef & \tyRef\vC \mid \tyZero \mid \vD \tySum \vD' \mid \tyUnit \mid \vD \tyProd \vD'
\end{array}\]
It includes a reference type $\tyRef\vC$ for every sort $\vC$,
the empty type $\tyZero$ and $\vA \tySum \vB$ representing finite sums,
and unit type $\tyUnit$ and $\vA \tyProd \vB$ representing finite products.

We further require an interpretation function $\ctypekw$ that assigns a full ground type $\vD$ for every sort $\vC$.
This stratification between sorts and full ground types enables us to represent potentially cyclic data structures where the recursive occurrence is guarded by a reference type.
\begin{example}
Consider $\typeSort \eqdef \{\sortBool, \sortLBool, \sortNat, \sortNatTwo  \}$.
The following interpretation for $\sortBool, \sortLBool$ represents boxed boolean valued linked lists:
\[ \ctype\sortBool \eqdef \tyUnit \tySum \tyUnit \qquad \ctype\sortLBool \eqdef \tyUnit \tySum \tyRef\sortBool \times \tyRef\sortLBool \]
and for $\sortNat$ represents boxed natural numbers and $\sortNatTwo$ are boxed pairs of boxed natural numbers:
\[ \ctype\sortNat \eqdef \tyUnit \tySum \tyRef\sortNat \qquad \ctype\sortNatTwo \eqdef \tyRef\sortNat \tyProd \tyRef\sortNat \]
\end{example}

\paragraph{Worlds}\label{sec:lamref-worlds}
Pointers to memory cells are called \emph{locations} ranging over a countable set $\typeLoc$,
and finite sets of locations with an associated sort are called worlds.
Formally, a world is a partial function $\vw : \typeLoc \finpto \typeSort$ with finite support. The finite support of $\vw$ is denoted by $\supp\vw$.
We write worlds as $\vw = \vl_1 : \vC_1, \hdots, \vl_\vn : \vC_\vn$, to really mean its support is $\supp\vw = \vl_1, \hdots, \vl_\vn$ and satisfies $\vw(\vl_\vi) = \vC_\vi$.
We write $\card\vw$ for the cardinality of $\vw$, which is $\vn$ in this case, and we order worlds by inclusion:
$\vw \leq \vw' \iff \All{\vl : \vC \in \vw} \vw'(\vl) = \vC$.
Lastly, by convention, when enumerating locations $\vl_1, \hdots, \vl_\vn$ we assume that they are all distinct.
%


\paragraph{Syntax}
$\lambdaref$ is a coarse-grained call-by-value language with the usual $\lambda$-calculus constructs for finite sums and products.
\Cref{fig:lambdaref-syntax} gives the syntax of $\lambdaref$ where the shaded constructs relate to references.
The types of $\lambdaref$ are the full ground types closed under the function type constructor, thus full ground types are a subset of all $\lambdaref$ types.
Values include variables and $\lambda$-abstractions, location literals $\vl$ which are pointers to allocated memory cells as above,
as well as the unit, binary pairs and injections.
Expressions consist of the usual constructs: values, application, pairs over expressions, pair eliminator $\Split\vM{\vx_1}{\vx_2}\vN$,
injections over expressions, falsity eliminator $\Absurd\vM$, and sum eliminator $\Case\vM{\vx_1}{\vN_1}{\vx_2}{\vN_2}$.
For references, $\vM\Eq\vN$ performs equality testing between two references, $\Read\vM$ reads a reference, $\vM\Write\vN$ writes value of $\vN$ to reference $\vM$,
and $\Letref{\vx_1 \Write \vV_1, \hdots, \vx_\vn \Write \vV_\vn}\vM$ allocates $\vn$ references with the corresponding values and makes them available to $\vM$.

We introduce the following sugar: $\tyBool = \tyUnit \tySum \tyUnit$ with $\True \eqdef \inj_1 \Unit, \False \eqdef \inj_2 \Unit$,
$\Let\vx\vM\vN \eqdef (\Lam\vx\vN)\vM$, and $\vM;\vN \eqdef \Let\vx\vM\vN$ where $\vx$ does not occur free in $\vN$.
\begin{figure}[ht]
\begin{align*}
    \typeType \ni \vA,\vB & \inddef \gbox{\tyRef\vC} \mid
                                    \tyZero \mid \vA \tySum \vB \mid
                                    \tyUnit \mid \vA \tyProd \vB \mid
                                    \vA \tyFun \vB                   & \tag{Reference and $\times,+,\to$ types}
    \\
    \typeVal \ni \vV,\vW  \inddef &\, \vx \mid \Lam\vx\vM \mid \gbox\vl \mid      & \tag{$\lambda$-calculus, and location literals} \\
                                  &\, \Unit \mid \Pair\vV\vW \mid                 & \tag{Products} \\
                                  &\, \Inj_{\vi \in \{1,2\}} \vV                  & \tag{Sums}
    \\
    \typeExpr \ni \vM,\vN  \inddef &\, \vV \mid \vM\App\vN \mid & \tag{Values and Application} \\
                                   &\, \Pair\vM\vN \mid \Split\vM{\vx_1}{\vx_2}\vN \mid & \tag{Products} \\
                                   &\, \Inj_{\vi \in \{1,2\}} \vM \mid \Absurd\vM \mid \Case\vM{\vx_1}{\vN_1}{\vx_2}{\vN_2} \mid & \tag{Sums} \\
                                   &\, \gbox{\vM\Eq\vN} \mid \gbox{\Read\vM} \mid \gbox{\vM\Write\vN} \mid \gbox{\Letref{\vx_1 \Write \vV_1, \hdots, \vx_\vn \Write \vV_\vn}\vM} & \tag{References}
\end{align*}
\Description{Syntax of programming language $\lambdaref$}
\caption{Syntax of $\lambdaref$}
\label{fig:lambdaref-syntax}
\end{figure}

Typing judgments $\vtyped\Gamma\vw\vV\vA$ for values and $\typed\Gamma\vw\vM\vA$ for expressions are parametrised by a world $\vw$ that represents the allocated locations.
This parameter is used in \RULE\nameTLoc to ensure that locations $\vl$ are present in the prescribed world.
The type system is monotone on world extensions $\vw \leq \vw'$, thus rule \RULE\nameTExt is admissible.
The interpretation function $\ctype$ is used in \RULE\nameTRead, \RULE\nameTWrite to obtain the full ground type associated with the sort.
Equality testing \RULE\nameTEq expects two expressions with reference types.
Allocation \RULE\nameTAlloc performs simultaneous recursive allocation, that is, makes the references to be allocated available in the instantiation data.
This allows the creation of mutually cyclic data structures.
\begin{figure}[ht]
\begin{mathpar}
\inferrule[\RULEDEF\nameTLoc]{\vl : \vC \in \vw}{\vtyped\Gamma\vw\vl{\tyRef\vC}}

\mprset{fraction={===}}
\inferrule[\RULEDEF\nameTExt]{
    \vw \leq \vw'
    \\\\
    \typed\Gamma\vw\vM\vA
}{\typed\Gamma{\vw'}\vM\vA}
\mprset{fraction={---}}

\inferrule[\RULEDEF\nameTRead]{\typed\Gamma\vw\vM{\tyRef\vC}}{\typed\Gamma\vw{\Read\vM}{\ctype\vC}}

\inferrule[\RULEDEF\nameTWrite]{
    \typed\Gamma\vw\vM{\tyRef\vC}
    \\\\
    \typed\Gamma\vw\vN{\ctype\vC}
}{\typed\Gamma\vw{\vM\Write\vN}\tyUnit}

\\
\inferrule[\RULEDEF\nameTEq]{
    \typed\Gamma\vw\vM{\tyRef\vC}
    \\\\
    \typed\Gamma\vw\vN{\tyRef\vC}
}{\typed\Gamma\vw{\vM\Eq\vN}\tyBool}

\inferrule[\RULEDEF\nameTAlloc]{
    (\typed{\Gamma, \vx_1 : \tyRef{\vC_1}, \hdots, \vx_\vn : \tyRef{\vC_\vn}}\vw{\vV_\vi}{\ctype\vC_\vi})_{i=1}^\vn
    \\\\
    \typed{\Gamma, \vx_1 : \tyRef{\vC_1}, \hdots, \vx_\vn : \tyRef{\vC_\vn}}\vw\vM\vA
}{\typed\Gamma\vw{\Letref{\vx_1 \Write \vV_1, \hdots, \vx_\vn \Write \vV_\vn}\vM}\vA}
\end{mathpar}
\Description{Typing rules of programming language $\lambdaref$ regarding references}
\caption{Typing rules of $\lambdaref$ regarding references}
\label{fig:typing-rules-lambdaref}
\end{figure}

\begin{example}
\renewcommand\ndsize{18}
    Creates a cyclic list with one node with the associated heap given on the right
\[\begin{array}{l@{\;\qquad}r}
    \Letref{(\vx : \tyRef\sortBool) \Write \True, (\vy : \tyRef\sortLBool) \Write \Inj_2\Pair\vx\vy}\vy
&
\begin{tikzpicture}[baseline={(lx.base)},remember picture]
    \node[ccell] (lx) {\tiny $\True$};
    \node[ccell] (ly) [right=15pt of lx] {\tiny $\pair{\bullet}{\bullet}$};
    \draw[cptr] ($ (ly.center) + (2pt,0pt) $) to ++\east to ++\north to ++\west to ++(0,-4pt);
    \draw[cptr] ($ (ly.center) + (-3pt,0pt) $) to ++\north to ++\wwest to ++\west to ++(0,-4pt);
\end{tikzpicture}
\end{array}\]
\end{example}

\paragraph{Operational Semantics}\label{sec:lambdaref-os}

An (untyped) heap $\mvarH$ is a partial function $\typeLoc \finpto \typeVal$, and a configuration $\cfg\vM\mvarM$ consists of an expression $\vM$ and a heap $\mvarM$.
The operational semantics of $\lambdaref$ is given in big-step, $\bigstep\vM\mvarM\vV{\mvarH'}$, which relates configurations on expressions with configurations on values.
\cref{fig:lambdaref-os} gives the rules related to references.
\begin{figure}[ht]
\begin{mathpar}
    \inferrule[\RULEDEF\nameRRead]{
        \bigstep\vM\mvarH\vl{\mvarH'}
        \\\\
        \mvarH'(\vl) = \vV
    }{
        \bigstep{\Read\vM}\mvarH\vV{\mvarH'}
    }

    \inferrule[\RULEDEF\nameRWrite]{
        \bigstep\vM\mvarH\vl{\mvarH'}
        \\\\
        \bigstep\vN{\mvarH'}\vW{\mvarH''}
    }{
        \bigstep{\vM \Write \vN}\mvarH\Unit{\mvarH''[\vl \mapsto \vW]}
    }

    \inferrule[\RULEDEF\nameREq]{
        \bigstep\vM\mvarH{\vl_1}{\mvarH'}
        \\\\
        \bigstep\vN{\mvarH'}{\vl_2}{\mvarH''}
    }{
        \bigstep{\vM \Eq \vN}\mvarH{\#(\vl_1 == \vl_2)}{\mvarH''}
    }

    \inferrule[\RULEDEF\nameRAlloc]{
        \vl_1, \hdots, \vl_\vn \not\in \dom\mvarH
        \\\\
        \theta = [\vx_\vi \mapsto \vl_\vi]_{\vi = 1}^\vn
        \and
        \mvarH' = \mvarH[\vl_\vi \mapsto \vV_\vi[\theta]]_{\vi = 1}^\vn
        \and
        \bigstep{\vM[\theta]}{\mvarH'}\vW{\mvarH''}
    }{
        \bigstep{\Letref{\vx_1 \Write \vV_1, \hdots, \vx_\vn \Write \vV_\vn}\vM}\mvarH\vW{\mvarH''}
    }
\end{mathpar}
\Description{Big-step style reduction rules of references for programming language $\lambdaref$}
\caption{Reduction rules of $\lambdaref$ for references}
\label{fig:lambdaref-os}
\end{figure}

\RULE\nameRRead evaluates $\vM$ to a location $\vl$ and returns its value provided the resulting heap $\mvarH'$ defines $\vl$.
\RULE\nameRWrite evaluates from left to right, reducing $\vM$ and $\vN$ to a location $\vl$ and new value $\vW$ and updates the heap $\vl$ with the new value.
\RULE\nameREq evaluates left to right, reducing $\vM$ and $\vN$ to locations $\vl_1$, $\vl_2$ and returns whether they are equal $\vl_1 == \vl_2$.
Function $\#(\vb)$ interprets boolean $\vb$ from the metalevel to syntax.
Rule \RULE\nameRAlloc generates distinct fresh locations $\vl_1,\hdots,\vl_\vn$, and evaluates the body $\vM$ with the updated heap after substituting the new locations.

Well typed terms in $\lambdaref$ are sound and total: a closed well-typed term evaluated with a compatible heap reduces to a closed value.
We refer the reader to \cite{kammar2017monad} for the full specification of $\lambdaref$ which we include in the appendix for completeness~\cref{sec:lambdaref}.

\subsection{Full Ground Local State Monad}\label{sec:overview-fgls}

We assume familiarity with categories, functors and natural transformations, and explain any further concepts as needed.
For a small category $\mcatC$, we denote the category of its set-valued functors $\mcatC \to \setC$ as $\funC\mcatC\setC$, and let $\homF\mcatC\vC\dash : \mcatC \to \setC$ be the covariant hom functor, \ie $\homF\mcatC\vC\vD$ is the set of all morphisms with domain $\vC$ and codomain $\vD$ in $\mcatC$.

The category $\worldC$ has the worlds $\vw$ (\cref{sec:lamref-worlds}) for objects,
and morphisms $\vw \xto\sigma \vw'$ are world extensions with renamings: $\sigma$ is an injective function between their supports $\supp\vw \injto \supp{\vw'}$ that preserve the sorts.
That is, for every location $\vl \in \supp\vw$, we have $\vw(\vl) = \vw'(\sigma(\vl))$.
Further, let $\vw' \ominus \sigma$ be the world of all locations in $\vw'$ that are not hit by $\sigma$: $\supp{\vw' \ominus \sigma} = \supp{\vw'} \setDiff \img\sigma$, and let $\vw' \ominus \sigma \xsto{\compl\sigma} \vw'$ be the inclusion.

\paragraph{Heaplets and Heaps}

Closed full ground values $\vtyped\ctxNil\vw\vV{\ctype\vC}$ are modelled semantically as elements $\sem\vV \in \sem{\ctype\vC}(\vw)$, 
where $\sem{\ctype\vC}$ is an indexed set over worlds \ie a functor in $\pworldC \eqdef \funC\worldC\setC$.
The monotonicity of the type system with respect to world extensions \RULE\nameTExt is modelled by the covariant structure of functors $\sem{\ctype\vC}$.
A world extension $\vw \leq \vw'$ is a morphism in $\worldC$, and thus the functorial action on morphisms takes a $\sem\vV : \sem{\ctype\vC}(\vw)$ and extends it to $\sem{\ctype\vC}(\vw \leq \vw')\sem\vV : \sem{\ctype\vC}(\vw')$.
Reference types are interpreted using the hom-functor $\sem{\tyRef\vC} = \homF\worldC{(\vl : \vC)}\dash$,
and thus $\sigma \in \sem{\tyRef\vC}(\vw)$ is a morphism $(\vl : \vC) \xto\sigma \vw$ in $\worldC$, \ie a selection of a single $\vC$-sorted location in $\vw$.

The \emph{heaplet} functor $\dheapF : \opC\worldC \times \worldC \to \setC$ represents heap segments with $\Minus\vw$ defined locations.
\[ \dheapF(\Minus\vw,\Plus\vw) = \Prod{\vl : \vC \in \Minus\vw} \sem{\ctype\vC}(\Plus\vw) \]
Values in a heaplet have access to a possibly bigger world $\Plus\vw$, enabling local level reasoning in a spirit similar to separation logic.
The heaplet functor is contravariant on world $\Minus\vw$ (we can project out a smaller heaplet), and covariant on world $\Plus\vw$ (by the covariant structure of semantic values).

\begin{example}\label{ex:overview-heaplets}
    Consider world extensions $(\vl_3 : \sortNatTwo) = \vw_1' \leq \vw_1 = (\vl_3 : \sortNatTwo, \vl_4 :\sortNat)$
    and $(\vl_1, \vl_4 : \sortNat, \vl_3 : \sortNatTwo) = \vw_2 \leq \vw_2' = (\vl_1, \vl_2, \vl_4 : \sortNat, \vl_3 : \sortNatTwo)$.
    Applying them to heaplet $\mvarM$ results in:
\begin{center}
\renewcommand\ndsize{15}
\begin{tikzpicture}[baseline={(l3.base)},remember picture]
    \node[vcell=$\vl_1$] (a1) [] {};
    \node[ccell=$\vl_3$] (l3) [below=0pt of a1] {\tiny $\pair\bullet\bullet$};
    \node[ccell=$\vl_4$] (l4) [below=0pt of l3] {\tiny $\InjL\Unit$};
    \node [below=10pt of l3] () {$\mvarM : \dheapF(\vw_1, \vw_2)$};
    \draw[cptr] ($ (l3.center) + (4pt,0pt) $) to ++\east to ++\north to ++(-9pt,0pt);
    \draw[cptr] ($ (l3.center) + (-4pt,0pt) $) to ++\west to ++(0pt, -9pt) to ++(9pt,0pt);
\end{tikzpicture}
\hspace{7.5pt}
\begin{tikzpicture}[baseline={(l3.base)},remember picture]
    \node[vcell=$\vl_1$] (a1) [] {};
    \node[ccell=$\vl_3$] (l3) [below=0pt of a1] {\tiny $\pair\bullet\bullet$};
    \node[vcell=$\vl_4$] (l4) [below=0pt of l3] {};
    \node [below=10pt of l3] () {$\dheapF(\vw_1' \leq \vw_1, \vw_2)(\mvarM)$};
    \draw[cptr] ($ (l3.center) + (4pt,0pt) $) to ++\east to ++\north to ++(-9pt,0pt);
    \draw[cptr] ($ (l3.center) + (-4pt,0pt) $) to ++\west to ++(0pt, -9pt) to ++(9pt,0pt);
\end{tikzpicture}
\hspace{7.5pt}
\begin{tikzpicture}[baseline={(l3.base)},remember picture]
    \node[vcell=$\vl_2$] (a2) [] {};
    \node[vcell=$\vl_1$] (a1) [below=0pt of a2] {};
    \node[ccell=$\vl_3$] (l3) [below=0pt of a1] {\tiny $\pair\bullet\bullet$};
    \node[ccell=$\vl_4$] (l4) [below=0pt of l3] {\tiny $\InjL\Unit$};
    \node [below=10pt of l3] () {$\dheapF(\vw_1, \vw_2 \leq \vw_2')(\mvarM)$};
    \draw[cptr] ($ (l3.center) + (4pt,0pt) $) to ++\east to ++\north to ++(-9pt,0pt);
    \draw[cptr] ($ (l3.center) + (-4pt,0pt) $) to ++\west to ++(0pt, -9pt) to ++(9pt,0pt);
\end{tikzpicture}
\end{center}
\end{example}

We equip $\worldC$ with data using heaplets to form the category $\instC$.
Objects in $\instC$ are worlds, and morphisms $\vw \xto\me \vw'$ consist of a morphism $\tinstU\me : \vw \to \vw'$ in $\worldC$ and a heaplet $\mvarM_\me : \dheapF(\vw' \ominus \sigma, \vw')$.
Thus $\mvarM_\me$ defines a value for every location $\vl : \vC \in \vw'$ that is not hit by $\tinstU\me$, with the value having access to the bigger $\vw'$ world.
Composition of morphisms in $\instC$ is further described in \cref{sec:worlds}, and we have a forgetful functor $\instU : \instC \to \worldC$.

The heap functor $\heapF \in \pinstC \eqdef \funC\instC\setC$ represents fully defined heaplets, given on worlds as:
\[ \heapF(\vw) = \dheapF(\vw,\vw) \iso \homF\instC\emptyset\vw \]
The functorial action on morphisms $\heapF(\me)(\mvarM)$ extends a heap $\mvarM$ with a heaplet $\mvarM_\me$, giving a covariant structure to full-ground heaps.
\begin{example}\label{ex:heap-functor}
    We form morphism $\me : (\vl_1, \vl_2 : \sortNat) \inclto \vw_2'$ in $\instC$ 
    using the heaplet from \cref{ex:overview-heaplets}, $\mvarM_\me \eqdef \dheapF(\vw_1, \vw_2 \leq \vw_2')(\mvarM)$.
    Applying $\me$ to heap $\mvarM_1$ given below results in a heap $\heapF(\me)(\mvarM_1)$ where locations $\vl_3, \vl_4$ are not reachable from $\vl_1,\vl_2$.
\begin{center}
\renewcommand\ndsize{15}
\begin{tikzpicture}[baseline={(l2.base)},remember picture]
    \node[ccell=$\vl_2$] (l2) [] {\tiny $\InjL\Unit$};
    \node[ccell=$\vl_1$] (l1) [below=0pt of l2] {\tiny $\InjR\bullet$};
    \node [below=2.5pt of l1] () {heap $\mvarM_1$};
    \draw[cptr] ($ (l1.center) + (5pt,0pt) $) to ++\east to ++\north to ++(-10pt,0);
\end{tikzpicture}
\hspace{7.5pt}
\begin{tikzpicture}[baseline={(l3.base)},remember picture]
    \node[vcell=$\vl_2$] (a2) [] {};
    \node[vcell=$\vl_1$] (a1) [below=0pt of a2] {};
    \node[ccell=$\vl_3$] (l3) [below=0pt of a1] {\tiny $\pair\bullet\bullet$};
    \node[ccell=$\vl_4$] (l4) [below=0pt of l3] {\tiny $\InjL\Unit$};
    \node [below=10pt of l3] () {heaplet $\mvarM_\me$};
    \draw[cptr] ($ (l3.center) + (4pt,0pt) $) to ++\east to ++\north to ++(-9pt,0pt);
    \draw[cptr] ($ (l3.center) + (-4pt,0pt) $) to ++\west to ++(0pt, -9pt) to ++(9pt,0pt);
\end{tikzpicture}
\hspace{7.5pt}
\begin{tikzpicture}[baseline={(l3.base)},remember picture]
    \node[ccell=$\vl_2$] (l2) [] {\tiny $\InjL\Unit$};
    \node[ccell=$\vl_1$] (l1) [below=0pt of l2] {\tiny $\InjR\bullet$};
    \node[ccell=$\vl_3$] (l3) [below=0pt of l1] {\tiny $\pair\bullet\bullet$};
    \node[ccell=$\vl_4$] (l4) [below=0pt of l3] {\tiny $\InjL\Unit$};
    \node [below=10pt of l3] () {heap $\heapF(\me)(\mvarM_1)$};
    \draw[cptr] ($ (l1.center) + (5pt,0pt) $) to ++\east to ++\north to ++(-10pt,0);
    \draw[cptr] ($ (l3.center) + (4pt,0pt) $) to ++\east to ++(0pt,6pt) to ++(-9pt,0pt);
    \draw[cptr] ($ (l3.center) + (-4pt,0pt) $) to ++\west to ++\south to ++(9pt,0pt);
\end{tikzpicture}
\end{center}
\end{example}

\paragraph{Stateful Values}
In the setting of local state, return values are not necessarily pure.
They may return an existing or newly allocated reference, and are therefore understood in the context of a world, \ie a heap layout.
Accordingly, \emph{stateful values} are an abstraction over return values that additionally track the allocations that occurred and the resulting state of the heap.

Semantically, stateful values are given via a $\pworldC$-action $(\dash \odot \heapF)$ on $\pinstC$ and a monad $(\hideMon, \eta^\hideMon, \bind^\hideMon)$ on $\pinstC$ shown in \cref{fig:fgls-construction}.
Functor $(\vX \odot \heapF)$ represents indexed sets of pairs over worlds:  it lifts $\vX \in \pworldC$ to $\pinstC$ using the forgetful functor and takes the cartesian product in $\pinstC$, $\vX \odot \heapF = \vX\tinstU \times \heapF$.
Monad $\hideMon$ on the other hand models allocation with garbage collection capabilities.

For a return type $\vX : \worldC \to \setC$, and world $\vw$ representing the existing heap layout, 
a stateful value $\eqclass{\vw \xsto\sigma \vw'}\pair{\vx'}{\mvarM'} \in \hideMon(\vX \odot \heapF)(\vw)$
allocates $\vw' \ominus \sigma$ locations,
returns value $\vx' : \vX(\vw')$ and updates the heap to $\mvarM' : \heapF(\vw')$.
These stateful values are viewed only up to observable allocations, thus modelling garbage collection.
\begin{example}\label{ex:overview-stateful-values}
    All four stateful values are over public world $(\vl_1,\vl_2 : \sortNat)$ with $\sem{\tyRef\sortNat}$ return value:
\begin{center}
\renewcommand\ndsize{15}
\begin{tikzpicture}[baseline={(l2.base)},remember picture]
    \node[ccell=$\vl_2$] (l2) [] {\tiny $\InjR\bullet$};
    \node[ccell=$\vl_1$] (l1) [below=0pt of l2] {\tiny $\InjR\bullet$};
    \node [below=2.5pt of l1] () {$\eqclass{\vl_1,\vl_2 \leq \vl_1, \vl_2 : \sortNat}\pair{\vl_1}{\mvarM_\cycle}$ };
    \draw[cptr] ($ (l1.center) + (5pt,0pt) $) to ++(10pt,0pt) to ++(0,6pt) to ++(-8pt,0);
    \draw[cptr] ($ (l2.center) + (5pt,0pt) $) to ++(12pt,0pt) to ++(0,-10pt) to ++(-10pt,0);
    \draw[|->, black] ($ (l1.west) + (-16pt, 1pt) $) -- ($ (l1.west) + (-7pt, 1pt) $);
\end{tikzpicture}
\hspace{1pt} $\neq$ \hspace{1pt}
\begin{tikzpicture}[baseline={(l2.base)},remember picture]
    \node[ccell=$\vl_2$] (l2) [] {\tiny $\InjL\Unit$};
    \node[ccell=$\vl_1$] (l1) [below=0pt of l2] {\tiny $\InjR\bullet$};
    \node [below=2.5pt of l1] () {$\eqclass{\vl_1,\vl_2 \leq \vl_1, \vl_2 : \sortNat}\pair{\vl_1}{\mvarM_1}$ };
    \draw[cptr] ($ (l1.center) + (5pt,0pt) $) to ++\east to ++\north to ++(-10pt,0);
    \draw[|->, black] ($ (l1.west) + (-16pt, 1pt) $) -- ($ (l1.west) + (-7pt, 1pt) $);
\end{tikzpicture}
\hspace{1pt} = \hspace{1pt}
\begin{tikzpicture}[baseline={(l3.base)},remember picture]
    \node[ccell=$\vl_2$] (l2) [] {\tiny $\InjL\Unit$};
    \node[ccell=$\vl_1$] (l1) [below=0pt of l2] {\tiny $\InjR\bullet$};
    \node[ccell=$\vl_3$] (l3) [below=0pt of l1] {\tiny $\pair\bullet\bullet$};
    \node[ccell=$\vl_4$] (l4) [below=0pt of l3] {\tiny $\InjL\Unit$};
    \node [below=10pt of l3] () {$\eqclass{\vl_1,\vl_2 \leq \vw_2'}\pair{\vl_1}{\heapF(\me)(\mvarM_1)}$};
    \draw[cptr] ($ (l1.center) + (5pt,0pt) $) to ++\east to ++\north to ++(-10pt,0);
    \draw[cptr] ($ (l3.center) + (4pt,0pt) $) to ++\east to ++(0pt,6pt) to ++(-9pt,0pt);
    \draw[cptr] ($ (l3.center) + (-4pt,0pt) $) to ++\west to ++\south to ++(9pt,0pt);
    \draw[|->, black] ($ (l1.west) + (-16pt, 1pt) $) -- ($ (l1.west) + (-7pt, 1pt) $);
\end{tikzpicture}
\hspace{1pt} $\neq$ \hspace{1pt}
\begin{tikzpicture}[baseline={(l2.base)}]
    \node[ccell=$\vl_3$] (l3) [] {\tiny $\InjL\Unit$};
    \node[ccell=$\vl_2$] (l2) [below=0pt of l3] {\tiny $\InjR\bullet$};
    \node[ccell=$\vl_1$] (l1) [below=0pt of l2] {\tiny $\InjR\bullet$};
    \node [below=2.5pt of l1] () {$\eqclass{\vl_1,\vl_2 \leq (\vl_1,\vl_2,\vl_3 : \sortNat)}\pair{\vl_1}{\mvarM_2}$};
    \draw[cptr] ($ (l1.center) + (5pt,0pt) $) to ++\east to ++(0,6pt) to ++(-10pt,0);
    \draw[cptr] ($ (l2.center) + (5pt,0pt) $) to ++\east to ++\north to ++(-10pt,0);
    \draw[|->, black] ($ (l1.west) + (-16pt, 1pt) $) -- ($ (l1.west) + (-7pt, 1pt) $);
\end{tikzpicture}
\end{center}
    Only the middle two stateful values are equal because allocations $\vl_3, \vl_4$ is not observable: they are not returned nor are they reachable from $\vl_1, \vl_2$
    and are therefore garbage collected.
    In contrast, the left-most stateful value stores $\InjR\Unit$ at $\vl_2$ and in the right-most, allocation $\vl_3$ is observable from $\vl_1$.
\end{example}
\noindent
Intuitively, unit $\mvarM^\hideMon$ injects into a context with no allocations and bind $\bind^\hideMon$ composes the allocations.
\begin{figure}[ht]
\vspace{-15pt}
\(\begin{tikzcd}
    {\pworldC} \arrow[rr, "\dash \odot \heapF", shift left=3] 
        & \perp & {\pinstC} \arrow[ll, "\heapF \multimap \dash", shift left=3] \arrow[loop right, r, "\hideMon"]
\end{tikzcd}
\qquad
\begin{array}{r@{\;}c@{\;}l}
\monFLS & : & \pworldC \to \pworldC \\
\monFLS\vX & = & \heapF \multimap \hideMon(\vX \odot \heapF)
\end{array}\)
\vspace{-10pt}
\Description{State-transformer description of the FGLS Monad}
\caption{State-transformer description of the FGLS Monad}
\label{fig:fgls-construction}
\vspace{-10pt}
\end{figure}

\paragraph{The Monad}
Action $(\dash \odot \heapF)$ has a right adjoint $(\heapF \multimap \dash)$ by generalities, inducing a state transformer over full ground heaps~(\cref{fig:fgls-construction}).
We think of $(\heapF \multimap \vA)$ as lifting the exponential $\Exp\vA\heapF$ in $\pinstC$ to $\pworldC$.
More concretely, at world $\vw$ it is equivalent to:
\[ (\heapF \multimap \vA)(\vw) \iso \homF\pinstC{\homF\worldC\vw\dash \odot \heapF}\vA \]
Specialising this adjunction to stateful values gives us the function-like structure of $\monFLS$:
\[\begin{tikzcd}
    {\homF{\pinstC}{\vX \odot \heapF}{\hideMon(\vY \odot \heapF)}}
    \ar[r, bend left=10, anchor=north, "\curry_{\vX,\vY}"] \ar[r, phantom, "\iso"]
    &
    {\homF{\pworldC}\vX{\heapF \multimap \hideMon(\vY \odot \heapF)}} \ar[l, anchor=south, bend left=10, "\uncurry_{\vX,\vY}" yshift=-2pt]
\end{tikzcd}
\qquad
    \eval \eqdef \uncurry_{\monFLS\vX, \vX}(1_{\monFLS\vX})
\]
Therefore, a FGLS computation $\phi \in \monFLS\vX\vw$ is a natural transformation $\homF\worldC\vw\dash \odot \heapF \to \hideMon(\vX \odot \heapF)$:
defines a stateful value $\phi_{\vw'}\pair{\sigma'}{\mvarM'}$ for every $\vw \xto\sigma \vw' \in \worldC$ and heap $\mvarM' : \heapF(\vw')$.
Naturality says that $\phi$ must only examine the reachable parts of the heap starting from world $\vw$: different unreachable heap segments should not be observable.
Finally, the monad structure of $\monFLS$ is given by the monad $\hideMon$ and $\curry/\uncurry$:
\[\begin{array}{r@{\;}l@{\;}l}
    \eta_\vX & \eqdef \curry_{\vX, \vX}(\eta^\hideMon_{\vX \odot \heapF}) & : \vX \to \monFLS\vX \\[1ex]
    \bind(\vf : \vX \to \monFLS\vY) & \eqdef \heapF \multimap (\bind^\hideMon(\uncurry_{\vX,\vY}\vf)) & : \monFLS\vX \to \monFLS\vY \
\end{array}\]
$\eta_\vX(\vx)$ is a computation that returns $\vx$ leaving the heap unchanged (no allocations occur),
and $\vc\bind\vf$ first evaluates $\vc$ and calls $\vf$ with its result value and the modified heap.

We note in passing, that the FGLS monad is equivalently given using an End:
\[ \monFLS\vX\vw \iso \End{\vw \injto \vw' \in \commaC\vw\instU} \heapF\vw' \to \hideMon(\vX \odot \heapF)(\vw') \]
where $\commaC\vw\instU$ is the comma category over world $\vw \in \objC\worldC$ and forgetful functor $\tinstU : \instC \to \worldC$. 




\subsection{Non Finitariness of FGLS Monad}\label{sec:overview-non-finitariness}

\def\heapSize{\mathit{heapSize}}
\def\getOp{\mathit{get}}
\def\putF{\mathit{put}_\False}
\def\putT{\mathit{put}_\True}
\def\GS{\mathit{GS}}

As explained in the introduction, one of the long-term goals of our work is finding an equational characterization of FGLS via a suitable notion of algebraic theory.
To be able to use the framework of parameterized algebraic theories~\cite{staton2013instances} for this purpose, which provides a syntax for presenting theories, or that of enriched Lawvere theories~\cite{power1999enriched}, we would require the FGLS monad to be \emph{finitary}.
In this section we discuss what being finitary means and why the FGLS monad does not satisfy this requirement.


We explain what it means for a monad to be finitary through the simpler setting of the global state monad on $\setC$.
Let $\vS$ be a set that represents the state, which may be a single ground value cell (\eg $\vS \eqdef \typeBool, \typeNat$), or an entire heap (\eg $\vS \eqdef \typeLoc \finpto \typeVal$).
The global state monad is given as $\GS_\vS(\vX) = \vS \to \vX \times \vS$,
which represents state transforming functions with return values of type $\vX$.

When $\vS$ is finite, the global state monad is finitary.
For instance, $\GS_\typeBool$ can be equivalently described via an algebraic theory that has operations $\getOp(\vx,\vy), \putF(\vx), \putT(\vx)$
where $\getOp$ selects continuation $\vx$ or $\vy$ according to the contents of the cell, and $\mathit{put}_\vb$ updates the value of the store.
However for non-finite $\vS$, $\GS_\vS$ is not finitary: $\getOp$ would need a countable arity $\typeNat$ to represent state $\vS \eqdef \typeNat$.
Put differently, $\GS_\vS$ is finitary if for every $\vX, \vf \in \GS_\vS\vX$, $(\proj_1 \comp \vf)$ has finite image~\cite{kelly1982structures}.

To motivate why finitariness fails for the FGLS monad, consider $\vS \eqdef \typeLoc \finpto \typeVal$ representing the heaps from \cref{sec:lambdaref-os}.
The global state monad $\GS_{\typeLoc \finpto \typeVal}$ is not finitary for the same reasons as above:
there exists a function $\Lam\mvarM \card{\supp\mvarM} \in \GS_{\typeLoc \finpto \typeVal}(\typeNat)$ that counts the number of defined locations and has a countable image.

We define an analogous function in the FGLS setting, called $\heapSize$.
First let the result type be pure $\typeNat$ values given by the constant functor $\Delta_\typeNat = \vw \mapsto \typeNat \in \pworldC$.
We define $\heapSize$ at world $(\vl : \sortNat)$ representing a single public location of sort $\sortNat$ pointing to the heap.\footnote{Not all interpretation functions yield a non-finitary FGLS monad, but examples of interest that support cyclic data structures will be non-finitary.}

Recall from (\cref{sec:overview-fgls}), to give $\heapSize \in \monFLS(\Delta_\typeNat)(\vl : \typeNat)$ is to give a natural transformation $\homF\worldC{\vl : \sortNat}\dash \odot \heapF \to \hideMon(\Delta_\typeNat \odot \heapF)$, where naturality says that $\heapSize$ must only examine the reachable parts of the heap starting from $\vl$.
Therefore, we define $\heapSize$ at $(\vl : \sortNat) \xto\sigma \vw$, and heap $\mvarM : \heapF(\vw)$ by traversing $\mvarM$ starting from $\sigma(\vl)$ and returning $\mvarM^\hideMon\pair\vn\mvarM$ where $\vn$ is the number of distinct locations it has encountered.
For instance $\heapSize$ returns for the following heaps:
\begin{center}
\renewcommand\ndsize{15}
\begin{tikzpicture}[baseline={(l1.base)},remember picture]
    \node[ccell=$\vl$] (l1) [] {\tiny $\InjR\bullet$};
    \node [below=5pt of l1] () {$\heapSize(\vl) = 1$};
    \draw[cptr] ($ (l1.center) + (5pt,0pt) $) to ++\east to ++\north to ++\west to ++(0,-4pt);
\end{tikzpicture}
\hspace{7.5pt}
\begin{tikzpicture}[baseline={(l1.base)},remember picture]
    \node[ccell=$\vl$] (l1) [] {\tiny $\InjR\bullet$};
    \node[ccell] (l2) [above=0pt of l1] {\tiny $\InjL\Unit$};
    \node [below=5pt of l1] () {$\heapSize(\vl) = 2$};
    \draw[cptr] ($ (l1.center) + (5pt,0pt) $) to ++\east to ++\north to ++(-10pt,0);
\end{tikzpicture}
\hspace{7.5pt}
\begin{tikzpicture}[baseline={(l3.base)},remember picture]
    \node[ccell=$\vl$] (l1) [] {\tiny $\InjR\bullet$};
    \node[ccell] (l2) [above=0pt of l1] {\tiny $\InjL\Unit$};
    \node[ccell=$\vl'$] (l3) [below=0pt of l1] {\tiny $\InjR\bullet$};
    \node [below=5pt of l3] () {$\heapSize(\vl) = 2$};
    \draw[cptr] ($ (l1.center) + (5pt,0pt) $) to ++\east to ++\north to ++(-10pt,0);
    \draw[cptr] ($ (l3.center) + (5pt,0pt) $) to ++\east to ++(0,6pt) to ++(-10pt,0);
\end{tikzpicture}
\end{center}
Since $\mvarM$ is finite, $\heapSize$ is well-founded.
With this example we will show that the FGLS monad is not finitary as it returns countably-many different values according to the heap it is applied to.
We refer the reader to (\cref{lem:non-finitariness-proof}) for the formal definition of $\heapSize$.

We note that even though finitariness for the FGLS monad may seem restrictive as it places bounds on heap traversals,
a programming language that supports complete heap traversals while remaining in the realm of terminating languages is also strange.
One might imagine introducing $\heapSize$ as a built-in operation in the language but that operation either exposes unnecessary lower level details about the runtime or requires an expensive garbage collection operation.


In the next section, and in \S4, we develop a finitary version of the monad that can in the future be used as a basis for an equational axiomatization of FGLS.
We construct our finitary monad by removing non-finitary computations such as $\heapSize$ from the original monad.

\subsection{Finitary FGLS Monad}\label{sec:overview-fin-fgls}

The central idea behind finitary computations is that they return finitely many values. 
Applying this to the FGLS setting means that it cannot return a different value for every possible heap.
We therefore represent a finitary FGLS computation by finitely partitioning the heap space into segments and assigning a stateful value to each segment.
We begin by elaborating on the idea of finite partitions of the heap space.

\emph{Templates} are two-kinded worlds: one kind for concrete locations and another for location variables.
While concrete locations represent cells with defined data, location variables are extension points which we can instantiate to extend the heap.
We write templates as $\tau = \vl_1 : \vC_1, \va_2 : \vC_2, \vl_3 : \vC_3$, where metavariable $\tau$ ranges over templates, $\vl$ over concrete locations and $\va$ over location variables.
We write $\concL\tau$ for the world with only concrete locations $\vl_1 : \vC_1, \vl_3 : \vC_3$ and $\varL\tau = \va_2 : \vC_2$ for the world with only the variables.
We further write $\anyL\tau$ for the template seen as a world, that is, we forget the distinction given by the two kinds. 

\emph{Stencils} represent collections of heaps with a common set of public locations.
That is, for a public world $\vw$ and template $\tau$, a stencil is a pair $\pair{\sigma_\vs}{\mvarM_\vs} \in \stencilF\vw\tau$,
with $\sigma_\vs : \vw \to \anyL\tau \in \worldC$ giving the mapping of the public locations and heaplet $\mvarM_\vs : \dheapF(\concL\tau, \anyL\tau)$ the possibly incomplete segment of the heap.
We equip stencils with a covariant structure that extends the heap by instantiating variables $\varL\tau$. 
Thus, a stencil can be thought of as a collection of concrete heaps given by all its extensions.

\def\stenZ{\vs_0}
\def\stenS{\vs_{\mathsf{S}}}
\def\stenUno{\vs_{1}}
\begin{example}
We define three stencils all over public world $(\vl : \sortNat)$:
\begin{center}
\renewcommand\ndsize{18}
\begin{tikzpicture}
    \node[ccell=$\vl$] (l) [] {\tiny $\,\InjL\Unit\,$};
    \node [below=5pt of l] (Ztext) {$\tau \eqdef (\vl : \sortNat), \vl \mapsto \vl$};
    \node [below=0pt of Ztext] () {stencil $\stenZ$};
\end{tikzpicture}
\hspace{5pt}
\begin{tikzpicture}
    \node[vcell=$\va_2$] (a2) [] {};
    \node[ccell=$\vl_1$] (l1) [below=0pt of a2] {\tiny $\InjR\bullet$}; 
    \draw[cptr] ($ (l1.center) + (5pt,0pt) $) to ++\east to ++(0,8pt) to ++(-10pt,0);
    \node [below=5pt of l1] (Stext) {$\tau \eqdef (\vl_1 : \sortNat, \va_2 : \sortNat), \vl \mapsto \vl_1$};
    \node [below=0pt of Stext] () {stencil $\stenS$};
\end{tikzpicture}
\hspace{5pt}
\begin{tikzpicture}
    \node[vcell=$\va_3$] (a3) [] {};
    \node[ccell=$\vl_2$] (l2) [below=0pt of a3] {\tiny $\,\InjL\Unit\,$};
    \node[ccell=$\vl_1$] (l1) [below=0pt of l2] {\tiny $\InjR\bullet$}; 
    \draw[cptr] ($ (l1.center) + (5pt,0pt) $) to ++\east to ++(0,8pt) to ++(-10pt,0);
    \node [below=5pt of l1] (1text) {$\tau \eqdef (\vl_1,\vl_2, \va_3 : \sortNat), \vl \mapsto \vl_1$};
    \node [below=0pt of 1text] () {stencil $\stenUno$};
\end{tikzpicture}
\end{center}
Stencil $\stenS$ represents heaps that are defined on at least two locations and have $(\InjR\bullet)$ stored in $\vl$,
\eg the first three heaps below are valid extensions whereas the right-most is not.
\begin{center}
\renewcommand\ndsize{20}
\begin{tikzpicture}[baseline={(l1.base)}]
    \node[ccell=$\vl_2$] (l2) [] {\tiny $\InjL\Unit$};
    \node[ccell=$\vl_1$] (l1) [below=0pt of l2] {\tiny $\InjR\bullet$};
    \node[below=2pt of l1] () {$\checkmark$};
    \draw[cptr] ($ (l1.center) + (5pt,0pt) $) to ++\east to ++\north to ++(-10pt,0);
\end{tikzpicture}
\hspace{7.5pt}
\begin{tikzpicture}[baseline={(l1.base)}]
    \node[ccell=$\vl_3$] (l3) [] {\tiny $\InjL\Unit$};
    \node[ccell=$\vl_2$] (l2) [below=0pt of l3] {\tiny $\InjR\bullet$};
    \node[ccell=$\vl_1$] (l1) [below=0pt of l2] {\tiny $\InjR\bullet$};
    \node[below=2pt of l1] () {$\checkmark$};
    \draw[cptr] ($ (l2.center) + (5pt,0pt) $) to ++\east to ++\north to ++(-10pt,0);
    \draw[cptr] ($ (l1.center) + (5pt,0pt) $) to ++\east to ++(0,8pt) to ++(-10pt,0);
\end{tikzpicture}
\hspace{7.5pt}
\begin{tikzpicture}[baseline={(l1.base)}]
    \node[ccell=$\vl_2$] (l2) [] {\tiny $\InjR\bullet$};
    \node[ccell=$\vl_1$] (l1) [below=0pt of l2] {\tiny $\InjR\bullet$};
    \node[below=2pt of l1] () {$\checkmark$};
    \draw[cptr] ($ (l1.center) + (5pt,0pt) $) to ++\east to ++(0,8pt) to ++(-10pt,0);
    \draw[cptr] ($ (l2.center) + (5pt,0pt) $) to ++\eeast to ++\south to ++(0pt,-2pt) to ++(-16pt,0);
\end{tikzpicture}
\hspace{7.5pt}
\begin{tikzpicture}[baseline={(l1.base)}]
    \node[ccell=$\vl_1$] (l1) [] {\tiny $\InjR\bullet$};
    \node[below=2pt of l1] () {$\times$};
    \draw[cptr] ($ (l1.center) + (5pt,0pt) $) to ++\east to ++\north to ++\west to ++(0,-4pt);
\end{tikzpicture}
\end{center}
\end{example}
A stencil is said to be \emph{connected} if every location can be reached by traversing the heaplet starting from a public location. 
In the above example, stencils $\stenZ,\stenS$ are connected, whereas $\stenUno$ is not because $\va_3$ is not reachable from $\vl_1$.

For a finite set of templates $\Sigma$, we call a $\Sigma$-indexed family of stencils over a public world $\vw$ a \emph{cover}. 
We refer to \emph{connected covers} if all its stencils are connected, 
\emph{non-overlapping covers} if the collections of heaps represented by the stencils are disjoint and 
\emph{complete covers} if every heap is included by exactly one stencil in the cover. 
Completeness of a cover implies that it is both connected and non-overlapping.

\def\stenUno{\vs_{1}}
\begin{example}
    The cover $\pair\stenZ\stenS$ over public world $(\vl : \sortNat)$ is complete since every heap $\mvarM$ with a public location $(\vl : \sortNat)$ will store in it either $\InjL\Unit$ (covered by $\stenZ$) or $\InjR{\vl_2}$ (covered by $\stenS$).
    On the other hand, the cover $\triple\stenZ\stenS\stenUno$, is not complete since it is neither connected (due to $\stenUno$) nor is it non-overlapping: 
    heap $\mvarM_2$ from \cref{ex:overview-stateful-values} is included by both $\stenS$ and $\stenUno$.
\end{example}
A complete cover is thus a finite partition of the heap space over a common set of public locations.
\Cref{sec:theory-templates} introduces templates formally and in \cref{sec:covers} we assemble them to form stencils and covers.

\emph{Resultlets} are a means to assign stateful values to each stencil in a cover.
Since stencils represent collections of heaps via the location variables, we similarly introduce variables to stateful values and talk about collections of them. 

An $(\vX \in \pworldC)$-valued resultlet over template $\tau$ is thought of as a stateful value that has access to variables $\varL\tau$. 
Specifically, a resultlet $\inj_{\concL\tau \xsto{\sigma'} \vw'} \pair{\vx'}{\mvarM'} \in \typeRes\vX\tau$ has the following data:
\begin{itemize}
    \item An injection $\concL\tau \xto{\sigma'} \vw' \in \worldC$ representing an allocation of $\vw' \ominus \sigma'$ locations.
    \item An $\vX$-valued return value $\vx' : \vX(\vw' \oplus \varL\tau)$ that can also return variables $\varL\tau$.
    \item A heaplet $\mvarM' : \dheapF(\vw', \vw' \oplus \varL\tau)$ that is defined on $\vw'$ and has access to variables $\varL\tau$.
\end{itemize}
Notice that resultlets cannot instantiate variables $\varL\tau$ in their data: these variables represent the state of the heap that is left unchanged.
Just as with stencils, resultlets are thought of as collections of stateful values given by all the instantiations of $\varL\tau$. 

\def\resC{\vr_\cycle}
\def\resS{\vr_\Succ}
\begin{example}
\renewcommand\ndsize{15}
We create two $\sem{\tyRef\sortNat}$-valued resultlets over templates $(\vl : \sortNat)$ and $(\vl_1, \va_2 : \sortNat)$.
Resultlet $\resC$ allocates a new cell $\vl'$, creates a cycle with $\vl$ and returns $\vl$, whereas $\resS$ does not allocate any new cells but instead only writes $\InjR{\va_2}$ to $\vl_1$.
\begin{center}
\begin{tikzpicture}[baseline={(l.base)}]
    \node[ccell=$\vl'$] (l') [] {\tiny $\InjR\bullet$};
    \node[ccell=$\vl$] (l) [below=0pt of l'] {\tiny $\InjR\bullet$};
    \node [below=2.5pt of l] () {$\resC \eqdef \inj_{(\vl : \sortNat) \leq (\vl, \vl': \sortNat)}\pair\vl{\mvarM_\cycle}$};
    \draw[|->, black] ($ (l.west) + (-16pt, 1pt) $) -- ($ (l.west) + (-7pt, 1pt) $);
    \draw[cptr] ($ (l.center) + (5pt,0pt) $) to ++(10pt,0pt) to ++(0,6pt) to ++(-8pt,0);
    \draw[cptr] ($ (l'.center) + (5pt,0pt) $) to ++(12pt,0pt) to ++(0,-10pt) to ++(-10pt,0);
\end{tikzpicture}
\hspace{15pt}
\begin{tikzpicture}[baseline={(l1.base)}]
    \node[vcell=$\va_2$] (a2) [] {};
    \node[ccell=$\vl_1$] (l1) [below=0pt of a2] {\tiny $\InjR\bullet$};
    \node [below=2.5pt of l1] () {$\resS \eqdef \inj_{(\vl_1,\va_2 : \sortNat) \leq (\vl_1, \va_2 : \sortNat)}\pair{\vl_1}{\mvarM_\Succ}$ };
    \draw[cptr] ($ (l1.center) + (5pt,0pt) $) to ++\east to ++\north to ++(-10pt,0);
    \draw[|->, black] ($ (l1.west) + (-16pt, 1pt) $) -- ($ (l1.west) + (-7pt, 1pt) $);
\end{tikzpicture}
\end{center}
Recall the four stateful values from \cref{ex:overview-stateful-values}.
Resultlet $\resS$ can be extended to the right-most three using data
\begin{tikzpicture}
    \node[ccell=$\vl_2$] (l2) [] {\tiny $\InjL\Unit$};
\end{tikzpicture}\hspace{2pt}
and
\begin{tikzpicture}[baseline={(l2.base)}]
    \node[ccell=$\vl_3$] (l3) [] {\tiny $\InjL\Unit$};
    \node[ccell=$\vl_2$] (l2) [below=0pt of l3] {\tiny $\InjR\bullet$};
    \draw[cptr] ($ (l2.center) + (5pt,0pt) $) to ++\east to ++\north to ++(-10pt,0);
\end{tikzpicture}.
In contrast, $\resC$ cannot be extended to the left-most stateful value since $\vl_2$ is an existing location whereas $\vl'$ has been allocated.
\end{example}
For a finite set of templates $\Sigma$, we call a $\Sigma$-indexed family of $\vX$-valued resultlets an \emph{evaluation}. 
Resultlets and evaluations are further described in \cref{sec:evaluation}.

\paragraph{Finitary Computations}
We link covers with evaluations to describe the finitary FGLS computations.
Every triple consisting of a finite set of templates $\Sigma$, a $\Sigma$-indexed complete cover over $\vw$ and $\Sigma$-indexed $\vX$-valued evaluation 
describes a FGLS computation in $\monFLS\vX\vw$ in a non-unique way.

For instance, $[(\vl : \sortNat) \mapsto \stenZ, \resC; (\vl_1, \va_2 : \sortNat) \mapsto \stenS, \resS]$ represents a finitary FGLS computation over public world $(\vl : \sortNat)$ and templates $(\vl : \sortNat)$ and $(\vl_1, \va_2 : \sortNat)$.
The computation that it realises is defined at $(\vl : \sortNat) \xto\sigma \vw', \mvarM' : \heapF(\vw')$, by case splitting on the contents of cell $\proj_{\sigma(\vl)} \mvarM'$. 
If it is $\InjL\Unit$, it overwrites it with the cycle given by $\resC$ leaving the rest of the heap unchanged.
If it is $\InjR{\vl_2}$, for some location $\vl_2 \in \vw'$, it leaves the whole heap $\mvarM'$ unchanged.
In both cases $\sigma(\vl)$ is returned.

We say that $\phi \in \monFLS\vX\vw$ is finitary if it can be expressed using covers and evaluations.
In \cref{sec:fin-monad} we show that the finitary computations are closed under the functor and monad structure and therefore define a submonad of $\monFLS$. 


\section{Worlds, Templates and Instantiations}\label{sec:theory-templates}

In this section we introduce the machinery needed to describe the finitary FGLS monad.
We first revisit worlds and instantiations~(\cref{sec:worlds}) from \cite{kammar2017monad} and then extend worlds to two-kinded worlds called \emph{templates}~(\cref{sec:tempC}).
We then introduce instantiation data to templates~(\cref{sec:tinstC}) and lastly develop the theory for collections of templates with instantiation data~(\cref{sec:famC-tinstC}).  

\subsection{Worlds and Instantiations}\label{sec:worlds}

We first revisit categories $\worldC$ and $\instC$ introduced in \cref{sec:overview-fgls}.

Recall that category $\worldC$ has worlds $\vw$ as objects and morphisms are sort-preserving injections $\vw \xto\sigma \vw'$. 
We can form the disjoint union of two worlds denoted by $\vw_1 \oplus \vw_2$ that has $\card{\vw_1} + \card{\vw_2}$ many locations.
This construction extends to a functor $\oplus : \worldC \times \worldC \to \worldC$ together with injections $\incl_1 : \vw_1 \to \vw_1 \oplus \vw_2$ and $\incl_2 : \vw_2 \to \vw_1 \oplus \vw_2$ but note that it does not satisfy the universal property of coproducts because of injectivity of morphisms in $\worldC$.

We equip $\worldC$ with instantiation data to form category $\instC$, 
which has worlds $\vw$ as objects, and morphisms $\vw \xto\me \vw'$ consist of a sort preserving injection $\tinstU\me$ and heaplet
\[ \mvarM_\me : \dheapF(\vw' \ominus \tinstU\me, \vw') = \Prod{\vl : \vC \in \vw' \ominus \tinstU\me} \sem{\ctype\vC}(\vw') \]
Semantic full-ground values $\sem{\ctype\vC}$ are given for reference types using the hom-functor $\sem{\tyRef\vC} = \homF\worldC{(\vl : \vC)}\dash$,
and thus $\sigma \in \sem{\tyRef\vC}(\vw)$ is a morphism $(\vl : \vC) \xto\sigma \vw$ in $\worldC$, \ie a selection of a single $\vC$-sorted location in $\vw$.
The $\tyProd, \tySum$ types are given via the cartesian closed structure of $\pworldC$.
Note that semantic values are in one-to-one correspondence with syntactic closed full ground values: $\sem{\ctype\vC}(\vw) \iso \typedv\ctxNil\vw{\ctype\vC}$.

Composition of $\vw_1 \xto{\me_1} \vw_2 \xto{\me_2} \vw_3$ is defined by composition of the underlying injection and composition of the instantiation data 
$\mvarM_{\me_2 \comp \me_1} : \dheapF(\vw_3 \ominus (\tinstU{\me_2} \comp \tinstU{\me_1}), \vw_3)$ is given by
\[
    \vl : \vC \in \vw_3 \ominus (\tinstU{\me_2} \comp \tinstU{\me_1}) 
             \mapsto \begin{cases}
                            \sem{\ctype\vC}(\tinstU{\me_2})(\proj_{\vl'}\mvarM_{\me_1}) & \mif \tinstU{\me_2}(\vl') = \vl \\
                            \proj_\vl{\mvarM_{\me_2}} & \mif \vl \in \vw_3 \ominus \tinstU{\me_2}
                        \end{cases}
\]
A $\vl : \vC \in \vw_3 \ominus (\tinstU{\me_2} \comp \tinstU{\me_1})$ comes either from $\vw_2 \ominus \tinstU{\me_1}$ or from $\vw_3 \ominus \tinstU{\me_2}$, 
and $\sem{\ctype\vC}(\tinstU{\me_2})$ promotes locations $\vw_2$ to $\vw_3$.
\begin{example}
    The composition of $\vl_1 : \sortNat \xinclto{\me_1} (\vl_1, \vl_2, \vl_3 : \sortNat) \xinclto{\me_2} (\vl_1, \vl_2, \vl_3, \vl_4, \vl_5 : \sortNat)$ is given as
\begin{center}
\renewcommand\ndsize{15}
\begin{tikzpicture}[baseline={(l5.base)},remember picture]
    \node[vcell=$\vl_1$] (l1) [] {};
    \node[vcell=$\vl_2$] (l2) [below=0pt of l1] {};
    \node[vcell=$\vl_3$] (l3) [below=0pt of l2] {};
    \node[ccell=$\vl_4$] (l4) [below=0pt of l3] {\tiny $\InjR\bullet$};
    \node[ccell=$\vl_5$] (l5) [below=0pt of l4] {\tiny $\InjR\bullet$};
    \node [below=5pt of l5] () {heap $\mvarM_{\me_2}$};
    \draw[cptr] ($ (l4.center) + (5pt,0pt) $) to ++\east to ++\north to ++(-9pt,0);
    \draw[cptr] ($ (l5.center) + (5pt,0pt) $) to ++\eeast to ++\north to ++\north to ++\north to ++\nnorth to ++(-14pt,0);
\end{tikzpicture}
$\;\,\comp$
\begin{tikzpicture}[baseline={(l3.base)},remember picture]
    \node[vcell=$\vl_1$] (l1) [] {};
    \node[ccell=$\vl_2$] (l2) [below=0pt of l1] {\tiny $\InjR\bullet$};
    \node[ccell=$\vl_3$] (l3) [below=0pt of l2] {\tiny $\InjR\bullet$};
    \node [below=5pt of l3] () {heap $\mvarM_{\me_1}$};
    \draw[cptr] ($ (l2.center) + (5pt,0pt) $) to ++\east to ++\north to ++(-9pt,0);
    \draw[cptr] ($ (l3.center) + (5pt,0pt) $) to ++\east to ++(0,6pt) to ++(-9pt,0);
\end{tikzpicture}
=
\begin{tikzpicture}[baseline={(l5.base)},remember picture]
    \node[vcell=$\vl_1$] (l1) [] {};
    \node[ccell=$\vl_2$] (l2) [below=0pt of l1] {\tiny $\InjR\bullet$};
    \node[ccell=$\vl_3$] (l3) [below=0pt of l2] {\tiny $\InjR\bullet$};
    \node[ccell=$\vl_4$] (l4) [below=0pt of l3] {\tiny $\InjR\bullet$};
    \node[ccell=$\vl_5$] (l5) [below=0pt of l4] {\tiny $\InjR\bullet$};
    \node [below=5pt of l5] () {heap $\mvarM_{\me_2 \comp \me_1}$};
    \draw[cptr] ($ (l2.center) + (5pt,0pt) $) to ++\east to ++(0,6pt) to ++(-9pt,0);
    \draw[cptr] ($ (l3.center) + (5pt,0pt) $) to ++\east to ++(0,6pt) to ++(-9pt,0);
    \draw[cptr] ($ (l4.center) + (5pt,0pt) $) to ++\east to ++(0,6pt) to ++(-9pt,0);
    \draw[cptr] ($ (l5.center) + (5pt,0pt) $) to ++\eeast to ++\north to ++\north to ++\north to ++\nnorth to ++(-14pt,0);
\end{tikzpicture}
\end{center}
\end{example}
\noindent
$\worldC$ has pullbacks 
\begin{tikzcd}[sep=small]
    {\vw_1 \pltimes\vw \vw_2} \ar[r, "\proj_2" xshift=-2pt]  \ar[d, "\proj_1"' xshift=-2pt] \pbsquare{dr}
        & {\vw_2} \ar[d, "\sigma_2" xshift=1pt]
    \\
    {\vw_1} \ar[r, "\sigma_1"']
        & {\vw}
\end{tikzcd}
given by image intersection: $\supp{\vw_1 \pltimes\vw \vw_2} = \img{\sigma_1} \setInt \img{\sigma_2}$.

\subsection[Templates]{Templates}\label{sec:tempC}

We introduce worlds with two kinds of locations
(i) concrete locations that represent pointers to allocated data and 
(ii) location variables that stand for extension points that hold \emph{arbitrary cell contents}.
These two-kinded worlds, called \emph{templates}, are modelled using category $\tempC$ given by
\begin{itemize}
\item \emph{Objects:} A world $\Plus\vw$ with a chosen subset $\Minus\vw \inclto \Plus\vw$.
\item \emph{Morphisms:} A morphism $(\Minus\vw \inclto \Plus\vw) \xto\rho (\Minus{\hat\vw} \inclto \Plus{\hat\vw})$ 
                        is a sort-preserving injection $\Plus\vw \xto{\Plus\rho} \Plus{\hat\vw} \in \worldC$
                        that preserves the chosen subset, that is, the restriction of $\Plus\rho$ to $\Minus\vw$ factors through $\Minus{\hat\vw}$:
$\begin{tikzcd}
    {\Minus\vw} \ar[rr, bend right=20, "\Plus\rho\dres|{\Minus\vw}"'] \ar[r, dashed, "\exists"] & {\Minus{\hat\vw}} \ar[r, hook] & \Plus{\hat\vw} 
\end{tikzcd}$
\end{itemize}
Composition is given by composition of the underlying injection.
We have two forgetful functors $\concL\dash, \anyL\dash : \tempC \to \worldC$, projecting the $-$ and $+$ components respectively:
 $\concL{\Minus\vw \inclto \Plus\vw} = \Minus\vw, \anyL{\Minus\vw \inclto \Plus\vw} = \Plus\vw$.
We use metavariable $\tau$ to range over objects (templates) in $\tempC$, and call $\concL\tau$ its concrete locations, 
$\anyL\tau$ its combined locations, and $\varL\tau \eqdef \anyL\tau \setDiff \concL\tau$ its location variables.
If it is clear from context, we overload $\concL\tau \eqdef \concL\tau \inclto \anyL\tau$ and
$\varL\tau \eqdef \varL\tau \inclto \anyL\tau$ for the inclusions into $\anyL\tau$.
Let $1_\vw$ be the template with only $\vw$ concrete locations, and let $?^\vw$ represent the template with only $\vw$ location variables.

Every morphism $\rho : \tau \to \hat\tau$ induces a morphism $\varL\rho \eqdef \varL\tau \xsto{\varL\tau} \anyL\tau \xsto{\anyL\rho} \anyL{\hat\tau}$ in $\worldC$
and every morphism in $\tempC$ is mono, and epis are exactly the morphisms that have a surjective $\anyL\rho$ component.
Moreover, concrete locations $\concL\tau$ cannot be mapped to variables $\varL{\hat\tau}$ by $\rho$ and so concrete locations cannot become uninstantiated.

We select specific locations by intersecting the images, \eg
$\concL{\hat\tau} \pltimes{\anyL{\hat\tau}} \varL\tau \eqdef \img{\concL{\hat\tau}} \setInt \img{\varL\rho}$, 
refers to the variables $\varL\tau$ that are instantiated by $\rho$
and similarly $\varL{\hat\tau} \pltimes{\anyL{\hat\tau}} (\anyL{\hat\tau} \ominus \anyL\rho)$ represents variables $\varL{\hat\tau}$ that are fresh w.r.t. $\rho$
($\anyL{\hat\tau} \ominus \anyL\rho$ are the new locations introduced by $\rho$).

Let $\vl$ range over concrete locations, and let $\va, \vb$ range over location variables. 
We write templates as single worlds $\vl_1 : \vC_1, \va_2 : \vC_2, \vl_3 : \vC_3$, 
to really mean the template given by the inclusion $\vl_1 : \vC_1, \vl_3 : \vC_3 \inclto \vl_1 : \vC_1, \va_2 : \vC_2, \vl_3 : \vC_3$.

\begin{example}\label{ex:tempC-overview}
Consider the following morphism in $\tempC$:
\[ 
    \vl_1 : \vC_1, \va_2 : \vC_2
    \xrightarrow{\vl_1 \mapsto \vl_1, \va_2 \mapsto \vl_2}   
    \vl_1 : \vC_1, \vl_2 : \vC_2, \va_3 : \vC_3
\]
It instantiates variable $\va_2$ from its domain, and introduces a new fresh variable $\va_3$.
\end{example}

The category $\tempC$ has initial object $?^\emptyset$, and equalizers and pullbacks such that $\concL\dash$, $\anyL\dash$ preserve these.
It is also equivalent to the arrow category $\arrC\worldC$.

\subsubsection{Local Independent Coproducts}\label{sec:lic-tempC}

For a span $\tau_1 \xsfrom{\rho_1} \tau \xsto{\rho_2} \tau_2$ its local independent coproduct
gives the least way to close the span into a commuting square, that is, it makes the least assumptions on which locations  
from $\tau_1$ and $\tau_2$ are related (at both the $\concL\dash$ and $\anyL\dash$ level).
We define its local independent coproduct $\tau_1 \liplus\tau \tau_2$ on the $\anyL\dash$ level as
\begin{center}
\begin{tikzcd}
    {\tau} \ar[r, "\rho_2"] \ar[d, "\rho_1"]
        & {\tau_2} \ar[d, "\ext^{\rho_2}\rho_1"]
    \\
    {\tau_1} \ar[r, "\ext^{\rho_1}\rho_2"]
        & {\tau_1 \liplus\tau \tau_2}
\end{tikzcd}
\hspace{10pt}
\begin{tikzcd}[sep=small]
    {\anyL\tau} 
        \ar[d, "{\anyL{\rho_2}}"' xshift=-1pt] \ar[rr, "{\anyL{\rho_1}}"] 
            && {\anyL{\tau_1}} 
                \ar[d, "{\ext^{\anyL{\rho_1}}\anyL{\rho_2}}" xshift=1pt] 
    \\
    {\anyL{\tau_2}} 
        \ar[rr, "{\ext^{\anyL{\rho_2}}\anyL{\rho_1}}"'] 
            && {\anyL\tau \oplus (\anyL{\tau_1} \ominus \anyL{\rho_1}) \oplus (\anyL{\tau_2} \ominus \anyL{\rho_2}) \defeq \anyL{\tau_1 \liplus\tau \tau_2}}
\end{tikzcd}
\end{center}
$\ext^{\anyL{\rho_1}}\anyL{\rho_2}, \ext^{\anyL{\rho_2}}\anyL{\rho_1}$ are given by the canonical isomorphism of injections:
$\anyL{\tau_\vi} \iso \anyL\tau \oplus (\anyL{\tau_\vi} \ominus \anyL{\rho_\vi})$.

The concrete locations $\concL{\tau_1 \liplus\tau \tau_2}$ are
\[ \concL{\tau_1 \liplus\tau \tau_2} \eqdef \img{\ext^{\anyL{\rho_1}}\anyL{\rho_2} \comp \concL{\tau_1}} \setInt \img{\ext^{\anyL{\rho_2}}\anyL{\rho_1} \comp \concL{\tau_2}} \]
Thus we only relate locations $\anyL\tau$ from $\anyL{\tau_1}$ and $\anyL{\tau_2}$ and
a location $\vl : \vC \in \anyL{\tau_1 \liplus\tau \tau_2}$ is concrete iff it is concrete in either $\concL{\tau_1}$ or $\concL{\tau_2}$.

\begin{example}\label{ex:tempC-lic}
\begin{tikzcd}[sep=small]
    {\va_1 : \sortNat} \ar[d, "{\va_1 \mapsto \vl_1}"' xshift=-2pt] \ar[r, hook, ]
        & {\va_1, \vl_2: \sortNat} \ar[d, "\va_1 \mapsto \vl_1" xshift=1pt]
    \\
    {\vl_1, \va_3 : \sortNat} \ar[r, hook]
        & {\vl_1, \vl_2, \va_3 : \sortNat}
    \end{tikzcd}
is a local independent coproduct square since $\va_3$ and $\vl_2$ are not conflated and $\va_3$ is not instantiated.
\end{example}

Local independent coproducts can be given a universal property using the independence structures of \cite{simpson2018category}.
A commuting square is called \emph{independent} if its $\anyL\dash$-image is a pullback, and a local independent coproduct square
is the universal independent square, that is, every other independent square factors uniquely through it.

\subsubsection{Variable Preserving Morphisms}\label{sec:vpres-morph}

We call morphisms in $\tempC$ that do not instantiate location variables from their domain as \emph{variable preserving}.
\begin{definition}
A morphism $\rho : \tau_1 \to \tau_2$ is variable preserving if it satisfies $\concL{\tau_2} \pltimes{\anyL{\tau_2}} \varL{\tau_1} = \emptyset$.
\end{definition}
\begin{example}
$\va_1 : \vC_1 \inclto \va_1 : \vC_1,\va_2 : \vC_2$ is variable preserving since $\va_1$ remains a variable in the codomain,
whereas $\va : \vC \xto{\va \mapsto \vl} \vl : \vC$ is not.
\end{example}
Variable preserving morphisms are closed under composition, and characterise exactly the regular monos in $\tempC$.

\begin{lemma}\label{lem:lic-vpres-refl}
    Local independent coproducts reflect variable preserving morphisms, that is, 
    in a span $\tau_1 \xsfrom{\rho_1} \tau \xsto{\rho_2} \tau_2$, if $\rho_2$ is variable preserving,
    then so is $\tau_1 \xsto{\ext^{\rho_1}\rho_2} \tau_1 \liplus\tau \tau_2$.
\end{lemma}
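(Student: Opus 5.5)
We argue directly from the concrete description of the local independent coproduct at the $\anyL\dash$ level. Write $\hat\tau \eqdef \tau_1 \liplus\tau \tau_2$. We must show that no location variable of $\tau_1$ is sent to a concrete location of $\hat\tau$ by $\ext^{\rho_1}\rho_2$, i.e.\ $\concL{\hat\tau} \pltimes{\anyL{\hat\tau}} \varL{\tau_1} = \emptyset$. Assume towards a contradiction that some $\va \in \varL{\tau_1}$ has image $\vl \eqdef \anyL{\ext^{\rho_1}\rho_2}(\va) \in \concL{\hat\tau}$. By the definition of $\concL{\hat\tau}$, $\vl$ lies in $\img{\ext^{\anyL{\rho_2}}\anyL{\rho_1} \comp \concL{\tau_2}}$, so $\vl = \anyL{\ext^{\rho_2}\rho_1}(\vl')$ for some concrete $\vl' \in \concL{\tau_2}$.

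The key step is to use the decomposition $\anyL{\hat\tau} = \anyL\tau \oplus (\anyL{\tau_1} \ominus \anyL{\rho_1}) \oplus (\anyL{\tau_2} \ominus \anyL{\rho_2})$ and split on where $\va$ lives in $\anyL{\tau_1} \iso \anyL\tau \oplus (\anyL{\tau_1} \ominus \anyL{\rho_1})$. If $\va$ is not in the image of $\anyL{\rho_1}$, then its image lies in the summand $\anyL{\tau_1} \ominus \anyL{\rho_1}$, which is disjoint from the image of $\anyL{\tau_2}$ (which lands in $\anyL\tau \oplus (\anyL{\tau_2}\ominus\anyL{\rho_2})$); this contradicts $\vl$ being hit by $\vl'$. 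Hence $\va = \anyL{\rho_1}(\vb)$ for some $\vb \in \anyL\tau$, and $\vl$ lies in the $\anyL\tau$ summand, so by injectivity (both injections agree on $\anyL\tau$ by commutativity of the square) we get $\vl' = \anyL{\rho_2}(\vb)$.

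Now $\vb$ must be a variable of $\tau$: if $\vb \in \concL\tau$, then since $\rho_1$ is a $\tempC$-morphism it preserves the chosen subset, so $\va = \anyL{\rho_1}(\vb) \in \concL{\tau_1}$, contradicting $\va \in \varL{\tau_1}$. So $\vb \in \varL\tau$ and $\anyL{\rho_2}(\vb) = \vl' \in \concL{\tau_2}$, i.e.\ $\vl' \in \concL{\tau_2} \pltimes{\anyL{\tau_2}} \varL\tau$, contradicting that $\rho_2$ is variable preserving. Hence the intersection is empty.

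The only delicate point is the bookkeeping in the second step: making precise that the two canonical injections into $\anyL{\hat\tau}$ agree on $\anyL\tau$ and have disjoint images outside it, which follows immediately from their definition via the canonical isomorphisms $\anyL{\tau_\vi} \iso \anyL\tau \oplus (\anyL{\tau_\vi} \ominus \anyL{\rho_\vi})$ (equivalently, from the $\anyL\dash$-image of the square being a pullback, as noted for independent squares).
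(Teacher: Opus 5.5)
There is one small gap, and the rest of your argument is right. The paper states this lemma without proof, and your direct unfolding is the natural argument. Splitting on whether $\va \in \img{\anyL{\rho_1}}$ is correct, as is the fact that the two canonical injections agree on the $\anyL\tau$ summand and are disjoint outside it. The final step, using that $\rho_1$ sends concrete locations to concrete ones and that $\rho_2$ is variable preserving, is also correct.

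The gap is the step ``By the definition of $\concL{\hat\tau}$, $\vl$ lies in $\img{\ext^{\anyL{\rho_2}}\anyL{\rho_1} \comp \concL{\tau_2}}$.'' It reads the displayed formula for $\concL{\tau_1 \liplus\tau \tau_2}$ literally as an intersection. That $\cap$ has to be a typo for $\cup$, for three reasons:
\begin{enumerate}
\item The sentence right after the formula says a location is concrete iff it is concrete in \emph{either} $\tau_1$ or $\tau_2$.
\item In the paper's example of a local independent coproduct square, $\vl_1$ and $\vl_2$ are each concrete on only one side, yet both are concrete in the result.
\item With an intersection, the two legs of the square would not be $\tempC$-morphisms, since a $\tempC$-morphism must send concrete locations to concrete ones.
\end{enumerate}

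Under the union reading, $\vl \in \concL{\hat\tau}$ gives you only a disjunction, and you have handled one disjunct. The other case is $\vl = \anyL{\ext^{\rho_1}\rho_2}(\vl'')$ for some $\vl'' \in \concL{\tau_1}$. It is ruled out in one line: $\anyL{\ext^{\rho_1}\rho_2}$ is injective and also sends $\va$ to $\vl$, so $\va = \vl''$ would be concrete, contradicting $\va \in \varL{\tau_1}$. Add this case and the proof is complete.
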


\subsubsection{Variable Reflecting Morphisms}\label{sec:vrefl-morph}

Similarly, we call a morphism that does not introduce location variables as \emph{variable reflecting}.
\begin{definition}
A $\rho : \tau_1 \to \tau_2$ is variable reflecting if it satisfies $(\anyL{\tau_2} \ominus \varL\rho) \pltimes{\anyL{\tau_2}} \varL{\tau_2} = \emptyset$.
\end{definition}
\begin{example}
    $\va : \vC \xto{\va \mapsto \vl} \vl : \vC$ is a variable reflecting whereas
    $\va_1 : \vC_1 \inclto \va_1 : \vC_1, \va_2 : \vC_1$ is not.
\end{example}

Variable reflecting morphisms are closed under composition and local independent coproducts reflect them in the same manner as \cref{lem:lic-vpres-refl}.

\subsection{Template Instantiations}\label{sec:tinstC}

We extend templates and their morphisms with instantiation data. 
Let $\tinstC$ be the category with objects being templates $\tau \in \objC\tempC$,
and morphisms $\tau_1 \xto\me \tau_2 \in \tinstC$ are pairs consisting of a template morphism 
$\tau_1 \xto{\tinstU\me} \tau_2 \in \tempC$ and instantiation data
$\mvarM_\me : \dheapF(\concL{\tau_2} \ominus \concL{\tinstU\me}, \anyL{\tau_2})$.
$\mvarM_\me$ defines a value that has access to all $\anyL{\tau_2}$ for every $\vl : \vC \in \concL{\tau_2}$ that is not hit by $\concL{\tinstU\me}$.

Composition of $\tau_1 \xto{\me_1} \tau_2 \xto{\me_2} \tau_3$ is defined by composition in $\tempC$ for the underlying injection and composition of the instantiation data 
$\mvarM_{\me_2 \comp \me_1} : \dheapF(\concL{\tau_3} \ominus (\concL{\me_2} \comp \concL{\me_1}), \anyL{\tau_3})$ is given as in $\instC$:
\[
    \vl : \vC \mapsto \begin{cases}\sem{\ctype\vC}\anyL{\me_2}(\proj_{\vl'}\mvarM_{\me_1}) & \mif \concL{\me_2}(\vl') = \vl \\
                                   \proj_\vl{\mvarM_{\me_2}} & \mif \vl \in \concL{\tau_3} \ominus \concL{\me_2}
                      \end{cases}
\]

We have a forgetful functor $\tinstU : \tinstC \to \tempC$, lifted functors $\concL\dash, \anyL\dash : \tinstC \to \worldC$,
and functor $\inclITI\dash : \instC \inclto \tinstC$ which maps to concrete templates $\vw \mapsto 1_\vw$.
We further inherit the terminology of $\tempC$ and talk about concrete locations and location variables, 
and talk about variable preserving and reflecting morphisms by noticing that $\tinstU$ preserves and reflects these properties.
The forgetful functor preserves equalizers and pullbacks.

\begin{definition}\label{def:template-heap} The template heap functor is defined as $\theapF \eqdef \homF\tinstC{?^\emptyset}\dash: \tinstC \to \setC$.
\end{definition}
We have $\theapF(\tau) \iso \dheapF(\concL\tau, \anyL\tau)$ and so a template heap $\mvarM \in \theapF(\tau)$ defines a value for every concrete location in $\concL\tau$ that has access to all locations $\anyL\tau$.

\subsubsection{Reachability}

Instantiation data introduce a notion of (un)reachable locations, refering to whether a location can be reached from a concrete location by traversing the data.
 
\begin{definition}\label{def:reach-loc}
    The reachable locations from $\vw \xto\sigma \anyL\tau$ w.r.t. a template heap $\mvarM : \theapF(\tau)$
    is the least factoring of $\sigma$ as 
    \begin{tikzcd} 
        {\vw} \ar[r, "\exists\sigma_\reach"] \ar[rr, bend right=20, "\sigma"'] & {\vw_\reach} \ar[r, "\exists\sigma_\ureach"] & {\anyL\tau} 
     \end{tikzcd}
     such that for every reachable concrete location $\vb : \vC \in \vw_\reach \pltimes{\anyL\tau} \concL\tau$, 
     the locations in $\proj_\vb \mvarM$ are included in $\vw_\reach$: 
     there exists a $\vV \in \sem{\ctype\vC}(\vw_\reach)$ s.t. $\sem{\ctype\vC}(\sigma_\ureach)(\vV) = \proj_\vb \mvarM$.
     We refer to $\vw_\reach$ as the reachable and $\anyL\tau \ominus \sigma_\ureach$ as the unreachable locations.
\end{definition}
 
\begin{example}\label{ex:reach-loc}
\renewcommand\ndsize{15}
    The reachable locations from $\vl_1 : \sortNat \inclto (\vl_1, \va_2, \vl_3 : \sortNat)$ w.r.t. template heap
    \begin{tikzpicture}[baseline={(l1.base)}]
        \node[ccell=$\vl_3$] (l3) {};
        \node[ccell=$\vl_1$] (l1) [below=0pt of l3] {};
        \node[vcell=$\va_2$] (a2) [below=0pt of l1] {};
        \draw[ptr] (l1.center) to ++\east to ++\south to (a2.east);
        \draw[ptr] (l3.center) to ++\east to ++\ssouth to ($ (l1.east) + (0,\gap pt) $);
    \end{tikzpicture}\hspace{1pt}
    are $\vl_1 : \sortNat \inclto (\vl_1, \va_2 : \sortNat) \inclto (\vl_1, \va_2, \vl_3 : \sortNat)$, since $\vl_3$ is not reachable from $\vl_1$.
\end{example}
 
\begin{lemma}\label{lem:reach-loc-dec}
    For every $\vw \xto\sigma \anyL\tau$ and $\mvarM : \theapF(\tau)$ the reachable location factoring $\sigma_\reach,\sigma_\ureach$ exists.
\end{lemma}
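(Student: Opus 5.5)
The plan is to identify factorings of $\sigma$ with subsets of $\supp{\anyL\tau}$ containing $\img\sigma$. Up to isomorphism, a factoring $\vw \xto{\sigma_\reach} \vw_\reach \xto{\sigma_\ureach} \anyL\tau$ in $\worldC$ is determined by the subset $S = \img{\sigma_\ureach} \subseteq \supp{\anyL\tau}$, with $\img\sigma \subseteq S$, because $\sigma_\ureach$ is injective. Call such an $S$ \emph{closed} if, for every concrete location $\vb : \vC \in S \cap \supp{\concL\tau}$, the value $\proj_\vb\mvarM \in \sem{\ctype\vC}(\anyL\tau)$ lies in the image of $\sem{\ctype\vC}(S \inclto \anyL\tau)$. "Least factoring" then means the least closed $S$ under inclusion; the induced comparison maps between factorings are the subset inclusions.

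The first key step is a support lemma for semantic full ground values. By induction on the full ground type $\vD$, every $\vV \in \sem\vD(\vw')$ has a least subworld $\mathsf{supp}(\vV) \leq \vw'$ through which it factors. For $\tyRef\vC$ this is the image of the single location. Products and sums are pointwise, using the cartesian structure of $\pworldC$. The lemma also gives that $\vV$ factors through $S \inclto \vw'$ iff $\mathsf{supp}(\vV) \subseteq S$. Alternatively, I could use $\sem{\ctype\vC}(\vw') \iso \typedv\ctxNil{\vw'}{\ctype\vC}$ and take the set of location literals occurring in the value. I expect this step to carry the real content, namely that "factoring through $S$" is a membership condition. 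Once it holds, closedness reads: $\vb \in S \cap \concL\tau$ implies $\mathsf{supp}(\proj_\vb\mvarM) \subseteq S$.

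Next I construct $S$ as a least fixed point. Start with $S_0 = \img\sigma$ and set $S_{n+1} = S_n \cup \bigcup_{\vb \in S_n \cap \concL\tau} \mathsf{supp}(\proj_\vb\mvarM)$. This is a monotone chain inside the finite set $\supp{\anyL\tau}$, so it stabilises at some $S_N$. That $S_N$ is closed follows by construction. By induction on $n$, every closed $S \supseteq \img\sigma$ contains each $S_n$, so $S_N$ is the least such set. Equivalently, $S_N$ is the intersection of all closed supersets of $\img\sigma$; closed sets are stable under intersection by the support lemma, and $\supp{\anyL\tau}$ itself is closed, so the family is nonempty.

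Finally I set $\vw_\reach$ to be $\anyL\tau$ restricted to $S_N$, take $\sigma_\ureach$ to be the inclusion, and take $\sigma_\reach$ to be the corestriction of $\sigma$. Sorts are preserved since $\vw_\reach$ inherits them from $\anyL\tau$. For each reachable concrete $\vb$, the witness $\vV$ is unique by injectivity of $\sem{\ctype\vC}(\sigma_\ureach)$, which holds because the semantic types send monos to monos. That gives existence together with leastness, which completes the proof.
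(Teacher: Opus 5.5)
Your proof is correct and takes the same route as the paper's sketch. The sketch defines $\vw_\reach$ as $\img\sigma$ together with every location encountered by traversing $\mvarM$ from the concrete locations in that image. Your support lemma for full ground values and the fixed-point iteration over the finitely many locations of $\anyL\tau$ make explicit what the sketch leaves implicit: factoring a value through a subworld is a membership condition on its locations, the traversal terminates, and the resulting factoring is least.
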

\begin{proofsketch} 
    World $\vw^\reach$ is given by the image $\img\sigma$ and all locations encountered when traversing $\mvarM$ starting from every concrete location that is in $\vw$ (\ie $\vw \pltimes{\anyL\tau} \concL\tau$).
\end{proofsketch}
 
\subsubsection{Connectivity}\label{sec:connectivity}

Using reachability we identify \emph{connected} morphisms in $\tinstC$ with the property that all locations in the codomain are reachable from locations in the domain.
We define connectivity formally by giving a factorisation of morphisms in $\tinstC$ that splits its reachable and unreachable components.
 
\begin{definition}[Factorisation in $\tinstC$]\label{def:con-fact}
    For a morphism $\me : \tau_1 \to \tau_2$, 
    we consider the reachable locations from $\anyL\me$ w.r.t. 
    $\mvarM_\me : \theapF(\concL{\tau_2} \ominus \concL\me \xto{\compl{\concL\me}} \concL{\tau_2} \xto{\concL{\tau_2}} \anyL{\tau_2})$.
    Let
    \begin{tikzcd}
        {\anyL{\tau_1}} 
            \ar[r, "{\anyL\me}^\reach"] \ar[rr, bend right=20, "\anyL\me"'] 
            & {\vw^\reach} 
                \ar[r, "{\anyL\me}^\ureach"] 
                & {\anyL{\tau_2}} 
    \end{tikzcd}
    be the factoring of $\me$ with the chosen subset for $\vw^\reach$ being the reachable concrete locations $\vw^\reach \pltimes{\anyL{\tau_2}} \concL{\tau_2}$.
    The property of \cref{def:reach-loc} allows us to split $\mvarM_\me$ into data for these two morphisms.

    For a morphism $\me : \tau_1 \to \tau_2$, we write $\con\me : \tau_1 \to \tau_{\con\me}$ and $\uncon\me : \tau_{\con\me} \to \tau_2$ for its factorisation given by \cref{def:con-fact}.
\end{definition}
\begin{example}
\renewcommand\ndsize{12}
The factoring of $(\va_1 : \sortNat) \xto{\va_1 \mapsto \vl_1} (\vl_1, \va_2, \vl_3 : \sortNat)$ with data
\begin{tikzpicture}[baseline={(l1.base)}]
    \node[ccell=$\vl_3$] (l3) {};
    \node[ccell=$\vl_1$] (l1) [below=0pt of l3] {};
    \node[vcell=$\va_2$] (a2) [below=0pt of l1] {};
    \draw[ptr] (l1.center) to ++\east to ++\south to (a2.east);
    \draw[ptr] (l3.center) to ++\east to ++\ssouth to ($ (l1.east) + (0,\gap pt) $);
\end{tikzpicture}
is 
\[ (\va_1 : \sortNat) \xto{\va_1 \mapsto \vl_1} (\vl_1, \va_2 : \sortNat) \inclto (\vl_1, \va_2, \vl_3 : \sortNat) \text{ with data }
\begin{tikzpicture}
    \node[ccell=$\vl_1$] (l1) [below=0pt of l3] {};
    \node[vcell=$\va_2$] (a2) [below=0pt of l1] {};
    \draw[ptr] (l1.center) to ++\east to ++\south to (a2.east);
\end{tikzpicture} \text{ and }
\begin{tikzpicture}
    \node[ccell=$\vl_3$] (l3) {};
    \node[vcell=$\vl_1$] (l1) [below=0pt of l3] {};
    \node[vcell=$\va_2$] (a2) [below=0pt of l1] {};
    \draw[ptr] (l3.center) to ++\east to ++\south to (l1.east);
\end{tikzpicture}
\]
\end{example}
 
A morphism $\me : \tau_1 \to \tau_2$ is said to be connected if $\uncon\me$ is an isomorphism
and correspondingly, it is disconnected if $\con\me$ is iso.
We have that $\con\me$ is itself connected, and $\uncon\me$ is disconnected.

Epimorphisms in $\tinstC$ are exactly the connected morphisms, 
and regular monomorphisms are exactly the disconnected morphisms. 
Disconnected morphisms are alternatively characterised as the variable preserving ones.
We use arrows $\surjto$ for epis, and $\rto$ for regular monos.

\begin{lemma}[Orthogonal Factorisation System in $\tinstC$]\label{lem:ofs-tinstC}
    The classes of epis (connected) and regular monos (disconnected) morphisms gives rise to an orthogonal factorisation system in $\tinstC$.
\end{lemma}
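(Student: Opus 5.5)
To prove \cref{lem:ofs-tinstC} I will use the standard characterisation of an orthogonal factorisation system $(\mathcal{E},\mathcal{M})$: both classes contain the isomorphisms and are closed under composition, every morphism factors as an $\mathcal{E}$-map followed by an $\mathcal{M}$-map, and every $\mathcal{E}$-map is left orthogonal to every $\mathcal{M}$-map.

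\emph{Existence of factorisations.} The factorisation is already supplied by \cref{def:con-fact}: any $\me : \tau_1 \to \tau_2$ factors as $\uncon\me \comp \con\me$. As stated after that definition, $\con\me$ is connected and $\uncon\me$ is disconnected. Given the identification of epis with connected morphisms and of regular monos with disconnected ones, this is exactly an (epi, regular mono) factorisation.

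\emph{Closure properties.} Isomorphisms are both connected and disconnected. For connected maps, closure under composition holds because epis compose. For disconnected maps I will use the alternative characterisation as variable preserving morphisms, which the paper has already noted are closed under composition. So these properties are immediate.

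\emph{Unique diagonal fill-in.} This is the main step. Take a commuting square $\vm \comp \vf = \vg \comp \ve$ with $\ve : \tau \surjto \tau'$ connected and $\vm : \sigma \rto \sigma'$ disconnected. I will construct $\vd : \tau' \to \sigma$ with $\vd \comp \ve = \vf$ and $\vm \comp \vd = \vg$. Uniqueness is free, since $\vm$ is mono (every morphism of $\tempC$ is mono and $\tinstU$ is faithful on the underlying injection) and $\ve$ is epi.

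For existence, I work first at the level of $\anyL\dash$. Every location of $\anyL{\tau'}$ is reachable from $\img{\anyL\ve}$ by traversing $\mvarM_\ve$. I then argue by induction on the traversal that $\anyL\vg$ sends each such location into $\img{\anyL\vm}$:
\begin{itemize}
\item For a location in $\img{\anyL\ve}$, this holds because the square commutes.
\item For a location $\vb$ reached from a concrete $\vl$ whose image already lies in $\img{\anyL\vm}$, the value $\proj_\vl$ is transported by $\vg$ to the corresponding value in $\sigma'$. Two sub-cases arise. If $\anyL\vg(\vl) = \anyL\vm(\vl'')$ for some $\vl''$ concrete in $\sigma$, then that value is the image of a value stored in $\sigma$, and it lives over $\anyL\sigma$. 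If instead $\vl$ is sent to a variable of $\sigma$, then $\vl$ is a concrete location. Concrete locations cannot be mapped to variables, and $\vm$ being variable preserving means it cannot instantiate variables of $\sigma$, so this case cannot occur.
\end{itemize}
Hence $\anyL\vg$ factors through the injection $\anyL\vm$, which gives $\anyL\vd$, and $\anyL\vd$ preserves concrete locations because $\vm$ reflects concreteness. The data $\mvarM_\vd$ is the restriction of $\mvarM_\vg$, descended along $\anyL\vm$ via the same factoring property of \cref{def:reach-loc}.

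Finally, $\vd \comp \ve = \vf$ is checked by postcomposing with the mono $\vm$. I expect the hard part to be the bookkeeping of instantiation data under composition in $\tinstC$, in particular confirming that the data of $\vg$ on locations hit by $\vm$ agrees with that transported from $\sigma$.
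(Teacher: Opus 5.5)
\textbf{How this compares.} The paper states this lemma without proof. Given the facts it asserts just before it (connected $=$ epi, disconnected $=$ regular mono, and $\con\me$ connected and $\uncon\me$ disconnected for the factorisation of \cref{def:con-fact}), the lemma follows formally, because every regular mono is right orthogonal to every epi. Suppose $m$ is the equaliser of $u,v$ and $m \comp f = g \comp e$ with $e$ epi. Then $u \comp g \comp e = v \comp g \comp e$, so $u \comp g = v \comp g$, and hence $g = m \comp d$ for a unique $d$. Then $m \comp d \comp e = m \comp f$, and since $m$ is mono, $d \comp e = f$. Together with the factorisation and closure of both classes under isomorphisms, that is the whole proof.

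Your hands-on diagonal fill-in is a genuinely different route. Its existence part uses only connectivity of $e$ (every location reachable) and variable preservation of $m$, not the identification with regular monos. It also produces an explicit diagonal. In effect you prove directly that variable-preserving maps are right orthogonal to connected ones, which the abstract route takes on trust from the paper's unproved claim. The price is instantiation-data bookkeeping, and as written that part is incomplete.

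\textbf{What needs fixing.} There are three points.

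(i) $\tinstU$ is not faithful: morphisms with the same underlying template morphism can carry different data. The map $m$ is still mono in $\tinstC$, because $\sem{\ctype\vC}(\anyL{m})$ is injective on values. Your final step $d \comp e = f$ needs exactly this, whereas uniqueness alone already follows from $e$ being epi.

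(ii) Take $\mvarM_d$ to be $\mvarM_f$ restricted to $\concL\sigma \ominus \concL{d}$. This set lies inside $\concL\sigma \ominus \concL{f}$, since $\anyL{f} = \anyL{d} \comp \anyL{e}$. Descending $\mvarM_g$ along $\anyL{m}$ instead is not justified by \cref{def:reach-loc}. Your traversal only controls $\img{\anyL{g}}$, while $\concL{m}(y)$ for $y \notin \img{\anyL{d}}$ lies outside it. What you actually need is to compare the data of $g \comp e$ and $m \comp f$ at $\concL{m}(y)$. This gives $\proj_{\concL{m}(y)} \mvarM_g = \sem{\ctype\vC}(\anyL{m})(\proj_{y} \mvarM_f)$ for every $y \in \concL\sigma \ominus \concL{d}$. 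The same comparison is what justifies the first sub-case of your induction. There you must rule out $\vl'' \in \img{\concL{f}}$, which follows from $\vl \notin \img{\concL{e}}$ and injectivity of $\anyL{g}$.

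(iii) The equation $m \comp d = g$ also requires $\mvarM_g$ and $\mvarM_m$ to agree on $\concL{\sigma'} \ominus \concL{m}$, which you do not mention. It follows from the square. Such a location is outside $\img{\anyL{m}}$ because $m$ is variable preserving, hence outside $\img{\anyL{g}}$ by your induction. So at that location the data of $g \comp e$ is $\mvarM_g$ and the data of $m \comp f$ is $\mvarM_m$, and the square makes them equal.
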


\subsubsection{Unification}

Dealing with concrete cells that can store location variables naturally leads us to unification that answers the question
of whether two template heaps with a common set of starting locations unify to a larger template heap that includes them both. 
In the case of $\tinstC$, template heaps are morphisms and so we phrase unification as follows.
\begin{definition}
    A span $\tau_1 \xfrom{\me_1} \tau \xto{\me_2} \tau_2$ unifies if there exists a commuting square
    \partsmash[tb]{0.5}{
    \begin{tikzcd}[ampersand replacement=\&]
        {\tau} \ar[d, "\me_1"'] \ar[r, "\me_2"] 
            \& {\tau_2} \ar[d, "\me_2'"]
        \\     
        {\tau_1} \ar[r, "\me_1'"']
            \& {\hat\tau}
    \end{tikzcd}}
    We call $\me_1', \me_2'$ a unifier of $\me_1, \me_2$.
\end{definition}
\begin{example}
    Not all spans unify, \eg:
    \renewcommand\ndsize{15}
    \begin{tikzcd}[remember picture]
        |[alias=s1]|{(\vl : \sortBool)}
            & {(\va : \sortBool)} \ar[l] \ar[r]
                & |[alias=s2]|{(\vl : \sortBool)}
    \end{tikzcd}
    \begin{tikzpicture}[overlay, remember picture] 
        \node[ccell=$\vl$] (d1) [above=-4pt of s1, xshift=35pt] {\tiny $\False$};
        \node[ccell=$\vl$] (d2) [above=-4pt of s2, xshift=-35pt] {\tiny $\True$};
    \end{tikzpicture}
\end{example}
\noindent
We consider a partial order on unifiers of the same span defined as
\[ 
    \tau_1 \xto{\me_1} \tau \xfrom{\me_2} \tau_2 \;\le\;
    \tau_1' \xto{\me_1'} \tau' \xfrom{\me_2'} \tau_2' \quad\miff\quad
\begin{tikzcd}[sep=small]
    {\tau_1} \ar[r, "\me_1"] \ar[dr, "\me_1'"', end anchor=west] 
        & {\tau} \ar[d, "\exists \vh" yshift=2pt]
            & {\tau_2} \ar[l, "\me_2"'], \ar[dl, "\me_2'", end anchor=east]
    \\
        & {\tau'}
\end{tikzcd}
\]
In general, a span might not have a least unifier as the following example shows.
\begin{example}\label{ex:single-mgu-counterexample}
    \renewcommand\ndsize{8}
    Consider the following two unifiers of the same span with the underlying $\tempC$ morphisms given by the indices.\\
\begin{center}
\renewcommand\ndsize{15}
    \begin{tikzcd}[remember picture, sep=small]
            & {\va_1, \va_2 : \sortNat} \ar[dl] \ar[dr]
        \\ 
        |[alias=ll]|{\vl_1, \va_2, \vl_3} \ar[dr]
                && |[alias=lr]|{\va_1, \vl_2, \vl_3'} \ar[dl, sloped, "\vl_3' \mapsto \vl_3" xshift=1pt]
        \\
            & {\vl_1, \vl_2, \vl_3 : \sortNat} 
    \end{tikzcd}
    \hspace{5pt}
    \begin{tikzcd}[remember picture,sep=small]
            & {\va_1, \va_2 : \sortNat} \ar[dl] \ar[dr]
        \\ 
        |[alias=rl]|{\vl_1, \va_2, \vl_3} \ar[dr] 
                && |[alias=rr]|{\va_1, \vl_2, \vl_3'} \ar[dl] 
        \\
            & {\vl_1, \vl_2, \vl_3, \vl_3' : \sortNat}
    \end{tikzcd}
    \begin{tikzpicture}[overlay, remember picture] 

        \node[ccell=$\vl_1$] (ll1) [above=25pt of ll, xshift=15pt] {};
        \node[ccell=$\vl_3$] (ll3) [below=0pt of ll1] {\tiny $\InjL\Unit$};
        \node[vcell=$\va_2$] (ll2) [below=0pt of ll3] {};
        \draw[ptr] (ll1.center) to ++\east to ++\south to (ll3.east);
    
        \node[vcell=$\va_1$] (lr1) [above=25pt of lr, xshift=-15pt] {};
        \node[ccell=$\vl_3'$] (lr3') [below=0pt of lr1] {\tiny $\InjL\Unit$};
        \node[ccell=$\vl_2$] (lr2) [below=0pt of lr3'] {};
        \draw[ptr] (lr2.center) to ++\east to ++\north to (lr3'.east);

        \node[vcell=$\vl_1$] (ll1) [below=7.5pt of ll, xshift=15pt] {};
        \node[vcell=$\vl_3$] (ll3d) [below=0pt of ll1] {};
        \node[ccell=$\vl_2$] (ll2) [below=0pt of ll3d] {};
        \draw[ptr] (ll2.center) to ++\east to ++\north to (ll3d.east);
    
        \node[ccell=$\vl_1$] (lr1) [below=7.5pt of lr, xshift=-15pt] {};
        \node[vcell=$\vl_3$] (lr3) [below=0pt of lr1] {};
        \node[vcell=$\vl_2$] (ll2) [below=0pt of lr3] {};
        \draw[ptr] (lr1.center) to ++\east to ++\south to (lr3.east);


        \node[ccell=$\vl_1$] (rl1) [above=25pt of rl, xshift=15pt] {};
        \node[ccell=$\vl_3$] (rl3) [below=0pt of rl1] {\tiny $\InjL\Unit$};
        \node[vcell=$\va_2$] (rl2) [below=0pt of rl3] {};
        \draw[ptr] (rl1.center) to ++\east to ++\south to (rl3.east);
    
        \node[vcell=$\va_1$] (rr1) [above=25pt of rr, xshift=-15pt] {};
        \node[ccell=$\vl_3'$] (rr3') [below=0pt of rr1] {\tiny $\InjL\Unit$};
        \node[ccell=$\vl_2$] (rr2) [below=0pt of rr3'] {};
        \draw[ptr] (rr2.center) to ++\east to ++\north to (rr3'.east);

        \node[vcell=$\vl_1$] (rl1) [below=7.5pt of rl, xshift=15pt] {};
        \node[vcell=$\vl_3$] (rl3) [below=0pt of rl1] {};
        \node[ccell=$\vl_3'$] (rl3') [below=0pt of rl3] {\tiny $\InjL\Unit$};
        \node[ccell=$\vl_2$] (rl2) [below=0pt of rl3'] {};
        \draw[ptr] (rl2.center) to ++\east to ++\north to (rl3'.east);
    
        \node[ccell=$\vl_1$] (rr1) [below=7.5pt of rr, xshift=-15pt] {};
        \node[ccell=$\vl_3$] (rr3) [below=0pt of rr1] {\tiny $\InjL\Unit$};
        \node[vcell=$\vl_3'$] (rr3') [below=0pt of rr3] {};
        \node[vcell=$\vl_2$] (rr2) [below=0pt of rr3'] {};
        \draw[ptr] (rr1.center) to ++\east to ++\south to (rr3.east);
    \end{tikzpicture}
    \vspace{15pt}
\end{center}
\renewcommand\ndsize{15}
The resulting heaps
    \begin{tikzpicture}[baseline={(l1.base)}]
        \node[ccell=$\vl_1$] (l1) [] {};
        \node[ccell=$\vl_3$] (l3) [below=0pt of l1] {\tiny $\InjL\Unit$};
        \node[ccell=$\vl_2$] (l2) [below=0pt of l3] {};
        \draw[ptr] (l1.center) to ++\east to ++\south to (l3.east);
        \draw[ptr] (l2.center) to ++\east to ++\north to (l3.east);
    \end{tikzpicture}
    and
    \begin{tikzpicture}[baseline={(l1.base)}]
        \node[ccell=$\vl_1$] (l1) [] {};
        \node[ccell=$\vl_3$] (l3) [below=0pt of l1] {\tiny $\InjL\Unit$};
        \node[ccell=$\vl_3'$] (l3') [below=0pt of l3] {\tiny $\InjL\Unit$};
        \node[ccell=$\vl_2$] (l2) [below=0pt of l3'] {};
        \draw[ptr] (l1.center) to ++\east to ++\south to (l3.east);
        \draw[ptr] (l2.center) to ++\east to ++\north to (l3'.east);
    \end{tikzpicture}
    are incomparable as $\vl_3, \vl_3'$ cannot be collapsed to a one cell.
\end{example}

\paragraph{General Unifiers}
 
While least unifiers do not necessarily exist, there exist unifiers which we call \emph{general}
with the property that when taken together jointly serve as the least element.

\begin{definition}\label{def:general-unifier}
    A unifier $\tau_1 \xsto{\me_1'} \tau' \xsfrom{\me_2'} \tau_2$
    for a span $\tau_1 \xsfrom{\me_1} \tau \xsto{\me_2} \tau_2$
    is called \emph{general} if $\concL{\me_1'},\concL{\me_2'}$ are jointly surjective and $\anyL{\me_1'},\anyL{\me_2'}$ are jointly surjective.
\end{definition}

Intuitively, a general unifier unifier does not introduce new concrete locations or variables that do not come from the span,
that is, $\mvarM_{\me_1'}$ is completely determined by $\mvarM_{\me_2}$ and vice versa.

\begin{example}
    Both unifiers from \cref{ex:single-mgu-counterexample} are most general unifiers.
\end{example}

The jointly surjective property of general unifiers guarantees that every unifier uniquely factors through exactly one isomorphic class of general unifiers.
It also implies that for a given span, there can only be a finite number of isomorphic classes of its general unifiers.

\paragraph{Unification and Local Independence Coproducts}\label{def:ligu}

Local independent coproducts in $\tempC$ (see \cref{sec:lic-tempC}), when equipped with instantiation data are themselves general unifiers.   
\begin{definition}[Local Independent General Unifiers]
    A unifier is called a local independent general unifier if its $\tinstU$ image is a local independent coproduct square.
\end{definition}

Consider a span $\tau_1 \xfrom{\me_1} \tau \xrto{\me_2} \tau_2$, such that $\me_2$ is variable preserving.
Since $\me_2$ does not instantiate any variables, the data of $\mvarM_{\me_2}$ does not have to agree with any cells in $\mvarM_{\me_1}$   
for the morphisms to commute.
As a result, the local independent general unifier exists and relates locations in $\tau_1$ and $\tau_2$ only from ones that come from $\tau$.

\begin{example}\label{ex:ligu}
    \renewcommand\ndsize{15}
    The square
    \begin{tikzcd}[remember picture, sep=small]
    |[alias=a1]|{\va_1 : \sortNat} \ar[r] \ar[d, hook]
        & {\vl_1, \va_3 : \sortNat} \ar[d, hook]
    \\
    {\va_1, \vl_2: \sortNat} \ar[r]
        & |[alias=lic]|{\vl_1, \vl_2, \va_3 : \sortNat}
    \end{tikzcd}
    from \cref{ex:tempC-lic} equipped with data is a local independent general unifier.
    \begin{tikzpicture}[overlay, remember picture] 
        \node[vcell=$\va_3$] (a3) [above=5pt of a1, xshift=30pt] {};
        \node[ccell=$\vl_1$] (l1') [below=0pt of a3] {\tiny $\InjR\bullet$};
        \node[ccell=$\vl_2$] (l2) [below=0pt of a1, xshift=-12pt] {\tiny $\InjL\Unit$};
        \node[ccell=$\vl_2$] (l2') [above=0pt of lic, xshift=20pt] {\tiny $\InjL\Unit$};
        \node[vcell=$\va_3$] (a3') [below=-3pt of lic, xshift=-35pt] {};
        \node[ccell=$\vl_1$] (l1'') [below=0pt of a3'] {\tiny $\InjR\bullet$};
        \draw[ptr] (l1'.east) to ++\east to ++\north to (a3.east);
        \draw[ptr] (l1''.east) to ++\east to ++\north to (a3'.east);
    \end{tikzpicture}
\end{example}

\subsection{Families of Template Instantiations}\label{sec:famC-tinstC}

The last building block required to describe the finitary monad is a means to talk about collections of template heaps related by a common set of public locations.
We model these collections using the $\famC\tinstC$ construction, 
representing finite families of templates and their instantiations:
\begin{itemize}
    \item Objects: A finite set $\vI$, and an $\vI$-indexed set of templates $\tlistr{\tau_\vi \in \objC\tinstC}{\vi \in \vI}$.
    \item Morphisms: A $\tlistr{\tau_\vi}{\vi \in \vI} \xto\phi \tlistr{\tau_\vj}{\vj \in \vJ}$ consists of a mapping $\indF\phi : \vJ \to \vI$ and a $\vJ$-indexed set of morphisms $\tlistr{\phi_\vj : \tau_{\indF\phi(\vj)} \to \tau_\vj}{\vj \in \vJ}$
\end{itemize}
Notice that $\famC\tinstC$ is the free finite product completion of $\tinstC$.
We have a forgetful functor $\indF\dash : \famC\tinstC \to \opC\finC$ that maps into the indexing sets.
Let $\Sigma$ range over objects in $\famC\tinstC$ and we use list notation to write objects in $\famC\tinstC$.

\begin{example}\label{ex:famC-tinstC-example}
\renewcommand\ndsize{18}
\[\begin{array}{lr}
    \meZ \eqdef
    \begin{tikzcd}[remember picture]
        |[alias=ndl]|{(\va : \sortNat)} \ar[r, "\va \mapsto \vl"'] & {(\vl : \sortNat)}
    \end{tikzcd}
    &\qquad 
    \meS \eqdef
    \begin{tikzcd}[remember picture]
        |[alias=ndr]|{(\va : \sortNat)} \ar[r, "\va \mapsto \vl_1"'] & {(\vl_1 : \sortNat, \va_2 : \sortNat)}
    \end{tikzcd}
\end{array}\]
\begin{tikzpicture}[overlay, remember picture] 
    \node[ccell=$\vl$] (l) [above=-4pt of ndl, xshift=32pt] {\tiny $\,\InjL\Unit\,$};
    \node[ccell=$\vl_1$] (l2) [above=-4pt of ndr, xshift=32pt] {};
    \node[vcell=$\va_2$] (a2) [above=0pt of l2] {};
    \draw[ptr] (l2.center) to ++\east to ++\north to (a2.east);
\end{tikzpicture}
$\meZ,\meS$ give us a morphism $[\va : \sortNat] \xto{\pair{\meZ}{\meS}} [(\vl : \sortNat); (\vl_1 : \sortNat, \va_2 : \sortNat)] \in \famC\tinstC$.
\end{example}

\begin{lemma}\label{lem:pushouts-fam-tinstC}
$\famC\tinstC$ has pushouts given by the general unifiers of $\tinstC$.
\end{lemma}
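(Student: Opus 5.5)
The plan is to reduce pushouts in $\famC\tinstC$ to a computation in $\tinstC$, using that $\famC\tinstC$ is the free finite product completion of $\tinstC$. A span in $\famC\tinstC$ is $\Sigma_1 \xfrom{\phi_1} \Sigma \xto{\phi_2} \Sigma_2$ with $\Sigma = \tlistr{\tau_\vi}{\vi\in\vI}$, $\Sigma_1 = \tlistr{\tau_\vj}{\vj \in \vJ_1}$ and $\Sigma_2 = \tlistr{\tau_\vk}{\vk \in \vJ_2}$. The index maps go $\indF{\phi_1} : \vJ_1 \to \vI$ and $\indF{\phi_2} : \vJ_2 \to \vI$. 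Since $\indF\dash$ lands in $\opC\finC$, the indexing set of the pushout should be the pullback $\vJ_1 \times_\vI \vJ_2$ in $\finC$. Each pair $(\vj,\vk)$ with $\indF{\phi_1}(\vj) = \indF{\phi_2}(\vk) = \vi$ gives a span $\tau_\vj \xfrom{(\phi_1)_\vj} \tau_\vi \xto{(\phi_2)_\vk} \tau_\vk$ in $\tinstC$. I would define the pushout object as
\[ \Sigma_1 +_\Sigma \Sigma_2 \eqdef \tlistr{\tau'_{(\vj,\vk,g)}}{(\vj,\vk)\in \vJ_1\times_\vI\vJ_2,\; g \in \GU{(\phi_1)_\vj}{(\phi_2)_\vk}}, \]
where $\GU{\me_1}{\me_2}$ is a chosen set of representatives of the isomorphism classes of general unifiers of the span, and $\tau'_{(\vj,\vk,g)}$ is the apex of $g$. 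This index set is finite: $\vJ_1, \vJ_2$ are finite, and the remark after \cref{def:general-unifier} says each span has only finitely many classes of general unifiers. The coprojections $\Sigma_1 \to \Sigma_1 +_\Sigma \Sigma_2$ and $\Sigma_2 \to \Sigma_1 +_\Sigma \Sigma_2$ have index maps $(\vj,\vk,g)\mapsto \vj$ and $(\vj,\vk,g) \mapsto \vk$, with components the two legs of $g$. The square commutes componentwise because $g$ is a unifier.

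For the universal property, take a cocone $\Sigma_1 \xto{\psi_1} \Theta \xfrom{\psi_2} \Sigma_2$ with $\Theta = \tlistr{\theta_\vm}{\vm \in \vM}$. Commutativity $\psi_1\comp\phi_1 = \psi_2 \comp\phi_2$ says two things for each $\vm$. First, $\vj \eqdef \indF{\psi_1}(\vm)$ and $\vk \eqdef \indF{\psi_2}(\vm)$ lie over the same $\vi$. Second, $(\psi_1)_\vm, (\psi_2)_\vm$ form a unifier of $(\phi_1)_\vj, (\phi_2)_\vk$. By the stated property, this unifier factors uniquely through exactly one general unifier class $g$, via a unique $\vh_\vm$. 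So the mediating morphism has index map $\vm \mapsto (\vj,\vk,g)$ and components $\vh_\vm$. Uniqueness in $\famC\tinstC$ then follows from two facts. The index map is forced, since $\vj,\vk$ are forced and $g$ is unique. The components are forced by the uniqueness of the factorisation.

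The main obstacle is justifying the factorisation claim itself, which the paper only asserts: every unifier factors uniquely through exactly one general-unifier class. If I had to prove it, I would factor a unifier $\tau_1 \xto{\me_1'} \tau' \xfrom{\me_2'} \tau_2$ through its joint image, in two steps.
\begin{enumerate}[(i)]
\item Take $\anyL{\tau''} \eqdef \img{\anyL{\me_1'}} \cup \img{\anyL{\me_2'}}$. Its concrete part is the union of the images of the concrete parts. Its instantiation data is inherited from $\mvarM_{\me_i'}$, which involves only locations in this image.
\item Show that the inclusion $\tau'' \to \tau'$ carries the remaining data.
\end{enumerate}
Uniqueness holds because all morphisms are monic in the underlying $\tempC$, and joint surjectivity pins down the map on locations. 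Data compatibility is automatic from commutation. The delicate point is that the data in step (i) is well defined when a concrete location lies in both images; commutation of the square guarantees agreement there. I expect this bookkeeping to be where most of the care is needed.
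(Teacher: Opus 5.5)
Your proposal is correct and takes essentially the paper's approach. The pushout is indexed by the matching pairs $(j,k)$ over a common root, each paired with a chosen representative from the finitely many general-unifier classes of the induced span, and the universal property rests on every unifier factoring uniquely through exactly one such class. The paper only asserts that factorisation property, so your joint-image argument is a welcome supplement rather than a different route: commutation of the square is what keeps the inherited instantiation data inside the joint image and makes the two legs' data agree outside it.
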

\begin{proofsketch}
    The pushout of
    \begin{tikzcd}[sep=small]
        {\Sigma} \ar[r, "\psi"] \ar[d, "\phi" xshift=2pt]
            & {\Sigma_2} \ar[d, "\ext^\psi\phi" xshift=2pt]
        \\
        {\Sigma_1} \ar[r, "\ext^\phi\psi"']
            & {\Sigma_1 \liplus\Sigma \Sigma_2}
    \end{tikzcd}
    selects the (representative) general unifiers for every pair of matching indices 
    $\{ \pair\vi\vj \in \indF{\Sigma_1} \times \indF{\Sigma_2} : \indF\phi(\vi) = \indF\psi(\vj) \}$. 
    That is, for every matching pair $\vi,\vj$, we have a finite number of representative general unifiers $\Sigma_{\GU{\phi_\vi}{\psi_\vj}}$ with morphisms 
    $\tau_{\indF\phi(\vi)} \xsto{\ext^{\phi_\vi}\psi_\vj} \Sigma_{\GU{\phi_\vi}{\psi_\vj}} \xsfrom{\ext^{\psi_\vj}\phi_\vi} \tau_{\indF\psi(\vj)}$ in $\famC\tinstC$.
    The pushout is then given by the pairing of all these morphisms
    \[ \Sigma_1 \liplus\Sigma \Sigma_2 \eqdef 
        \Prod{\pair\vi\vj \in \indF{\Sigma_1} \times \indF{\Sigma_2} \,:\, \indF\phi(\vi) = \indF\psi(\vj)} \Sigma_{\GU{\phi_\vi}{\psi_\vj}} \]
    and $\ext^\phi\psi \eqdef \langle \hdots, \vi,\vj \mapsto \ext^{\phi_\vi}\psi_\vj, \hdots \rangle$, 
        $\ext^\psi\phi \eqdef \langle \hdots, \vi,\vj \mapsto \ext^{\phi_\vj}\psi_\vi, \hdots \rangle$.
\end{proofsketch}

We further characterise collections in $\famC\tinstC$ according to how the individual morphisms in the collections interact.
\begin{definition}
    A morphism $\phi : \Sigma \to \Sigma'$ is said to be \emph{connected} if every $\phi_{\vj \in \indF{\Sigma'}}$ is connected.
\end{definition}
\begin{example}
    The morphism from \cref{ex:famC-tinstC-example} is connected as both $\meZ, \meS$ do not introduce unreachable locations.
\end{example}

\begin{definition}
    A morphism $\phi : \Sigma \to \Sigma'$ is said to be \emph{non-overlapping} if 
    every distinct pair of morphisms with the same root, 
    $\tau_\vi \xfrom{\phi_\vi} \tau_\vk \xto{\phi_\vj} \tau_\vj$, 
    do not unify.
    That is, for every $\vi, \vj \in \indF{\Sigma'}$ s.t. $\vi \neq \vj$:
    \[ 
        \indF\phi(\vi) = \indF\phi(\vj) \implies 
            \AllTwo{\tau_\vi \xto{\me_\vi} \hat\tau}{\tau_\vj \xto{\me_\vj} \hat\tau} 
        \me_\vi \comp \phi_\vi \neq \me_\vj \comp \phi_\vj
    \]
\end{definition}

\def\meOne{\me_{1}}
\begin{example}
\renewcommand\ndsize{17}
    The morphism from \cref{ex:famC-tinstC-example} is non-overlapping since any extension of $\meZ$ will have $\InjL\Unit$ at $\vl$ whereas $\meS$ will have $\InjR{\va_2}$.
    On the other hand, adding
\[ \meOne \eqdef
\begin{array}{c}
    \\
    \begin{tikzcd}[remember picture]
        |[alias=ndtt]|{(\va : \sortNat)} \ar[r, "\va \mapsto \vl_1"'] & {(\vl_1 : \sortNat, \vl_2 : \sortNat)}
    \end{tikzcd}
\end{array} \] 
to the collection breaks the non-overlapping property.
\begin{tikzpicture}[overlay, remember picture] 
    \node[ccell=$\vl_1$] (l1) [above=-4pt of ndtt, xshift=32pt] {};
    \node[ccell=$\vl_2$] (l2) [above=0pt of l1] {\tiny $\,\InjL\Unit\,$};
    \draw[ptr] (l1.center) to ++\east to ++\north to (l2.east);
\end{tikzpicture}
\end{example}

Epimorphisms in $\famC\tinstC$ are exactly the connected and non-overlapping morphisms.

\begin{definition}
We say that an epi $\phi : \Sigma \to \Sigma'$ is \emph{complete} if for every morphism $\me : \Sigma \to 1_\vw$ 
there exists a $\tilde\me : \Sigma' \to 1_\vw$ such that
\begin{tikzcd}[sep=small]
    {\Sigma} \ar[rr, "\me"] \ar[dr, "\phi"']
            && {1_\vw}
    \\
        & {\Sigma'} \ar[ur, "\exists \tilde\me"']
\end{tikzcd}
\end{definition}
The epi property of $\phi$ guarantees that the induced morphisms are unique, and complete morphisms are closed under composition and are stable under pushouts.

\begin{example}
    The collection $\pair\meZ\meS$ is complete: a morphism $\va : \sortNat \xto\me \inclITI\vw$ is a heap with $\vw$-many concrete locations and a public location given by $\anyL\me(\va)$. 
    Since $\pair\meZ\meS$ covers both possible inhabitants of $\anyL\me(\va)$, every such $\me$ can be factored through either $\meZ$ or $\meS$.
\end{example}


\section{Finitary FGLS Monad}

We now connect the combinatorial machinery developed in \cref{sec:theory-templates} with the goal of describing the finitary FGLS computations.
The central idea is that a finitary computation can be expressed using finitely-many stateful values with variables, each assigned to one stencil in a complete cover.
We first formally define \emph{stencils} and \emph{covers}~(\cref{sec:covers}),
we then define collections of stateful values using \emph{resultlets}~(\cref{sec:evaluation}),
and link covers and resultlets to local state computations using \emph{realisations} in (\cref{sec:evaluation}).
We finally present the finitary monad in (\cref{sec:fin-monad}).

\subsection{Covers and Matching}\label{sec:covers}

We model stencils (\cref{sec:overview-fin-fgls}), that is collections of heaps with a common set of public locations as morphisms in $\tinstC$.
\begin{definition}[Stencil Functor]
    \begin{align*}
    \stencilF\vw & = \homF\tinstC{?^\vw}\dash \iso \homF\worldC\vw{\anyL\dash} \times \theapF : \tinstC \to \setC
    \end{align*}
    with the obvious functorial action.
    A stencil $?^\vw \to \tau$ that does not have unreachable locations is said to be connected, \ie it is a connected morphism (\cref{sec:connectivity}).
\end{definition}

We call finite collections of stencils with a common set of public locations as \emph{covers}.
\begin{definition}[Cover]
    For a finite family of templates $\Sigma \in \famC\tinstC$, and world $\vw \in \objC\worldC$,
    a \emph{cover} $\cov$ over $\pair\Sigma\vw$ is an indexed family of stencils $\Prod{\tau_\vi \in \Sigma} \stencilF\vw{\tau_\vi}$,
    that is, it is a morphism $\cov :\, ?^\vw \to \Sigma \in \famC\tinstC$.
\end{definition}
\begin{example}
    $\pair\meZ\meS$ from \cref{ex:famC-tinstC-example} is a cover over $\pair{(\va : \sortNat)}{[(\vl : \sortNat); (\vl_1 : \sortNat, \va_2 : \sortNat)]}$. 
\end{example}

We inherit the terminology from \cref{sec:famC-tinstC}, and talk about connected, non-overlapping, complete covers,
and we further describe three operations on covers:
(i) hiding some public locations, (ii) rebasing over a different world and (iii) intersecting two covers.  

\begin{definition}[Hiding Locations in a Cover]
For a renaming $\sigma : \vw \injto \vw'$, and a cover $\cov$ over $\pair{\vw'}\Sigma$, we have a cover over $\pair\vw\Sigma$
by precomposition
\[ \hideSt_\sigma(\cov) \eqdef\; ?^\vw \xto{?^\sigma} ?^{\vw'} \xto \cov \Sigma \]
\end{definition}
Hiding does not preserve connectivity, non-overlapping, or completeness in general.

\begin{definition}[Rebasing a cover]\label{def:cov-rebase}
For a $\sigma : \vw \injto \vw'$ and cover $\cov$ over $\pair\vw\Sigma$,
we rebase $\cov$ to a cover $\cov^\sigma$ over $\pair{\vw'}{\Sigma^\sigma}$ 
using the pushout 
\begin{tikzcd}
    {?^\vw}
        \ar[d, "\cov"'] \ar[r, "?^\sigma"]
        & {?^{\vw'}}
            \ar[d, "\cov^\sigma"]
    \\
    {\Sigma} \ar[r, "\rebase^\sigma"' yshift=-2pt]
        & {\Sigma^\sigma} \posquare{ul}
\end{tikzcd}
in $\famC\tinstC$ (see \cref{lem:pushouts-fam-tinstC}).
\end{definition}
\noindent
Cover rebasing preserves connectivity, epimorphic covers, and completeness.

\begin{example}
\renewcommand\ndsize{17}
Consider a singleton cover $(\va : \sortNatTwo) \xto{\va \mapsto \vl_3} [(\vl_1, \vl_2 : \sortNat, \vl_3 : \sortNatTwo)]$ with data
\begin{tikzpicture}[baseline=(l3.base)]
    \node[ccell=$\vl_1$] (l1) {\tiny $\,\InjL\Unit\,$};
    \node[ccell=$\vl_3$] (l3) [below=0pt of l1] {};
    \node[ccell=$\vl_2$] (l2) [below=0pt of l3] {\tiny $\,\InjL\Unit\,$};
    \draw[ptr] (l3.center) to ++\east to ++\north to (l1.east);
    \draw[ptr] (l3.center) to ++\east to ++\south to (l2.east);
\end{tikzpicture}
and we rebase it over $(\va : \sortNatTwo) \inclto (\va : \sortNatTwo, \vb : \sortNat)$.
The resulting cover has three stencils
\[(\va_1, \va_2 : \sortNat) \to [(\vl_1, \vl_2 : \sortNat, \vl_3 : \sortNatTwo), (\vl_1, \vl_2 : \sortNat, \vl_3 : \sortNatTwo), (\vl_1, \vl_2, \va_4 : \sortNat, \vl_3 : \sortNatTwo)] \]
\begin{center}
\hspace{25pt}
\begin{tikzpicture}[baseline=(l1.base)]
        \node[ccell=$\vl_1$] (l1) {\tiny $\,\InjL\Unit\,$};
        \node[ccell=$\vl_3$] (l3) [below=0pt of l1] {};
        \node[ccell=$\vl_2$] (l2) [below=0pt of l3] {\tiny $\,\InjL\Unit\,$};
        \node [below=5pt of l2] () {$\va \mapsto \vl_1, \vb \mapsto \vl_1$};
        \draw[ptr] (l3.center) to ++\east to ++\north to (l1.east);
        \draw[ptr] (l3.center) to ++\east to ++\south to (l2.east);
\end{tikzpicture}
\hspace{10pt}
\begin{tikzpicture}[baseline=(l1.base)]
        \node[ccell=$\vl_1$] (l1) {\tiny $\,\InjL\Unit\,$};
        \node[ccell=$\vl_3$] (l3) [below=0pt of l1] {};
        \node[ccell=$\vl_2$] (l2) [below=0pt of l3] {\tiny $\,\InjL\Unit\,$};
        \node [below=5pt of l2] () {$\va \mapsto \vl_1, \vb \mapsto \vl_2$};
        \draw[ptr] (l3.center) to ++\east to ++\north to (l1.east);
        \draw[ptr] (l3.center) to ++\east to ++\south to (l2.east);
\end{tikzpicture}
\hspace{10pt}
\begin{tikzpicture}[baseline=(l3.base)]
        \node[ccell=$\vl_1$] (l1) {\tiny $\,\InjL\Unit\,$};
        \node[ccell=$\vl_3$] (l3) [below=0pt of l1] {};
        \node[ccell=$\vl_2$] (l2) [below=0pt of l3] {\tiny $\,\InjL\Unit\,$};
        \node[vcell=$\va_4$] (a4) [below=0pt of l2] {};
        \node [below=5pt of a4] () {$\va \mapsto \vl_1, \vb \mapsto \va_4$};
        \draw[ptr] (l3.center) to ++\east to ++\north to (l1.east);
        \draw[ptr] (l3.center) to ++\east to ++\south to (l2.east);
\end{tikzpicture}
\end{center}
\end{example}

\begin{definition}[Cover Intersection]
Given a cover $\cov$ over $\pair\vw\Sigma$ and a cover $\cov'$ over $\pair\vw{\Sigma'}$ 
we define their intersection $\cov \pltimes\vw \cov'$ as the biggest cover that is included in both of them.
It is given by the pushout
\begin{tikzcd}[sep=small]
    {?^\vw}
        \ar[dd, "\cov"'] \ar[rrr, "\cov'"] \ar[ddrrr, "\cov \pltimes\vw \cov'" xshift=-3pt]
        &&& {\Sigma'}
            \ar[dd, "\ext^{\cov'}\cov"] 
    \\\\
    {\Sigma}
        \ar[rrr, "\ext^\cov\cov'"']
        &&& {\Sigma \pltimes{?^\vw} \Sigma'}
\end{tikzcd}
\end{definition}

Finally, we use the notion of covers to describe sorts that can be finitely enumerated.
\begin{definition}[Sort Enumeration]\label{def:sort-enumeration}
    A cover $?^{\va : \vC} \xto\cov \Sigma$ is called an \emph{enumeration} of $\vC$ if 
    it is complete and every $\cov_{\vi \in \indF\Sigma} : \va : \vC \to \tau_\vi$ instantiates $\va$, \ie $(\va : \vC) \pltimes{\anyL{\tau_\vi}} \varL{\tau_\vi} = \emptyset$.
\end{definition}
We call a sort \emph{finite} if it has an enumeration and every sort that has a full ground type interpretation as in \cref{sec:pl-lamref} is finite.
\begin{example}
\renewcommand\ndsize{16}
    The only enumeration of $\ctype\sortZero \eqdef \tyZero$ is the empty cover $(\va : \sortZero) \to \emptyset$,
    for $\ctype\sortUnit \eqdef \tyUnit$ is the singleton cover $(\va : \sortUnit) \xto{\va \mapsto \vl} [(\vl : \sortUnit)]$ with unit data.
    For $\sortLBool$ an enumeration has stencils
    $(\va : \sortLBool) \to [(\vl : \sortBool), (\vl_1 : \sortLBool, \va_2 : \sortBool), (\vl_1 : \sortLBool, \va_2 : \sortBool, \va_3 : \sortLBool)]$
    with data given by
\begin{tikzpicture}[baseline=(l.base)]
        \node[ccell=$\vl$] (l) {\tiny $\,\InjL\Unit\,$};
        \node [below=5pt of l] () {$\va \mapsto \vl$};
\end{tikzpicture},
\begin{tikzpicture}[baseline=(l1.base)]
    \node[vcell=$\va_2$] (a2) [] {};
    \node[ccell=$\vl_1$] (l1) [below=0pt of a2] {\tiny $\pair\bullet\bullet$};
    \node [below=5pt of l1] () {$\va \mapsto \vl_1$};
    \draw[cptr] ($ (l1.center) + (-3pt,-1pt) $) to ++\wwest to ++\north to ++\north to ++\eeast to ++(0pt,-4pt);
    \draw[cptr] ($ (l1.center) + (2pt,0pt) $) to ++\east to ++\south to ++\west to ++(0pt,4pt);
\end{tikzpicture}
and
\begin{tikzpicture}[baseline=(a3.base)]
    \node[vcell=$\va_2$] (a2) [] {};
    \node[ccell=$\vl_1$] (l1) [below=0pt of a2] {\tiny $\pair\bullet\bullet$};
    \node[vcell=$\va_3$] (a3) [below=0pt of l1] {};
    \node [below=5pt of a3] () {$\va \mapsto \vl_1$};
    \draw[cptr] ($ (l1.center) + (-3pt,-1pt) $) to ++\wwest to ++\north to ++\north to ++\eeast to ++(0pt,-4pt);
    \draw[cptr] ($ (l1.center) + (2pt,0pt) $) to ++\east to ++\south to ++(-6pt,0pt);
\end{tikzpicture}
\end{example}

\subsection{Evaluation}\label{sec:evaluation}

We call morphisms that are both variable preserving and reflecting (see \cref{sec:vpres-morph,sec:vrefl-morph}) as \emph{allocation morphisms}
to model that allocation only introduces new concrete locations. 
\begin{definition}[Allocation Morphisms]
    A morphism $\rho : \tau_1 \to \tau_2 \in \tempC$ is an allocation morphism if it is both variable preserving and reflecting, 
    and we denote it by $\ato$ arrow.
\end{definition}
For a $\rho : \tau_1 \ato \tau_2$, its variables $\varL{\tau_2}$ are exactly $\varL{\tau_1}$, and so
in particular, a morphism with a concrete template as domain $1_\vw \ato \tau'$ implies that its codomain is also concrete: $\tau' = 1_{\concL{\tau'}}$.

Allocation morphisms are reflected by local independent coproducts (as in \cref{lem:lic-vpres-refl}), 
and further, identities are allocation morphisms and are closed under composition.
Therefore, consider $\allocC\tempC \subseteq \tempC$, the wide subcategory of $\tempC$ that has objects $\objC\tempC$ and only allocation morphisms between them.
Let $\sliceC\tau{\allocC\tempC}$ be the set of allocation morphisms with domain $\tau$
\[ \tau \xato\rho \hat\tau \in \sliceC\tau{\allocC\tempC} \eqdef \Coprod{\hat\tau \in \objC\tempC} \homF{\allocC\tempC}\tau{\hat\tau}  \]

\paragraph{Stateful Values}

Recall that stateful values (\cref{sec:overview-fgls}) represent return values that live in the context of a heap.
They are modelled using the monad $\hideMon : \pinstC \to \pinstC$ given using a coend 
\[ \hideMon(\vA)(\vw) = \Coend{\vw \injto \vw' \in \commaC\vw\instU} \vA(\vw') \iso \Coprod{\vw \xinjto\sigma \vw' \in \commaC\vw\instU} {\vA(\vw')}_{\big/\sim} \]
$\commaC\vw\instU$ is the comma category over world $\vw \in \objC\worldC$ and $\tinstU : \instC \to \worldC$. 
The quotient $/\sim$ ensures that only observable allocations are relevant which is what models garbage collection.
In the setting of FGLS, we take $\vA = \vX \odot \heapF$ for some $\vX \in \pworldC$.
We refer the reader to \cref{sec:overview-fgls} for some examples.

\paragraph{Resultlets}
Recall that collections of heaps with a common set of public locations are modelled using stencils that have location variables.
We similarly generalise stateful values with location variables to describe collections of stateful values with a common heap footprint.
We refer to these collections of stateful values as \emph{resultlets}.
\begin{definition}
    For a $\vX : \worldC \to \setC$ and template $\tau$, a resultlet $\vr$ over $\pair\tau\vX$ is a
    \[ \vr \in \typeRes\vX\tau \eqdef \Coprod{\tau \xato\rho \tau' \in \objC{\sliceC\tau{\allocC\tempC}}} \vX\anyL{\tau'} \times \theapF(\tau') \]
The data of a resultlet are
\begin{itemize}
    \item An allocation morphism $\tau \xato\rho \tau'$ which allocates $\concL{\tau'} \ominus \concL\rho$ concrete locations.
    \item An $\vX$-valued return value $\vx : \vX\anyL{\tau'} \iso \vX(\concL{\tau'} \oplus \varL\tau)$.
    \item A template heap $\mvarM : \theapF(\tau')$, that defines locations $\concL{\tau'}$ that have access to both $\concL{\tau'}, \varL\tau$.
\end{itemize}
\end{definition}

When $\tau$ is a concrete world $1_\vw$, a resultlet has exactly the same data as a stateful value in the original hiding monad. 
Thus resultlets generalise stateful values from concrete worlds to templates with variables.
Specifically, we have a map $\quot^\vX_\vw : \typeRes\vX(1_\vw) \to \hideMon(\vX\tinstU \times \heapF)(\vw)$ for every world $\vw$:
\[ \inj_{1_\vw \xto\rho \tau'} \pair\vx\mvarM \mapsto \eqclass{\vw \xto{\concL\rho} \concL{\tau'}}\pair\vx\mvarM \]

Accordingly, we lift the actions on morphisms of the monad $\hideMon$ to resultlets.
This lifted action is not itself functorial, since it does not account for the quotienting built into $\hideMon$.
Nevertheless, this is harmless: functoriality is only required at concrete worlds, and it is recovered by the map $\quot$.

\begin{definition}[Covariant action on morphisms in $\pworldC$]
    A $\alpha : \vX \To \vY$, and $\tau \in \objC\tempC$ induces $\typeRes(\alpha)_\tau : \typeRes\vX\tau \to \typeRes\vY\tau$ defined as
    \[ \inj_{\tau \xato{\rho'} \tau'}\pair\vx\mvarM \mapsto \inj_{\rho'}\pair{\alpha_{\anyL{\tau'}}(\vx)}\mvarM \]
    It changes a resultlet over $\pair\tau\vX$ to one over $\pair\tau\vY$.
\end{definition}

\begin{definition}[Covariant action on morphisms in $\tinstC$]
For a $\me : \tau_1 \to \tau_2 \in \tinstC$ and a resultlet $\inj_{\tau_1 \xsato{\rho'} \tau'} \pair{\vx'}{\mvarM'}$,
we first lift $\rho'$ to a morphism $\me'$ in $\tinstC$ by selecting the allocation data: $\mvarM_{\me'} \eqdef \dheapF(\concL{\rho'}, 1)(\mvarM')$.
We then consider the local independent general unifier
\begin{tikzcd}[sep=small]
    {\tau_1} 
        \ar[d, "\me"' xshift=-2pt] \ar[-alloctip, rr, "\me'"] 
        && {\tau'} 
                \ar[d, "\ext^{\me'}\me" xshift=2pt] 
    \\
    {\tau_2} 
        \ar[-alloctip, rr, "\ext^\me\me'"'] 
        && {\tau_2 \liplus{\tau_1} \tau'}
\end{tikzcd} which exists because $\me'$ is variable preserving.
Define the covariant action $\typeRes\vA\me : \typeRes\vA{\tau_1} \to \typeRes\vA{\tau_2}$ as
$\inj_{\rho'}\pair\vx\mvarM \mapsto \inj_{\tinstU(\ext^\me{\me'})} \pair{\vX\anyL{\ext^{\me'}\me}(\vx)}{\theapF(\ext^{\me'}\me)(\mvarM)}$
\end{definition}

\begin{definition}[Contravariant action on morphisms in $\allocC\tempC$]
    Every allocation morphism $\tau_2 \xato{\rho'} \tau_1$ in $\tempC$ induces $\hideRes\vX{\rho'} : \typeRes\vX\tau_1 \to \typeRes\vX\tau_2$
    given by $\inj_\rho \pair\vx\mvarM \mapsto \inj_{\rho \comp \rho'} \pair\vx\mvarM$.
    We call this action \emph{hiding} as it forgets some of the public locations.
\end{definition}



We call a finite collection of resultlets as an \emph{evaluation}.
\begin{definition}[Evaluation]
    For a $\vX : \worldC \to \setC$, and $\Sigma \in \famC\tinstC$, an \emph{evaluation} is an
    $\Sigma$-indexed family of resultlets, $\ev \in \Prod{\tau_\vi \in \Sigma} \typeRes\vX{\tau_\vi}$.
\end{definition}

In summary, resultlets play the role of compatible stateful values, and an evaluation is a finite family of resultlets, one for each stencil of a cover. 
The pair consisting of a cover and an evaluation can therefore be understood as a finite decision tree: 
the cover determines which case of the input heap we are in, and the evaluation specifies the corresponding output behaviour.
Formally, we link covers and evaluations to local state computations using the notion of realisation.

\begin{definition}[Realisation]
A complete cover $\cov$ over $\pair\vw\Sigma$ and an $\vX$-valued evaluation $\ev$ over $\Sigma$, 
realises $\phi \in \monFLS\vX\vw$, notation $\cov,\ev \Vdash \phi$,
if for every $\sigma : \vw \injto \vw', \mvarM : \heapF(\vw')$, 
\[ (\proj_\sigma \phi)(\mvarM) = \vq^\vX_{\vw'}(\typeRes\vX(\tilde\me_\star)(\proj_{\indF{\tilde\me}(\star)} \ev)) \]  
where morphism $\tilde\me : \Sigma \to 1_{\vw'}$ is given by completeness: 
\begin{tikzcd}
    {?^{\vw}} \ar[r, "\cov"] \ar[rr, bend right=20, "\pair\sigma\mvarM"'] & {\Sigma} \ar[r, dashed, "\exists\tilde\me"] & {1_{\vw'}}
\end{tikzcd}
\end{definition}

\subsection{The Monad}\label{sec:fin-monad}

We are now in a position to describe the finitary fragment of the FGLS monad.

\begin{definition}[Finitary FGLS Monad]
\[\begin{array}{r@{\;}l@{\,}l@{\,}}
    \monFLSFin         :& \multicolumn{2}{l}{\pworldC \to \pworldC} \\
    \monFLSFin\vX\vw   \eqdef& \{ \phi \in \monFLS\vX\vw :& 
            \exists \Sigma \in \famC\tinstC, \textit{complete $\cov$ over $\pair\vw\Sigma$, $\vX$-valued $\ev$ over $\Sigma$ s.t. $\cov,\ev \Vdash \phi$} \}
\end{array}\]
\end{definition}

A finitary $\phi$ is one that can be realised by a finite family of templates, a complete cover and evaluation.
The existential given here is mere existence, so the actual cover and evaluation needed to realise a finitary $\phi$ is not observable, nor is it unique.

The following two theorems give the functorial and monad structure of $\monFLSFin$ by lifting it through the actions on $\monFLS$. 
One can easily show that monad $\monFLSFin$ is strong by a similar lifting of the strength for $\monFLS$, and we therefore have a strong monad morphism $\iota : \monFLSFin \To \monFLS$ given by the inclusion.
\Cref{thm:fin-fgls-monad-finitariness} lastly shows that $\monFLSFin$ is indeed finitary.

\begin{theorem}[Functor Structure]
The action of $\monFLS$ on morphisms $\hat\sigma : \vw \to \hat\vw \in \worldC$ and $\alpha : \vX \To \vY \in \pworldC$
applied to finitary computations preserve finitarity.
\begin{center}
\begin{tikzcd}
    {\monFLSFin\vX\vw}
        \ar[rr, dashed, "\exists\monFLSFin\alpha_\vw"] \ar[d, hook]
            && {\monFLSFin\vY\vw} \ar[d, hook]
    \\
    {\monFLS\vX\vw}
        \ar[rr, "\monFLS\alpha_\vw"] 
            && {\monFLS\vY\vw}
\end{tikzcd}
\hspace{10pt}
\begin{tikzcd}
    {\monFLSFin\vX\vw}
    \ar[rr, dashed, "\exists\monFLSFin\vX{\hat\sigma}"] \ar[d, hook]
            && {\monFLSFin\vX\hat\vw} \ar[d, hook]
    \\
    {\monFLS\vX\vw}
        \ar[rr, "\monFLS\vX{\hat\sigma}"] 
            && {\monFLS\vX\hat\vw}
\end{tikzcd}
\end{center}
\end{theorem}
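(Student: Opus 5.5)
For each of the two actions we exhibit an explicit realiser for the image, built from a given realiser $\cov,\ev \Vdash \phi$ of a finitary $\phi \in \monFLSFin\vX\vw$.

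\emph{Action on $\alpha : \vX \To \vY$.} We keep the cover and change only the evaluation. Define $\ev' \eqdef \tlistr{\typeRes(\alpha)_{\tau_\vi}(\proj_\vi\ev)}{\vi \in \indF\Sigma}$ and claim $\cov,\ev' \Vdash \monFLS\alpha_\vw(\phi)$. Unfolding the definitions, $\monFLS\alpha$ postcomposes $\phi$ with $\hideMon(\alpha\tinstU \times \heapF)$. It then suffices to check two commutations for every $\me : \tau \to 1_{\vw'}$:
\begin{itemize}
\item $\typeRes(\alpha)$ commutes with the covariant action $\typeRes\vX(\me)$. This holds because $\typeRes\vX(\me)$ changes the value only through $\vX\anyL{\ext^{\me'}\me}$, and $\alpha$ is natural.
\item $\quot^\vY_{\vw'} \comp \typeRes(\alpha)_{1_{\vw'}} = \hideMon(\alpha\tinstU\times\heapF)_{\vw'} \comp \quot^\vX_{\vw'}$. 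This is immediate from the definition of $\quot$ on representatives, since $\hideMon$ acts componentwise on the coend.
\end{itemize}
The factorisation $\tilde\me$ is the same for both, because the cover is unchanged.

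\emph{Action on $\hat\sigma : \vw \to \hat\vw$.} The computation $\monFLS\vX\hat\sigma(\phi)$ at $\sigma' : \hat\vw \injto \vw'$ and $\mvarM$ behaves as $\phi$ at $\sigma'\comp\hat\sigma$ and $\mvarM$. So we rebase: take the pushout $\cov^{\hat\sigma} : ?^{\hat\vw} \to \Sigma^{\hat\sigma}$ along $?^{\hat\sigma}$ from \cref{def:cov-rebase}, which is complete because rebasing preserves completeness. Set
\[
\ev^{\hat\sigma}_\vj \eqdef \typeRes\vX(\rebase^{\hat\sigma}_\vj)(\proj_{\indF{\rebase^{\hat\sigma}}(\vj)}\ev),
\]
using the covariant action on $\tinstC$-morphisms.

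To verify realisation, take $\pair{\sigma'}{\mvarM} : ?^{\hat\vw} \to 1_{\vw'}$. Completeness of the rebased cover gives $\tilde\me : \Sigma^{\hat\sigma} \to 1_{\vw'}$. Then $\tilde\me \comp \rebase^{\hat\sigma}$ is a factorisation of $\pair{\sigma'\comp\hat\sigma}{\mvarM}$ through $\cov$. Since $\cov$ is epi, this factorisation is the unique one used to realise $\phi$.

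The remaining equation is
\[
\typeRes\vX(\tilde\me_\star)\bigl(\typeRes\vX(\rebase_\star)(r)\bigr) \text{ and } \typeRes\vX(\tilde\me_\star\comp\rebase_\star)(r) \text{ have the same image under } \quot.
\]

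This is the main obstacle, because the covariant action of $\typeRes$ is not functorial on the nose. Our plan is to prove a compositionality lemma: local independent general unifiers paste. Given $\me' : \tau_1 \ato \tau'$ (variable preserving) and composable $\me_1, \me_2$, the composite of two local independent general unifier squares is again the local independent general unifier of $\me_2\comp\me_1$ with $\me'$. This holds up to a unique isomorphism on the codomain, since both are universal independent squares in $\tempC$ (using \cref{lem:lic-vpres-refl} for allocation morphisms) and the data are determined by the span by joint surjectivity.

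Consequently the two resultlets over $1_{\vw'}$ differ only by an isomorphism of the allocated part. Hence they are identified by $\quot^\vX_{\vw'}$, which suffices because functoriality is only needed at concrete worlds. Should the pasting fail strictly, the fallback is to show directly that the two resultlets are related by the coend relation $\sim$, via a renaming of the fresh locations.
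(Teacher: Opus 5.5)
Your proposal is correct and follows the paper's proof. For $\alpha$ you keep the cover and apply $\typeRes(\alpha)$ componentwise to the evaluation, and for $\hat\sigma$ you rebase the cover along $?^{\hat\sigma}$ and push each resultlet along $\rebase_\vk$ via the covariant action of $\typeRes$, which is exactly what the paper does, while your added checks (commutation of $\typeRes(\alpha)$ with the covariant action and with $\quot$, and functoriality of that action up to isomorphism, hence up to $\quot$ at concrete worlds, by pasting local independent general unifiers) supply details the paper's sketch only asserts.
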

\begin{proofsketch}
    Consider a finitary $\phi \in \monFLSFin\vX\vw$, with $\Sigma, \cov, \ev$.

    $\monFLS\alpha_\vw\phi$ is given at $\sigma : \vw \injto \vw', \vh : \heapF(\vw')$ as
    $\hideMon(\alpha \odot \heapF)_{\vw'}((\proj_\sigma \phi)(\vh))$.
    Then to show that $\monFLS\alpha_\vw\phi$ is finitary take the same cover $\cov$,
    and define a new evaluation $\ev'$ at $\tau_\vi \in \Sigma$ as
    $\proj_{\tau_\vi} \ev' \eqdef \typeRes(\alpha)_{\tau_\vi}(\proj_{\tau_\vi} \ev)$.

    $\monFLS\vX{\hat\sigma}$ is given at $\sigma' : \hat\vw \injto \vw', \vh : \heapF(\vw')$ as
    $(\proj_{\sigma' \comp \hat\sigma} \phi)(\vh)$.
    To show that $\monFLS\vX{\hat\sigma}(\phi)$ is finitary, we first rebase $\cov$ using \cref{def:cov-rebase}
    which gives a complete cover $\cov^\sigma$ over $\pair{\hat\vw}{\Sigma^\sigma}$.
    And define evaluation $\ev^\sigma$ at $\vk \in \indF{\Sigma^\sigma}$ as $\typeRes(\vX)(\rebase_\vk)(\proj_{\indF\rebase(\vk)}\ev)$.
\end{proofsketch}

\begin{theorem}[Monad Structure]
    The unit of $\monFLS$ gives a finitary computation and the join of $\monFLS$ applied to finitary computations yields a finitary one.
\begin{center}
\begin{tikzcd}
    {\vX\vw} \ar[r, dashed, "\exists\eta^\fin_\vw"] \ar[dr, "\eta_\vw"']
        & {\monFLSFin\vX\vw} \ar[d, hook]
    \\
        & {\monFLS\vX\vw}
\end{tikzcd}
\hspace{10pt}
\begin{tikzcd}[sep=small]
    {\monFLSFin\monFLSFin\vX\vw}
        \ar[rr, dashed, "\exists \mu^\fin_\vw"] \ar[d, hook]
            && {\monFLSFin\vX\vw} \ar[dd, hook]
    \\
    {\monFLS\monFLSFin\vX\vw}
        \ar[d, "\monFLS(\inclto)_\vw"' xshift=-2pt]
    \\
    {\monFLS\monFLS\vX\vw}
        \ar[rr, "\mu_\vw"] 
            && {\monFLS\vX\vw}
\end{tikzcd}
\end{center}
\end{theorem}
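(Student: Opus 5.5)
For the unit, I would take the trivial singleton cover $\cov \eqdef 1 : ?^\vw \to [?^\vw]$, i.e.\ the template consisting only of the public locations as variables, with empty heaplet. This cover is complete: any $\pair\sigma\mvarM : ?^\vw \to 1_{\vw'}$ factors through it via itself, uniquely. The evaluation assigns to $?^\vw$ the resultlet $\inj_{1_{?^\vw}}\pair{\vx}{!}$, where $\vx \in \vX\vw = \vX\anyL{?^\vw}$ and the heap is the empty template heap. The covariant action along $\tilde\me = \pair\sigma\mvarM$ uses the local independent general unifier of $\tilde\me$ with the identity allocation. That unifier is $\tilde\me$ itself, so the realised stateful value is $\eqclass{1_{\vw'}}\pair{\vX\sigma(\vx)}{\mvarM}$, which is exactly $(\proj_\sigma\eta_\vw(\vx))(\mvarM)$.

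For the join, I take $\Phi \in \monFLSFin\monFLSFin\vX\vw$ realised by a complete cover $\cov : ?^\vw \to \Sigma$ and an evaluation $\ev$ of $\monFLSFin\vX$-valued resultlets. Each resultlet $\proj_\vi\ev = \inj_{\tau_\vi \xsato{\rho_\vi} \tau'_\vi}\pair{\psi_\vi}{\mvarM_\vi}$ carries an inner computation $\psi_\vi \in \monFLSFin\vX\anyL{\tau'_\vi}$, and I choose (mere existence suffices) a realiser $\cov_\vi : ?^{\anyL{\tau'_\vi}} \to \Sigma_\vi$, $\ev_\vi$. The plan is to build the composite cover in two stages.

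\textbf{Stage 1: pull the inner cover back to $\tau_\vi$.} The inner cover sees the whole new world $\anyL{\tau'_\vi}$, but it is applied to the heap produced by the outer resultlet. That heap has concrete part fixed by $\mvarM_\vi$ and variable part $\varL{\tau_\vi}$ still open. So I form the pushout in $\famC\tinstC$ (\cref{lem:pushouts-fam-tinstC}) of $\cov_\vi$ against the morphism $?^{\anyL{\tau'_\vi}} \to \tau'_\vi$ that carries the data $\mvarM_\vi$. This yields $\tau'_\vi \to \Sigma'_\vi$, which is complete because completeness is stable under pushouts. Along $\rho_\vi$ (an allocation, hence variable preserving and reflecting), each template in $\Sigma'_\vi$ only instantiates variables from $\varL{\tau_\vi}$. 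I therefore intend to reread each component as a morphism $\tau_\vi \to \tau''_{\vi\vj}$: its instantiation data on the newly allocated locations is discarded here and restored in the resultlet. Post-composing with $\cov_\vi$ then gives the composite cover $?^\vw \to \coprod_\vi \Sigma''_\vi$.

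\textbf{Stage 2: compose the evaluations.} For each index $\vj$ the evaluation is obtained by acting on $\proj_\vj\ev_\vi$ with the pushout leg $\typeRes\vX(\ext)$, then hiding along $\rho_\vi$ so that the outer allocation is included in the allocation morphism of the resultlet. Completeness of the composite follows because complete epis are closed under composition. The realisation equation is checked pointwise at $\pair\sigma\mvarM$. The factorisation $\tilde\me$ through the composite decomposes into the factorisation through $\cov$ followed by the factorisation through the rebased inner cover, and $\mu$ on $\monFLS$ unfolds via $\bind^\hideMon$ to the same composite of allocations. The step I expect to need the most care is the compatibility of the quotient $\quot$ with this two-step covariant action, namely that the unifier-based action $\typeRes\vX\me$ composes up to the garbage-collection relation $\sim$. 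I would check it using the universality of local independent coproduct squares, which gives a comparison isomorphism on the $\anyL\dash$ level.

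\textbf{Main obstacle.} The genuinely hard step is the reinterpretation in Stage 1: showing that completeness of the pulled-back inner cover, taken over $\tau_\vi$ rather than over $\tau'_\vi$, still yields a complete epi once the concrete data of $\rho_\vi$ is separated off. Connectedness is the delicate property, since the rebased inner stencils may reach outer concrete locations. If it fails, I would first factor each piece via \cref{lem:ofs-tinstC} and keep only the connected part. The disconnected remainder would be absorbed into the resultlet heap, where it is invisible up to $\sim$.
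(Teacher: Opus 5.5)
Your unit argument is the paper's. For the join, your architecture is also the paper's:
\begin{enumerate}
\item push each inner cover $\cov_\vi$ out against $\mvarM_\vi : ?^{\anyL{\tau'_\vi}} \to \tau'_\vi$;
\item compose with the allocation $\me_\vi$, take the connected part, and so transfer back to $\tau_\vi$;
\item act on $\proj_\vj \ev_\vi$ along the unifier leg, then hide.
\end{enumerate}
But Stage 1 has a genuine gap. You keep \emph{every} unified template $\tau^\vk_\vi$, and the resulting family over $\tau_\vi$ is then not complete. Some pushout templates instantiate a variable of $\varL{\tau_\vi}$ with data pointing to (or reaching) a location freshly allocated by $\rho_\vi$. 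Such a variable is an old cell the outer resultlet never touched. No real heap does this, since old cells cannot point to fresh locations. Yet once you reread the template over $\tau_\vi$, the fresh location is treated as a pre-existing concrete cell.

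Here is an example. Over $(\vb : \sortNat)$, let the outer resultlet inspect nothing and allocate $\vl'$ holding $\InjL\Unit$, and let the inner computation read $\vb$. The inner cover (the rebased enumeration) has a stencil with $\vb$ storing $\InjR\vl'$, and another with $\vb$ storing $\InjR\va$ for a fresh variable $\va$. Reread over $?^{\vb}$ and restricted to its connected part, the first says ``$\vb$ points to a concrete cell holding $\InjL\Unit$''. Its disconnected part is an iso, so $\vl'$ is not even recorded as an allocation. The heap $\vb \mapsto \InjR\vl_2,\ \vl_2 \mapsto \InjL\Unit$ then factors through both stencils, which carry different resultlets. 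If you instead literally discard the data on $\vl'$, the first stencil merely duplicates the second. Either way the family is overlapping, hence not epi, hence not complete, and realisation fails.

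The missing idea is the paper's compatibility filter. Keep only those $\vk$ for which the square formed by $\me_\vi$, $\me^\vk_1$, $\con{\me^\vk_1 \comp \me_\vi}$ and $\uncon{\me^\vk_1 \comp \me_\vi}$ is a local independent general unifier. Informally, the part reachable from $\tau_\vi$ must be disjoint from the freshly allocated locations. This filter does three things:
\begin{enumerate}
\item The discarded templates match no actual heap, so existence of factorisations survives.
\item The LIGU property lets you rebuild $\tau^\vk_\vi$ from $\tau_{\con{\me^\vk_1 \comp \me_\vi}}$ by pushing out along the allocation. This is how uniqueness transfers from the complete pushed-out inner cover.
\item Local independent coproducts reflect allocation morphisms, so it makes $\uncon{\me^\vk_1 \comp \me_\vi}$ an allocation morphism. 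You need this for $\hideRes{\vX}{\uncon{\me^\vk_1 \comp \me_\vi}}$ to be defined at all.
\end{enumerate}
The obstacle you flagged, connectedness, is not the hard part. It is handled simply by taking $\con{\me^\vk_1 \comp \me_\vi}$, exactly as your fallback does. The real issue is non-overlap. Also, the split-off disconnected part is not garbage invisible up to $\sim$. It is the outer allocation itself, and the hiding action re-attaches it to the resultlet.
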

\begin{proofsketch}
    The unit $\eta_\vw(\vx) : \monFLSFin\vX\vw$ is given at $\sigma : \vw \to \vw', \vh : \heapF(\vw')$ as $\quot_{1_{\vw'}}\pair{\vX(\sigma)(\vx)}\vh$.
    To show that $\eta_\vw(\vx)$ is finitary, consider the identity cover $?^{\vw} \xTo{1} ?^{\vw}$. 
    It is easy to see that it is complete.
    Define the evaluation at template $?^{\vw}$ as resultlet $\inj{?^\vw \xTo{1} ?^\vw}\pair\vx\emptyset$.

    Multiplication $\mu_\vw(\monFLS(\inclto)_\vw\phi)$ involves a number of steps.
    From finitariness of $\phi$, we have a complete cover $\cov$ over $\pair\vw\Sigma$ and $(\monFLSFin\vX)$-valued evaluation $\ev$ s.t. $\cov, \ev \Vdash \phi$.
    Unpacking this, the resultlet at every $\tau_\vi \in \Sigma$ gives
    \begin{itemize}
        \item Morphism $\rho_\vi : \tau_\vi \ato \tau_\vi'$, finitary computation $\phi_\vi : \monFLSFin\vX\anyL{\tau_\vi'}$,
              and heap $\mvarM_\vi : \theapF{\tau_\vi'}$. 
        \item Every $\phi_\vi$ further gives a $\Sigma_\vi \in \famC\tinstC$, complete cover $\cov_\vi$ over $\pair{\anyL{\tau_\vi'}}{\Sigma_\vi}$
              and $\vX$-valued evaluation $\ev_\vi$ over $\Sigma_\vi$.
    \end{itemize}
    Firstly, we define its family of templates $\Sigma^{\mu\phi}$ by unifying every $\mvarM_\vi :\, ?^{\anyL{\tau_\vi'}} \to \tau_\vi'$ 
    with $\cov_\vi$, which is given by the following pushout in $\famC\tinstC$. 
    We then select only templates
    $\tau_\vi^\vk \in \Prod{\vi \in \indF\Sigma} (\tau_\vi' \liplus{?_{\anyL{\tau_\vi'}}} \Sigma_\vi)$ that are compatible with allocation $\me_\vi$ ($\rho_\vi$ with the allocation data):
    when the right-most square is a local independent general unifier. 
\begin{equation*}
    \begin{tikzcd}[sep=small]
        {\Prod{\vi \in \indF\Sigma} ?_{\anyL{\tau_\vi'}}}
            \ar[rr, "\Prod{\vi \in \indF\Sigma} \cov_\vi"] \ar[dd, "\Prod{\vi \in \indF\Sigma} \mvarM_\vi"'] 
            && {\Prod{\vi \in \indF\Sigma} \Sigma_\vi}
                \ar[dd, "\Prod{\vi \in \indF\Sigma} \ext^{\cov_\vi}\mvarM_\vi"] 
        \\\\
        {\Prod{\vi \in \indF{\Sigma}} \tau_\vi'}
            \ar[rr, "\Prod{\vi \in \indF\Sigma}\ext^{\mvarM_\vi}\cov_\vi"' yshift=-5pt]
            && {\Prod{\vi \in \indF\Sigma} (\tau_\vi' \liplus{?_{\anyL{\tau_\vi'}}} \Sigma_\vi)} \posquare{uull}
    \end{tikzcd}
    \begin{tikzcd}[sep=small]
        & {?_{\anyL{\tau_\vi'}}} \ar[dl, twoheadrightarrow, "\mvarM_\vi"'] \ar[dr, twoheadrightarrow, "\proj_\vj \comp \cov_\vi"]
    \\
    {\tau_\vi'} \ar[dr, twoheadrightarrow, "\me^\vk_1 \eqdef \proj_\vk \comp \ext^{\mvarM_\vi}\cov_\vi"']
            && {\tau_\vj} \ar[dl, twoheadrightarrow, "\proj_\vk \comp \ext^{\cov_\vi}\mvarM_\vi \defeq \me^\vk_2"]
    \\
        & {\tau_\vi^\vk}
    \end{tikzcd}
    \begin{tikzcd}[sep=small]
        & {\tau_\vi} 
            \ar[dl, twoheadrightarrow, end anchor=north east, shorten >=-2pt, "\con{\me^\vk_1 \comp \me_\vi}"'] \ar[dr, -alloctip, "\me_\vi"]
    \\
    {\tau_{\con{\me^\vk_1 \comp \me_\vi}}}
        \ar[dr, -rmonotip, "\uncon{\me^\vk_1 \comp \me_\vi}"']
            && {\tau_\vi'}
                \ar[dl, twoheadrightarrow, "\me^\vk_1"]
    \\
        & {\tau^\vk_\vi}
    \end{tikzcd}
\end{equation*}

$\Sigma^{\mu\phi}$ selects the $\tau_{\con{\me^\vk_1 \comp \me_\vi}}$ from every such compatible $\tau_\vi^\vk$
and morphism $\Sigma \xto{\embed} \Sigma^{\mu\phi}$ given by the $\con{\me^\vk_1 \comp \me_\vi}$ is complete.
This defines the cover $\cov^{\mu\phi} = \embed \comp \cov$ since connected morphisms are closed under composition.
Lastly, the evaluation $\ev^{\mu\phi}$ is given at every $\tau_{\con{\me^\vk_1 \comp \me_\vi}}$ as
\[ \proj_{\tau_{\con{\me^\vk_1 \comp \me_\vi}}} \ev^{\mu\phi} \eqdef 
        \hideRes\relax{\uncon{\me^\vk_1 \comp \me_\vi}} (\typeRes(\vX)(\me^\vk_2)(\proj_\vj \ev_\vi)) \]
Use the hiding algebras of $\hideMon$ (\cite{kammar2017monad}) to prove realisation.
\end{proofsketch}

\begin{theorem}[Finitarity]\label{thm:fin-fgls-monad-finitariness}
    Monad $\monFLSFin$ is finitary.
\end{theorem}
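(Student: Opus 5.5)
We will show that $\monFLSFin$ preserves filtered colimits in $\pworldC$. Colimits in $\pworldC$ are computed pointwise. $\worldC$ is essentially small and every world is finite, so it suffices to show the following. For every filtered diagram $D : \mathcal{J} \to \pworldC$ with colimit $\vX = \Colimit{D}$ and coprojections $\kappa_j : D_j \To \vX$, and for every world $\vw$, the canonical comparison map
\[ \textstyle\Colimit_{j}\, \monFLSFin(D_j)(\vw) \to \monFLSFin(\vX)(\vw) \]
is a bijection. Equivalently, $\monFLSFin\vX\vw$ is the filtered colimit of the sets $\monFLSFin D_j\vw$. The key observation is that a finitary computation is determined by finitely many data: a finite $\Sigma$, a complete cover $\cov$ over $\pair\vw\Sigma$, and an evaluation $\ev$. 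The evaluation consists of finitely many resultlets $\inj_{\rho_i}\pair{\vx_i}{\mvarM_i}$, and only the return values $\vx_i \in \vX\anyL{\tau_i'}$ depend on $\vX$.

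\emph{Surjectivity.} Take $\phi \in \monFLSFin\vX\vw$ realised by $\cov,\ev$. Each $\vx_i \in \vX\anyL{\tau_i'} = \Colimit_j D_j\anyL{\tau_i'}$ lifts to some $D_{j_i}$. Because $\Sigma$ is finite and $\mathcal{J}$ is filtered, we can pick a single $j$ above all the $j_i$. This gives an evaluation $\ev_j$ over $\Sigma$ with values in $D_j$ such that $\typeRes(\kappa_j)(\ev_j) = \ev$ componentwise, by the covariant action of $\typeRes$ on morphisms in $\pworldC$. Now let $\phi_j$ be the computation realised by $\cov,\ev_j$. Realisation does define an element of $\monFLS D_j\vw$: the $\vw'$-components are given by the realisation formula, and naturality follows because the factorisation $\tilde\me$ through a complete cover is unique and is stable under extending the heap. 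This is the same argument that makes realisation meaningful in the preceding theorems, and I would spell it out as a small lemma. The proof sketch of the Functor Structure theorem shows $\monFLS\kappa_j\,\phi_j$ is realised by $\cov,\typeRes(\kappa_j)(\ev_j) = \cov,\ev$. Hence it equals $\phi$, since realisation determines $\phi$ pointwise.

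\emph{Injectivity.} Suppose $\phi \in \monFLSFin D_j\vw$ and $\psi \in \monFLSFin D_{j'}\vw$ become equal in $\monFLS\vX\vw$. First move both to a common index and intersect the two complete covers via $\cov \pltimes\vw \cov'$ (a pushout, so again complete). Pull both evaluations back along the resulting morphisms using the covariant action of $\typeRes$. Now both computations are realised over the same complete cover $\cov$ over $\pair\vw\Sigma$ with evaluations $\ev,\ev'$ into $D_j$.

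\emph{Main obstacle.} Equality after $\kappa_j$ is equality of the realised computations, i.e. for all heaps the $\quot$-classes agree. It is not syntactic equality of resultlets, because $\hideMon$ quotients by garbage collection and the realisation ranges over infinitely many heaps. I would handle this in two steps. First, for each template $\tau_i \in \Sigma$, pick the \emph{generic} instantiation: freeze each variable of $\tau_i$ into a fresh concrete location whose contents are an arbitrary fixed value, which exists since sorts are finite and have enumerations. This gives a heap over a world $\vw_i'$ through which $\cov_i$ factors. Realisation at these finitely many generic heaps already determines the computation on the whole stencil, because any other heap in the stencil is obtained by changing only the contents of cells corresponding to variables, and resultlets cannot inspect those cells. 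Second, equality of elements of $\hideMon(\vX\odot\heapF)(\vw_i')$ is witnessed by a finite zigzag in the coend. Each zigzag involves finitely many values of $\vX$ at finite worlds, and equalities in a filtered colimit of sets are witnessed at some stage. Taking one $j''$ above all of these finitely many witnesses gives $\phi$ and $\psi$ equal already in $\monFLSFin D_{j''}\vw$.

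The delicate point is the genericity step: one must check that agreement on the generic instantiation of each stencil implies agreement on all of its instances. I would prove this using the covariant action of $\typeRes$ along the morphisms that instantiate the frozen cells, together with the fact that this action commutes with $\quot$.
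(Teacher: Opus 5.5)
Your route is the paper's. Its inverse map (lift each of the finitely many return values of a realising evaluation to some stage, then push them to a common upper bound) is your surjectivity half. Its claim that the result does not depend on the realising cover/evaluation pair is your injectivity half, which the paper only asserts. Surjectivity, passing to a common complete cover via intersection, and lifting a finite zigzag to a stage are all fine.

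The gap is the genericity step you flag, and the mechanism you propose for it cannot work. Once a variable of $\tau_i$ is frozen into a concrete cell, no morphism of $\tinstC$ or $\instC$ can change that cell's contents. Instantiation data only defines cells outside the image, and $\heapF(\me)$ merely renames old contents. So the generic heap and an instance storing different data in the variable cells are related by no morphism in either direction, and there is no action of $\typeRes$ to transport along. Your description of the other instances is also too optimistic. They may live over larger worlds, and the generic heap may itself need auxiliary cells to give the frozen cells well-typed contents (e.g.\ a variable of sort $\sortNatTwo$ when $\tau_i$ has no $\sortNat$ location). So the generic heap need not embed into a given instance. Variables of uninhabited sort such as $\sortZero$ admit no generic instance at all, but such stencils are vacuous.

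What the step needs is an argument on the colimit presenting $\hideMon$ itself. Its generating relation $\eqclass{\sigma}\pair{\vx}{\vh} \sim \eqclass{\instU\me\comp\sigma}\pair{\vX(\instU\me)(\vx)}{\heapF(\me)(\vh)}$ never alters cells in the image of the public world $\vw''$. Moreover, $\sem{\ctype\vC}(\sigma)$ is injective for every $\sigma$. Hence in every element $\eqclass{\sigma_k}\pair{\vx_k}{\vh_k}$ of a zigzag, a public cell that the resultlet does not overwrite (in particular an instantiated variable) holds $\sem{\ctype\vC}(\sigma_k)(\vV)$ for one fixed $\vV \in \sem{\ctype\vC}(\vw'')$. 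Replacing $\vV$ uniformly by any other element of $\sem{\ctype\vC}(\vw'')$ preserves every generating step. So agreement of the two realised stateful values at one heap transfers to any heap over the same world that differs only on the variable cells.

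To compare the generic heap over $\vw_g$ with an instance over $\vw'$, proceed in three moves. First, glue the two worlds along $\anyL{\tau_i}$ into $\vw''$, and go up from each side by naturality to two heaps on $\vw''$ that differ only on the variable cells. Second, swap the variable-cell contents as above. Third, come back down using that $\hideMon(\vX\odot\heapF)(\me)$ is split mono for every $\me : \vw' \to \vw''$ in $\instC$. A left inverse is the canonical map that regards an element under $\vw''$ as one under $\vw'$ by precomposing with $\instU\me$; this turns the cells outside $\vw'$ into garbage. With this lemma, your finite-zigzag argument at the generic heaps completes injectivity.
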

\begin{proofsketch}
    We need to show that $\monFLSFin$ preserves filtered colimits. 
    Let $\vF : \mcatI \to \pworldC$ be a filtered colimit and we need to show the induced cocone morphism $\Colimit(\monFLSFin \comp \vF) \dashto \monFLSFin(\Colimit\vF)$ is iso.
    We give the construction of the inverse.
    By finitarity of $\monFLSFin$, we have that a finitary $\phi : \monFLSFin(\Colimit\vF)(\vw)$ has a finite family of objects in $\mcatI$, $\vR \in \Prod{\vi \in \indF{\Sigma_\phi}} \objC\mcatI$.
    By filteredness, we can take the upper bound of these objects, denote it by $\vk$, with morphisms $\vR\vi \xsto{\vf_\vi} \vk$ for every $\vi \in \indF{\Sigma_\phi}$.
    We then extend every resultlet $\proj_\tau \ev_\phi \eqdef \inj_{\rho : \tau \to \tau'} \pair{[\vx]}\mvarM$ to $\inj_\rho\pair{\vF_{\anyL{\tau'}}(\vf_\vi)(\vx)}\mvarM$.
    We can show that this map (i) is invariant under which cover, evaluation pair that realises $\phi$, and (ii) is the inverse morphism from the colimiting cocone.
\end{proofsketch}


\section{Finitary Semantics for FGLS}\label{sec:fin-sem-fgls}

In this section we give semantics to $\lambdaref$ from \cref{sec:pl-lamref} using the finitary FGLS monad.
We interpret types as functors $\sem\vA : \worldC \to \setC$ that carry a covariant structure.
We show that generic effects $\genRead$, $\genWrite$ and $\genAlloc$ are finitary computations 
allowing us to interpret $\lambdaref$ terms. 
Lastly, we show a correspondence between the semantics for $\lambdaref$ given via the two monads. 

\subsection{Semantics of Types}

\Cref{fig:sem-types} gives the interpretation of $\lambdaref$ types as covariant functors $\sem\vA : \worldC \to \setC$:
given a $\sigma : \vw \to \vw'$, we can extend a $\va \in \sem\vA\vw$ to a $\sem\vA(\sigma)(\va) \in \sem\vA(\vw')$.
For $\tyZero,\tyUnit,\tyProd,\tySum,\tyFun$ the interpretation is standard using the bi-cartesian closed structure of $\pworldC$ and monad $\monFLSFin$ \cite{moggi89computational,moggi1990abstract}.
For reference types $\sem{\tyRef\vC}$, we use the representable functor $\homF\worldC{\vl : \vC}\dash$.
Thus a $\sigma \in \sem{\tyRef\vC}(\vw)$ is an injection $\vl : \vC \xto\sigma \vw$ giving us a single location in $\vw$.
The covariant structure of $\sem{\tyRef\vC}$ says that location $\vl$ remains available for every extension of $\vw$. 

\begin{figure}
\begin{center}
    \[\begin{array}{rclrcl}
        \sem\tyZero     & = & \typeZero                      & \sem{\vA \tyProd \vB} & = & \sem\vA \typeProd \sem\vB \\
        \sem\tyUnit     & = & \typeUnit                      & \sem{\vA \tySum \vB}  & = & \sem\vA \typeSum \sem\vB \\
        \sem{\tyRef\vC} & = & \homF\worldC{\vl : \vC}\dash   & \sem{\vA \tyFun \vB}  & = & \Exp{(\monFLSFin\sem\vB)}{\sem\vA}
    \end{array}\]
\end{center}
\caption{Interpretation of $\lambdaref$ types in $\pworldC$.}
\Description{Interpretation of $\lambdaref$ types in $\pworldC$.}
\label{fig:sem-types}
\end{figure}

\subsection{State Manipulating Generic Effects}\label{sec:state-generic-effects}

\paragraph{Generic Effect Read}
For every sort $\vC \in \typeSort$, monad $\monFLS$ supports generic effect $\genRead_\vC : \sem{\tyRef\vC} \to \monFLS{\sem{\ctype\vC}}$
which returns the contents of location $\vl$:
\[ \genRead_\vC(\vl)(\vw \xto\sigma \vw')(\mvarM : \heapF\vw') = \eqclass{\vw' \xsto1 \vw'}\pair{\proj_{\sigma(\vl)}\mvarM}\mvarM \]

$\genRead_\vC(\vl)$ is finitary since it only inspects a single location on the heap and full ground types have finite inhabitants.
Formally we use the fact that sorts $\vC$ with full ground interpretation can be enumerated~(see \cref{def:sort-enumeration}).
Denote its enumeration as $?^{\vl : \vC} \xto\ve \Sigma_\vC$.
By yoneda we have $\genRead_\vC \in \monFLS{\sem{\ctype\vC}}(\vl : \vC)$, and let its complete cover be $\ve$.
The evaluation at every $\tau_\vi \in \Sigma_\vC$ is resultlet $\inj{\tau_\vi \xsto1 \tau_\vi}\pair{\proj_{\anyL{\ve_\vi}(\vl)}\mvarM_{\ve_\vi}}{\mvarM_{\ve_\vi}}$.

\paragraph{Generic Effect Write}
Morphism $\genWrite_\vC : \sem{\tyRef\vC} \times \sem{\ctype\vC} \to \monFLS\typeUnit$ is given as
\[ \genWrite_\vC\pair\vl\vd(\vw \xto\sigma \vw')(\mvarM : \heapF\vw') = \eqclass{\vw' \xsto1 \vw'}\pair\Unit{\mvarM[\sigma(\vl) \mapsto \sem{\ctype\vC}(\sigma)(\vd)]} \]
$\genWrite_\vC\pair\vl\vd \in \monFLS\typeUnit\vw$ lifts through $\monFLSFin$ monad as follows.
Notice that $\vl$ is a morphism $(\vl : \vC) \xto\vl \vw)$ and let $?^{\vl : \vC} \xto\ve \Sigma_\vC$ be an enumeration of $\vC$ as before.
We first rebase cover $\ve$ (\cref{def:cov-rebase}) via $(\vl : \vC) \xto\vl \vw$, which induces a complete cover $\ve^\vl$ over $\pair\vw{\Sigma_\vC^\vl}$.
Define the evaluation at every $\tau_\vi \in \Sigma_\vC^\vl$ as resultlet 
\[ \inj{\tau_\vi \xsto1 \tau_\vi}\pair\Unit{\mvarM_{\me_\vi}[\anyL{\ve^\vl_\vi}(\vl) \mapsto \sem{\ctype\vC}[\ve^\vl_\vi](\vd)]} \]

\paragraph{Generic Effect Alloc}

Recursive mutual allocation of locations is given by generic allocation $\genAlloc$.   
For every world $\vw_0$ to be allocated, we have 
$\genAlloc_{\vw_0} : \Prod{\vl : \vC \in \vw_0}\sem{\ctype\vC}\extF\dash{\vw_0} \to \monFLS\homF\worldC{\vw_0}\dash$.
The extension functor $\extF\dash{\vw_0}$ adds $\vw_0$-many locations, and so at a world $\vw$, 
we have $\vw_0$-many values $\vp \in \Prod{\vl : \vC \in \vw_0}\sem{\ctype\vC}(\vw \oplus \vw_0)$. 
Notice that $\vp$ is equivalent to a morphism $\vw \xto\me \vw \oplus \vw_0$ in $\instC$. 
Therefore, $\genAlloc_{\vw_0}$ is given as
\[ \genAlloc_{\vw_0}(\vp)(\vw \xto\sigma \vw')(\mvarM : \heapF\vw') = \eqclass{\ext^\sigma\me}\pair{\ext^\me\sigma \comp \incl_2}{\heapF(\ext^\sigma\me)(\mvarM)} \]
using the local independent coproduct
\begin{tikzcd}[sep=small]
    {\vw} \ar[r, "\me"] \ar[d, "\sigma" xshift=1pt]
        & {\vw \oplus \vw_0} \ar[d, "\ext^\me\sigma" xshift=2pt] 
    \\
    {\vw'} \ar[r, "\ext^\sigma\me" yshift=2pt]
        & {\vw' \liplus\vw (\vw \oplus \vw_0)}
\end{tikzcd} and notice that $\ext^\sigma\me$ can be equipped with the data of $\me$ (see \cite{kammar2017monad}).

We show that $\genAlloc_{\vw_0}(\vp) \in \monFLS{\homF\worldC{\vw_0}\dash}(\vw)$ is finitary by selecting the identity cover $?^\vw \xTo1 ?^\vw$.
For the resultlet, let $\incl_2^{\vw \oplus \vw_0} : \vw_0 \to \vw \oplus \vw_0$ be the template with $\vw \oplus \vw_0$ locations and $\img{\incl_2}$ concrete ones. 
We have an allocation morphism $\rho :\, ?^{\vw} \to \incl_2^{\vw \oplus \vw_0}$ in $\tinstC$.
Define the resultlet as $\inj_{\rho}\pair{\incl_2}{\mvarM_\me}$.

\subsection{Semantics of Terms}

Typing judgments are interpreted as natural transformations in $\pworldC$:
\[ \semV{\vtyped\Gamma\vw\vM\vA} : \homF\worldC\vw\dash \times \sem\Gamma \to \sem\vA \quad\text{ and }\quad
   \sem{\typed\Gamma\vw\vM\vA}   : \homF\worldC\vw\dash \times \sem\Gamma \to \monFLSFin\sem\vA \]
Contexts are interpreted component-wise $\sem\Gamma \eqdef \Prod{\vx : \vA \in \Gamma}\sem\vA$, and
world $\vw$ in judgment $\typed\Gamma\vw\vM\vA$ is modelled using the representable functor $\homF\worldC\vw\dash$:
it defines $\sem{\typed\Gamma\vw\vM\vA}$ only at worlds $\vw'$ that have an injection $\vw \injto \vw'$, \ie worlds bigger than $\vw$.

The interpretation of values (\Cref{fig:sem-values} is mostly standard.
It expects a location environment $\sigma$ that associates location literals $\vl$ with the current world,
and a variable environment $\gamma$ to interpret variables in context.
\begin{figure}
    \[\begin{array}{r@{\,}l@{\quad}r@{\,}l}
        \multicolumn{2}{r}{\semV{\vtyped\Gamma\vw\vV\vA} :} & \multicolumn{2}{@{}l}{\homF\worldC\vw\dash \times \sem\Gamma \to \sem\vA} \\
    \semV{\vtyped{\vx : \vA, \Gamma}\vw\vx\vA}\pair\sigma\gamma& = \gamma(\vx) &
            \semV{\vtyped\Gamma\vw\Unit\tyUnit}\pair\sigma\gamma & = \star \\
    \semV{\vtyped\Gamma\vw{\Lam\vx\vM}{\vA \tyFun \vB}}\pair\sigma\gamma& = \curry\sem\vM\pair\sigma\gamma &
\semV{\vtyped\Gamma\vw{\Pair\vV\vW}{\vA \tyProd \vB}}\pair\sigma\gamma & = \pair{\semV\vV}{\semV\vW}\pair\sigma\gamma \\
    \semV{\vtyped\Gamma\vw\vl{\tyRef\vC}}\pair\sigma\gamma & = \sigma(\vl) &
        \semV{\vtyped\Gamma\vw{\Inj_\vi\vV}{\vA \tySum \vB}}\pair\sigma\gamma & = \inj_\vi(\semV\vV\pair\sigma\gamma) \\
\end{array}\]
\caption{Interpretation of $\lambdaref$ values in $\pworldC$}
\Description{Interpretation of $\lambdaref$ values in $\pworldC$}
\label{fig:sem-values}
\end{figure}

For the denotation of expressions (\cref{fig:sem-expr}), first denote the strength as $\str_{\vX,\vY} : \vX \times \monFLSFin\vY \to \monFLSFin(\vX \times \vY)$, 
and the kleisli extension of a morphism $\vX \xsto\vN \monFLSFin\vY$ as 
$\bind\vN \eqdef \mu \comp \monFLSFin\vN : \monFLSFin\vX \to \monFLSFin\vY$.
Further, define the left-to-right double strength $\dstr$ as (where $\swap_{\vX,\vY} \eqdef \pair{\proj_2}{\proj_1} : \vX \times \vY \to \vY \times \vX$):
\[
    \dstr_{\vX,\vY} \bind(\str \comp \swap) \comp \str \comp \swap : \monFLSFin\vX \times \monFLSFin\vY \to \monFLSFin(\vX \times \vY)
\]
Lastly, let $?_\vX : \typeZero \to \vX$ be the unique morphism from the initial object, and
let $\distr_{\vX,\vY,\vZ} : \vX \times (\vY + \vZ) \to (\vX \times \vY) + (\vX \times \vZ)$ be the distribution of products over sums given by the bi-ccc structure.

Semantically, a value $\semV{\vtyped\Gamma\vw\vW\vA}$ is an expression $\sem{\typed\Gamma\vw\vV\vA}$ by postcomposing with unit $\eta$.
Constructs for applications, products and sums are interpreted in a standard way.
For equality testing on references we use $\eqOp : \sem{\tyRef\vC} \times \sem{\tyRef\vC} \to \typeBool$ defined as $\pair{\sigma_1}{\sigma_2} \mapsto \sigma_1(\vl) == \sigma_2(\vl)$.
For reading, writing and allocation we use the generic effects given at \cref{sec:state-generic-effects}.
For allocation, we first lift environment $\gamma \in \sem\Gamma(\vw')$ to $\sem\Gamma(\incl_1)(\gamma) : \sem\Gamma(\vw' \oplus \vw_0)$ and extend it to 
$\gamma' : \sem{\Gamma,\vx_1 : \tyRef\vC, \hdots, \vx_\vn : \tyRef{\vC_\vn}}(\vw' \oplus \vw_0)$.
We then apply $\genAlloc$ with values $\sem{\vV_\vi}\pair{\sigma \oplus \vw_0}{\gamma'} : \sem{\ctype\vC}(\vw' \oplus \vw_0)$ where $\sigma \oplus \vw_0 : \vw \oplus \vw_0 \to \vw' \oplus \vw_0$.

\begin{figure}
\begin{align*}
    \sem{\typed\Gamma\vw\vM\vA}                                    & :\, \homF\worldC\vw\dash \times \sem\Gamma \to \monFLSFin\sem\vA \\
    \sem{\typed\Gamma\vw\vV\vA}                                    & =\, \eta_{\sem{\vA}} \comp \semV\vV \\
    \sem{\typed\Gamma\vw{\vM\App\vN}\vA}                           & =\, \bind\eval \comp \dstr_{\sem{\vA \to \vB}, \sem\vA} \comp \pair{\sem\vM}{\sem\vN} \\
    \sem{\typed\Gamma\vw{\Split\vM{\vx_1}{\vx_2}\vN}\vB}           & =\, \bind{\sem\vN} \comp \str \comp \pair{1}{\sem\vM} \\
    \sem{\typed\Gamma\vw{\Pair\vM\vN}{\vA \tyProd \vB}}            & =\, \dstr \comp \pair{\sem\vM}{\sem\vN} \\
    \sem{\typed\Gamma\vw{\Inj_\vi \vM}{\vA \tySum \vB}}            & =\, \monFLSFin(\incl_\vi) \comp \sem\vM \\
    \sem{\typed\Gamma\vw{\Absurd\vM}\vB}                           & =\, \monFLSFin(?_{\sem\vB}) \comp \sem\vM \\
    \sem{\typed\Gamma\vw{\Case\vM{\vx_1}{\vN_1}{\vx_2}{\vN_2}}\vB} & =\, \bind\copair{\sem{\vN_1}}{\sem{\vN_2}} \comp \monFLSFin\distr \comp \str \comp \pair{1}{\sem\vM} \\
    \sem{\typed\Gamma\vw{\vM\Eq\vN}\tyBool}                        & =\, \monFLSFin\eqOp \comp \dstr \comp \pair{\sem\vM}{\sem\vN} \\
    \sem{\typed\Gamma\vw{\Read\vM}{\ctype\vC}}                     & =\, \bind\genRead \comp \sem\vM \\
    \sem{\typed\Gamma\vw{\vM\Write\vN}\tyUnit}                     & =\, \bind\genWrite \comp \dstr \comp \pair{\sem\vM}{\sem\vN} \\
    \sem{\typed\Gamma\vw{\Letref{\vx_1 \Write \vV_1, \hdots, \vx_\vn \Write \vV_\vn}\vM}\vB}\pair\sigma\gamma & =
    \str(\pair\sigma\gamma, \genAlloc\langle\sem{\vV_\vi}\pair{\sigma\oplus{\vw_0}}{\gamma'}\rangle_{\vi = 1}^\vn) \bind{\sem\vM} \\
                & \quad\qquad \mwhere \gamma' \eqdef  \sem\Gamma(\incl_1)(\gamma)[\vx_\vi \mapsto \incl_2(\vl_\vi)]_{\vi = 1}^\vn
\end{align*}
\caption{Interpretation of $\lambdaref$ expressions in $\pworldC$}
\Description{Interpretation of $\lambdaref$ expressions in $\pworldC$}
\label{fig:sem-expr}
\end{figure}

\subsection{Metatheory}

Let $\semM\dash\monFLS$ denote the interpretation of $\lambdaref$ terms using the full monad $\monFLS$, given identically to the one above.
At full-ground types $\vD \in \typeFG$, the denotations are identical $\semM\vD\monFLS = \semM\vD\monFLSFin$, 
whereas at higher types they differ since $\semM{\vA \to \vB}\monFLS \neq \semM{\vA \to \vB}\monFLSFin$.

Recall from \cref{sec:fin-monad} that $\monFLSFin$ is related with $\monFLS$ using a strong monad morphism $\iota$. 
We can therefor establish a relationship between denotations of first-order terms given by the two monads~\cite{katsumata2013relating}.

\begin{theorem}\label{thm:fo-sem-agree}
    For every first-order term $\vx_1: \vD_1, \hdots, \vx_\vn : \vD_\vn \vdash_\vw \vM : \vD$
    we have $\iota_{\sem\vD} \comp \semM\vM\monFLSFin = \semM\vM\monFLS$.
\end{theorem}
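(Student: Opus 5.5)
We will prove the statement by structural induction on the typing derivation of the first-order term $\vM$, with a companion statement for values. The key observation is that at full-ground types the two interpretations of types coincide, $\semM\vD\monFLS = \semM\vD\monFLSFin$. Since the context $\Gamma = \vx_1 : \vD_1, \hdots, \vx_\vn : \vD_\vn$ and the result type $\vD$ are full ground, both denotations have the same domain $\homF\worldC\vw\dash \times \sem\Gamma$. It therefore makes sense to compare $\iota_{\sem\vD} \comp \semM\vM\monFLSFin$ with $\semM\vM\monFLS$.

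For values the claim is the plain equality $\semV\vV_{\monFLSFin} = \semV\vV_{\monFLS}$. This holds because the clauses in \cref{fig:sem-values} for variables, unit, locations, pairs and injections do not mention the monad at all. Since $\vV$ is first-order, no $\lambda$-abstraction occurs in it.

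For expressions we proceed clause by clause through \cref{fig:sem-expr}, using the fact that $\iota$ is a strong monad morphism. Concretely, $\iota$ satisfies the following:
\begin{itemize}
\item it commutes with the units, $\iota \comp \eta^\fin = \eta$;
\item it commutes with the multiplications, $\iota \comp \mu^\fin = \mu \comp \monFLS\iota \comp \iota_{\monFLSFin}$;
\item it is natural, so it commutes with $\monFLSFin\vf$ versus $\monFLS\vf$;
\item it commutes with strength, $\iota \comp \str^\fin = \str \comp (1 \times \iota)$.
\end{itemize}
From these we derive the corresponding laws for Kleisli extension, $\iota \comp (\bind^{\fin}\vf) = (\bind(\iota\comp\vf)) \comp \iota$, and for the double strength $\dstr$.

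Each case is then a diagram chase. For values we use the unit law. For pairing, equality testing and injections we use naturality together with the $\dstr$ law, invoking the induction hypothesis for the subterms $\vM,\vN$. The cases $\Split{\vM}{\vx_1}{\vx_2}{\vN}$, $\Case{\vM}{\vx_1}{\vN_1}{\vx_2}{\vN_2}$ and $\Absurd\vM$ use strength plus the Kleisli law, where the continuations are again first-order terms in an extended full-ground context. The cases $\Read\vM$, $\vM \Write \vN$ and $\exprkw{letref}$ additionally need the generic effects to agree: $\iota \comp \genRead^\fin_\vC = \genRead_\vC$, and similarly for $\genWrite$ and $\genAlloc$. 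This holds by construction, since in \cref{sec:state-generic-effects} the finitary generic effects were obtained by showing that the $\monFLS$ generic effects factor through $\monFLSFin$. For $\exprkw{letref}$ we must also check that the lifted environment $\gamma'$ is built identically in both semantics, which it is, because it only uses the functorial action of full-ground types.

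Application $\vM\App\vN$ is excluded, because a first-order term never has a subterm of function type. To make this precise, we define ``first-order'' as ``every subterm in the typing derivation has full-ground type''. Under this definition the clause with $\eval$ never arises, and this is exactly why the theorem cannot extend to higher types, where $\semM{\vA\to\vB}\monFLS \neq \semM{\vA\to\vB}\monFLSFin$.

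The main obstacle I expect is justifying that $\iota$ is a \emph{strong} monad morphism, in particular that the strength of $\monFLSFin$ is the restriction of that of $\monFLS$. The paper only asserts this. I would prove it by first showing that $\str_{\vX,\vY}(\vx,\phi)$ is finitary when $\phi$ is: reuse $\phi$'s cover, and map each resultlet's value through $\vx$ extended along the resultlet's $\anyL\dash$ embedding. The commutation with $\iota$ is then immediate, since the inclusion is a subset inclusion. Once strength holds, all the remaining laws for $\iota$ follow formally, in the style of \cite{katsumata2013relating}.
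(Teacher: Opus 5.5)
Your chase through the strong-monad-morphism laws is fine for the fragment you treat. The problem is that you prove the theorem only after redefining ``first-order'' as ``every subterm has full-ground type''. That is strictly weaker than the statement, and the discarded cases are where the actual work lies.

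In this paper ``first-order'' describes the typing judgement, not its subterms. The conclusion speaks of terms $\Gamma \vdash \vM : \vA$ ``where $\Gamma,\vA$ are first order''. So $\vx : \tyBool \vdash_\vw (\Lam\vy\vy)\,\vx : \tyBool$ is a first-order term. So is any program using $\exprkw{let}$ or sequencing, since $\Let\vx\vM\vN$ is sugar for $(\Lam\vx\vN)\vM$. More importantly, the theorem is used to transfer adequacy, which means applying it to programs $C[\vM]$ for arbitrary, possibly higher-order, contexts $C$. Your restricted version does not support that use. Your remark that this is ``exactly why the theorem cannot extend to higher types'' conflates the type of the judgement with the types of its subterms.

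The missing idea shows up in the application case $\vM\App\vN$ with $\vM : \vA\tyFun\vB$. There your induction hypothesis cannot even be stated. The objects $\semM{\vA\tyFun\vB}\monFLSFin$ and $\semM{\vA\tyFun\vB}\monFLS$ differ, and because exponentials are contravariant in the domain there is in general no comparison map induced by $\iota$; think of $(\vA\tyFun\vB)\tyFun\vB'$. The appeal to \cite{katsumata2013relating} supplies the fix: a type-indexed Kripke logical relation $R_\vA \subseteq \semM\vA\monFLSFin \times \semM\vA\monFLS$ in $\pworldC$, defined as follows.
\begin{itemize}
\item At full-ground types it is equality.
\item At products and sums it is componentwise.
\item At arrows it is the Kripke function relation.
\item At computation types it is the $\top\top$-lifting along $\iota$, with the graph of $\iota$ as the observation relation.
\end{itemize}
Because $\iota$ is a strong monad morphism, this lifting is closed under unit, strength and Kleisli extension. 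The generic effects are related because $\iota \comp \genRead^\fin = \genRead$ and similarly for $\mathit{write}$ and $\mathit{alloc}$, which you already have.

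You then prove a fundamental lemma, that related environments give related denotations, by induction on derivations, now including $\lambda$ and application. Your cases reappear as the instances where every relation is equality. Finally, at a full-ground result type the lifted relation collapses to $\iota \comp c = c'$. For one direction, take $\eta^\fin$ and $\eta$ as related continuations. This yields the theorem as stated, and with it the adequacy transfer.
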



Adequacy from the $\semM\dash\monFLS$ semantics extends to $\semM\dash\monFLSFin$ using \Cref{thm:fo-sem-agree}.
Soundness of the semantics w.r.t. the operational semantics and compositionality is proven in the standard way by inducting over the derivations.


\section{Related Work}


\citet{moggi1990abstract} gave a unified categorical semantics for modelling various computational effects and specifically 
gave a possible worlds monad on a functor category for dynamic allocation of ground storage.
Expanding on Moggi's work, \citet{plotkin2002notions} considered monads induced by suitable equations over the effectful operations, giving an equational account of computational effects. 
They further define a richer (ground) local state monad that supported garbage collection capabilities.
Our work is based on the FGLS monad of \citet{kammar2017monad} that follows along this direction by giving a monad for \emph{full ground} local state.
\citet{polzer2020local} complement the semantics of \cite{kammar2017monad} by developing a higher-order logic over the FGLS monad based on bi-hyperdoctorines.

Multiple relational models for local state exist in the literature that rely on auxiliary logical relations built on top of the semantics
~\cite{benton2005relational,bohr2006relational,benton2007relational,benton2014abstract,birkedal2011step,birkedal2010realisability}.
These vary from utilising a continuation monad over nominal-cpos~\cite{benton2005relational,bohr2006relational},
incorporating region-based effect typing in their reasoning~\cite{benton2007relational,benton2014abstract},
and consider kripke logical relations over recursively defined set of worlds using metric spaces based semantics~\cite{birkedal2010realisability}.

In any case, the relative simplicity of the underlying denotational semantics means that they fail to validate desired equivalences involving unobservable allocations. 
Instead it allows them to model complex features such as general references, recursive types and parametric polymorphism.
They employ logical relations to refine their model allowing them to reason about properties such as observational equivalence.
In contrast, the possible worlds semantics of \cite{plotkin2002notions,kammar2017monad} builds garbage collection into the model, 
complicating the model but avoid requiring auxiliary logical relations.

Syntactic step-indexed logical relations built over the operational semantics have also been successful in modelling multiple non-trivial programming language constructs in conjunction such as recursive, abstract, polymorphic types, general references and control effects~\cite{ahmed2004semantics,ahmed2009state,dreyer2012impact,birkedal2011step}.
By marrying the reasoning with separation logic~\cite{jung2018iris,timany2024logical} they are able to prove deep properties of higher-order concurrent programs that make use of general references.
We instead focus on denotational accounts of local state that abstract away from lower-level operational considerations.

An alternative class of models is based on game semantics, which differs from the aforementioned models as the semantics of heaps is implicit~\cite{abramsky1998fully,laird2008game,murawski2012algorithmic}.
In this line of work, execution of a program is modelled as a dialogue between the player representing the program and the opponent being the environment.
Game semantics is suitable for modelling full ground and general references and produces models that are fully abstract at higher-order types.
Nevertheless, these models are fundamentally different from the possible worlds semantics we are concerned with. 


\section{Conclusion and Future Work}

We extend the metatheory of the FGLS monad introduced in \cite{kammar2017monad} by resolving an outstanding question on whether it is finitary.
We show a negative result and therefore introduce a finitary FGLS submonad. 
This submonad paves the way towards finitary equational axiomatisations of FGLS that are more amenable to reasoning.

There are several directions for future work. 
A useful property for a semantics is full abstraction at first order types~\cite{kammar2026equational,abramsky1998fully,milner1977fully,streicher2006domain}.
It states that for terms $\typed\Gamma\relax{\vM, \vN}\vA$, observational equivalence implies denotational equality: $\vM \obsEq \vN \implies \sem\vM = \sem\vN$, where $\Gamma, \vA$ are first order, \ie do not have any function types.
Together with adequacy, full abstraction at first order identifies observational and denotational equality of first order terms.
Such a property is used to show that the model is complete for compiler optimisations on imperative programs involving full ground references.
This is typically the best-possible result, as models given by sets with structure and structure preserving functions are not typically fully abstract at higher order.
We conjecture that the FGLS monad is fully abstract at first order, and that we can use techniques inspired by game semantics to prove that this is the case.
This result would then transfer to the finitary submonad, and we can exploit its finitary nature to develop decision procedures to decide whether two terms are observationally equivalent.

We also aim to generalise the equations for ground local state \cite{plotkin2002notions,staton2010completeness,staton2013instances} to the full-ground setting. 
This is a non-trivial task for two reasons: (i) one must identify an appropriate framework in which heap values can be treated as first-class entities, 
and (ii) one must account for the symmetries present within an allocated block of data. 
The goal is to obtain an equational axiomatisation giving rise to the finitary FGLS monad developed here.

Furthermore, we aim to increase the expressivity of the FGLS monad by reformulating it over $ \worldC $-indexed functors into cpos. 
This would allow us to support general recursion and would be a further step towards modelling higher-order store. 
As an additional benefit, such a domain-theoretic presentation of the FGLS monad may offer an alternative characterisation of the finitary submonad in terms of abstract domain constructions.

Finally, it would be worthwhile to investigate which programming-language constructs can express complete heap traversals without introducing a primitive operation that returns the size of the heap, such as $ \heapSize $ from \cref{sec:overview-non-finitariness}. 
We expect, however, that such an operation would have no substantial impact on the metatheory: 
the FGLS monad would still fail to be fully abstract at higher-order types.


\begin{acks}
Ohad Kammar was funded by a Royal Society University Research Fellowship, 
Sam Lindley was supported by a UKRI Future Leaders Fellowship “Effect Handler Oriented Programming” (MR/T043830/1 and MR/Z000351/1),
and Cristina Matache was partly supported by a Leverhulme Early Career Fellowship.
We thank the Paul B. Levy for useful discussions and suggestions.
\end{acks}

\bibliographystyle{ACM-Reference-Format}
\bibliography{bibliography.bib}

\clearpage
\pagebreak
\appendix

\section{Appendix}

\subsection[Full Specification of Programming Language]{Full Specification of $\lambdaref$}\label{sec:lambdaref}

\begin{figure}[H]
\begin{align*}
    \typeType \ni \vA,\vB & \inddef \gbox{\tyRef\vC} \mid     
                            \tyZero \mid \vA \tySum \vB \mid 
                            \tyUnit \mid \vA \tyProd \vB \mid
                            \vA \tyFun \vB                  
    \\
    \typeVal \ni \vV,\vW  \inddef &\, \vx \mid \Lam\vx\vM \mid       & \tag{$\lambda$-calculus} \\
                                  &\, \gbox\vl \mid                  & \tag{Location Literals} \\
                                  &\, \Unit \mid \Pair\vV\vW \mid    & \tag{Products} \\
                                  &\, \Inj_{\vi \in \{1,2\}} \vV     & \tag{Sums}
    \\
    \typeExpr \ni \vM,\vN  \inddef &\, \vV \mid \vM\App\vN \mid & \tag{Values and Application} \\
                                   &\, \Pair\vM\vN \mid \Split\vM{\vx_1}{\vx_2}\vN \mid & \tag{Products} \\
                                   &\, \Inj_{\vi \in \{1,2\}} \vM \mid \Absurd\vM \mid \Case\vM{\vx_1}{\vN_1}{\vx_2}{\vN_2} \mid & \tag{Sums} \\
                                   &\, \gbox{\Read\vM} \mid \gbox{\vM\Write\vN} \mid \gbox{\Letref{\vx_1 \Write \vV_1, \hdots, \vx_\vn \Write \vV_\vn}\vM} & \tag{References}
\end{align*}
\Description{Syntax of programming language $\lambdaref$}
\caption{Syntax of $\lambdaref$}
\end{figure}

\begin{figure}[H]
\begin{mathpar}
\inferrule[\nameTVar]{\vx : \vA \in \Gamma}{\vtyped\Gamma\vw\vx\vA}

\inferrule[\nameTLam]{\typed{\Gamma, \vx : \vA}\vw\vM\vB}{\vtyped\Gamma\vw{\Lam\vx\vM}{\vA \tyFun \vB}}

\gbox{
\inferrule[\nameTLoc]{\vl : \vC \in \vw}{\vtyped\Gamma\vw\vl{\tyRef\vC}}
}

\inferrule[\nameTUnit]{}{\vtyped\Gamma\vw\Unit{\tyUnit}}

\inferrule[\nameTVPair]{
    \vtyped\Gamma\vw\vV\vA \\\\
    \vtyped\Gamma\vw\vW\vB
}{\vtyped\Gamma\vw{\Pair\vV\vW}{\vA \tyProd \vB}}

\inferrule[\nameTVInj]{\vtyped\Gamma\vw\vV{\vA_\vi}}{\vtyped\Gamma\vw{\Inj_\vi\vV}{\vA_1 \tySum \vA_2}}

\inferrule[\nameTVal]{\vtyped\Gamma\vw\vx\vA}{\typed\Gamma\vw\vx\vA}

\inferrule[\nameTApp]{
    \vtyped\Gamma\vw\vM{\vA \tyFun \vB} \\\\
    \vtyped\Gamma\vw\vN\vA
}{\typed\Gamma\vw{\vM\App\vN}\vB}

\inferrule[\nameTEPair]{
    \typed\Gamma\vw\vM\vA \\\\
    \typed\Gamma\vw\vN\vB
}{\typed\Gamma\vw{\Pair\vM\vN}{\vA \tyProd \vB}}

\inferrule[\nameTSplit]{
    \typed\Gamma\vw\vM{\vA_1 \tyProd \vA_2} \\\\
    \typed{\Gamma, \vx_1 : \vA_1,\vx_2 : \vA_2}\vw\vN\vB
}{\typed\Gamma\vw{\Split\vM\vx\vy\vN}\vB}

\inferrule[\nameTEInj]{\typed\Gamma\vw\vM{\vA_\vi}}{\typed\Gamma\vw{\Inj_\vi\vM}{\vA_1 \tySum \vA_2}}

\inferrule[\nameTCase]{
    \typed\Gamma\vw\vM\tyZero
}{\typed\Gamma\vw{\Absurd\vM}\vA}

\inferrule[\nameTCase]{
    \typed\Gamma\vw\vM{\vA_1 \tySum \vA_2} \\\\
    (\typed{\Gamma, \vx_\vi : \vA_\vi}\vw{\vN_\vi}\vB)_{\vi = 1}^2
}{\typed\Gamma\vw{\Case\vM{\vx_1}{\vN_1}{\vx_2}{\vN_2}}\vB}

\gbox{
\inferrule[\nameTRead]{\typed\Gamma\vw\vM{\tyRef\vC}}{\typed\Gamma\vw{\Read\vM}{\ctype\vC}}
}

\gbox{
\inferrule[\nameTWrite]{
    \typed\Gamma\vw\vM{\tyRef\vC}
    \\\\
    \typed\Gamma\vw\vN{\ctype\vC}
}{\typed\Gamma\vw{\vM\Write\vN}\tyUnit}
}

\gbox{
\inferrule[\nameTAlloc]{
    (\typed{\Gamma, \vx_1 : \tyRef{\vC_1}, \hdots, \vx_\vn : \tyRef{\vC_\vn}}\vw{\vV_\vi}{\ctype\vC_\vi})_{i=1}^\vn
    \\\\
    \typed{\Gamma, \vx_1 : \tyRef{\vC_1}, \hdots, \vx_\vn : \tyRef{\vC_\vn}}\vw\vM\vA
}{\typed\Gamma\vw{\Letref{\vx_1 \Write \vV_1, \hdots, \vx_\vn \Write \vV_\vn}\vM}\vA}
}
\end{mathpar}
\Description{Typing rules of programming language $\lambdaref$}
\caption{Typing rules of $\lambdaref$}
\end{figure}

\begin{figure}[H]
\begin{mathpar}
    \inferrule[\nameRVal]{}{\bigstep\vV\mvarM\vV\mvarM}
    
    \inferrule[\nameRApp]{
        \bigstep\vM\mvarM{\Lam\vx{\vM'}}{\mvarM'}
        \and
        \bigstep\vN{\mvarM'}\vV{\mvarM''}
        \\\\
        \bigstep{\vM'\Subst\vV\vx}{\mvarM''}\vW{\mvarM'''}
    }{
        \bigstep{\vM\App\vN}\mvarM\vW{\mvarM'''}
    }

    \inferrule[\nameREPair]{
        \bigstep\vM\mvarM\vV{\mvarM'}
        \and
        \bigstep\vN{\mvarM'}\vW{\mvarM''}
    }{
        \bigstep{\Pair\vM\vN}\mvarM{\Pair\vV\vW}{\mvarM''}
    }

    \inferrule[\nameRSplit]{
        \bigstep\vM\mvarM{\Pair{\vV_1}{\vV_2}}{\mvarM'}
        \\\\
        \bigstep{\vN\SubstTwo{\vx_1}{\vV_1}{\vx_2}{\vV_2}}{\mvarM'}\vW{\mvarM''}
    }{
        \bigstep{\Split\vM{\vx_1}{\vx_2}\vN}\mvarM\vW{\mvarM''}
    }

    \inferrule[\nameREInj]{
        \bigstep\vM\mvarM\vV{\mvarM'}
    }{
        \bigstep{\Inj_\vi\vM}\mvarM{\Inj_\vi\vV}{\mvarM'}
    }

    \inferrule[\nameRCase]{
        \bigstep\vM\mvarM{\Inj_\vi\vV}{\mvarM'}
        \\\\
        \bigstep{{\vM_\vi}\Subst{\vx_\vi}\vV}{\mvarM'}\vW{\mvarM''}
    }{
        \bigstep{\Case\vM{\vx_1}{\vN_1}{\vx_2}{\vN_2}}\mvarM\vW{\mvarM''}
    }
\gbox{
    \inferrule[\nameRRead]{
        \bigstep\vM\mvarM\vl{\mvarM'}
        \and
        \mvarM'(\vl) = \vV
    }{
        \bigstep{\Read\vM}\mvarM\vV{\mvarM'}
    }
}

\gbox{
    \inferrule[\nameRWrite]{
        \bigstep\vM\mvarM\vl{\mvarM'}
        \and
        \bigstep\vN{\mvarM'}\vW{\mvarM''}
    }{
        \bigstep{\vM \Write \vN}\mvarM\Unit{\mvarM''[\vl \mapsto \vW]}
    }
}

\gbox{
    \inferrule[\nameREq]{
        \bigstep\vM\mvarH{\vl_1}{\mvarH'}
        \\\\
        \bigstep\vN{\mvarH'}{\vl_2}{\mvarH''}
    }{
        \bigstep{\vM \Eq \vN}\mvarH{\#(\vl_1 == \vl_2)}{\mvarH''}
    }
}

\gbox{
    \inferrule[\nameRAlloc]{
        \vl_1, \hdots, \vl_\vn \not\in \dom\mvarM
        \\\\
        \theta = [\vx_\vi \mapsto \vl_\vi]_{\vi = 1}^\vn
        \and
        \mvarM' = \mvarM[\vl_\vi \mapsto \vV_\vi[\theta]]_{\vi = 1}^\vn
        \and
        \bigstep{\vM[\theta]}{\mvarM'}\vW{\mvarM''}
    }{
        \bigstep{\Letref{\vx_1 \Write \vV_1, \hdots, \vx_\vn \Write \vV_\vn}\vM}\mvarM\vW{\mvarM''}
    }
}
\end{mathpar}
\Description{Big-step style reduction rules of programming language $\lambdaref$}
\caption{Reduction rules of $\lambdaref$ in big-step}
\end{figure}

\subsection{Non-Finitariness of FGLS Monad}

\begin{lemma}\label{lem:non-finitariness-proof}
    Monad $\monFLS$ does not preserve filtered colimits.
\end{lemma}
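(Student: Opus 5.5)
We show that $\monFLS$ fails to preserve the filtered colimit $\Delta_\typeNat = \Colimit_{\vn} \Delta_{\{0,\hdots,\vn\}}$ of the chain of constant functors $\Delta_{\{0\}} \inclto \Delta_{\{0,1\}} \inclto \cdots$ in $\pworldC$. Colimits in $\pworldC$ are computed pointwise, so this cocone is indeed colimiting. Preservation would require the canonical map $\Colimit_\vn \monFLS(\Delta_{\{0..\vn\}}) \to \monFLS(\Delta_\typeNat)$ to be surjective at every world. We exhibit $\heapSize \in \monFLS(\Delta_\typeNat)(\vl : \sortNat)$ outside its image, for the interpretation $\ctype\sortNat = \tyUnit \tySum \tyRef\sortNat$.

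First we define $\heapSize$ formally. At $\sigma : (\vl:\sortNat) \injto \vw'$ and $\mvarM \in \heapF(\vw')$, take the reachable factoring of $\sigma$ w.r.t. $\mvarM$ (\cref{lem:reach-loc-dec}, with $\tau = 1_{\vw'}$), giving $\vw_\reach$. We set
\[ \heapSize_{\vw'}(\sigma,\mvarM) \eqdef \eqclass{\vw' \xsto{1} \vw'}\pair{\card{\vw_\reach}}{\mvarM}. \]
This is well defined because $\mvarM$ is finite.

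Next we check naturality in $\instC$. For $\me : \vw' \to \vw''$, $\heapF(\me)(\mvarM)$ only adds cells at locations outside $\img{\tinstU\me}$. The values of these new cells are never reached from $\tinstU\me\comp\sigma(\vl)$, because every existing cell's contents are merely renamed by $\tinstU\me$. Hence the reachable world is carried bijectively by $\tinstU\me$ and the count is unchanged. On the $\hideMon$ side, the action on $\eqclass{1}\pair{\vn}{\mvarM}$ is $\eqclass{1}\pair{\vn}{\heapF(\me)\mvarM}$. So the naturality square commutes. This is the main bookkeeping step: it requires unfolding the functorial action of $\hideMon(\Delta_\typeNat\odot\heapF)$ on $\me$ and checking that reachability is preserved along morphisms of $\instC$. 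I expect this to be the most delicate point, but it is routine.

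Finally we show $\heapSize$ is not in the image. Any element of the colimit at $(\vl:\sortNat)$ comes from some $\psi \in \monFLS(\Delta_{\{0..\vn\}})(\vl:\sortNat)$. Its image $\psi'$ in $\monFLS(\Delta_\typeNat)$ returns only values in $\{0,\hdots,\vn\}$, since the map postcomposes with $\hideMon(\incl\odot\heapF)$, which does not change which number is returned. The returned value is well defined up to the quotient $\sim$ because $\Delta$ is constant. Now consider the heap on $\vw' = (\vl_0,\hdots,\vl_{\vn+1} : \sortNat)$ with $\vl_\vi \mapsto \InjR{\vl_{\vi+1}}$ and $\vl_{\vn+1} \mapsto \InjL\Unit$, with $\sigma(\vl) = \vl_0$. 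There $\heapSize$ returns $\vn+2 > \vn$, so $\heapSize \neq \psi'$. Therefore the comparison map is not surjective at $(\vl:\sortNat)$, and $\monFLS$ does not preserve this filtered colimit.
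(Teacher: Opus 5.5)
Your proof is correct and takes essentially the same route as the paper: the same computation $\heapSize$, the same chain $\Delta_{\downset\vn}$ with colimit $\Delta_\typeNat$, and non-surjectivity of the comparison map at the world $(\vl : \sortNat)$. You also make explicit what the paper only asserts, namely naturality of $\heapSize$ in $\instC$, invariance of the returned number under the quotient in $\hideMon$, and a concrete chain heap forcing a value above $\vn$; in addition, your reachability-based definition avoids the off-by-one that the paper's $\go$ has when the traversal cycles back to the starting location.
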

\begin{proof}
Recall the local state function $\heapSize$ from \cref{sec:overview-non-finitariness}, and for simplicity, assume we have only one sort $\sortNat$.
It is straightforward to extend $\heapSize$ to any interpretation function defined over the full ground types of \cref{sec:pl-lamref}.

$\heapSize$ takes a location $\vl : \sortNat$, traverses the heap starting from $\vl$ and returns the number of distinct locations it has found.
Let $\Delta_\typeNat : \worldC \to \setC$ be the constant functor $\Delta_\typeNat\vw = \typeNat$ in
\[\begin{array}{ll}
    &\heapSize : \End{\vl : \sortNat \injto \vw \in \sliceC{\vl : \sortNat}\vU} \heapF\vw \to \hideMon(\Delta_\typeNat \times \heapF)(\vw) \\
    &\heapSize((\vl : \sortNat) \xsto\rho \vw)(\vh : \heapF\vw) = \eqclass{\vw \leq \vw}\pair\vn\vh 
    \\
        \mwhere & 
    \\
                &  \go = \Lam{\pair{\vl \in \vw}{\vS \subseteq \vw}} 
                                                  \CaseV{\proj_\vl\vh}
                                                            {\InjL\Unit}{1}
                                                            {\InjR\vb}{\begin{cases}1 & \mif \vb \not\in \vS \\
                                                                                    1 + \go(\vb, \vS \setDiff \{\vb\})  & \motherwise\end{cases}} 
    \\
                &  \vn = \go\pair{\rho(\vl)}\vw
\end{array}\]
Notice that $\mathit{go}$ is well-founded since $\vS$ strictly reduces in size at every recursive call.
It also satisfies the coherence condition of the End as it only traverses the reachable part of the heap w.r.t. public location $\vl$. 

To show that $\heapSize$ is a counterexample to $\monFLS$ being finitary we need a filtered colimit that is not preserved by $\monFLS$.
Consider the (filtered) posetal category $\omega$ of natural numbers with the usual ordering, and let $\Delta_\dash : \Colimit{(\omega \to \funC\worldC\setC)}$ be the functor that maps each $\vn$ to its down-set, $\Delta_\vn = \downset\vn$.
$\monFLS$ does not preserve the (filtered) $\Colimit{\Delta_\dash} : \worldC \to \setC$ because the mapping given by the appropriate inclusion
\[ \disunion{\vn \in \typeNat}\monFLS(\Delta_{\downset\vn})(\vw) \to \monFLS(\Delta_\typeNat)(\vw) \]
is not surjective: there is no $\vn$ that defines $\heapSize$
\end{proof}

\end{document}
\endinput
